\setlist[enumerate,1]{label={(\alph*)}}
\newtheorem{theorem}{Theorem}[section]
\newtheorem{definition}[theorem]{Definition}
\newtheorem{lemma}[theorem]{Lemma}
\newtheorem{remark}[theorem]{Remark}
\newtheorem{corollary}[theorem]{Corollary}
\def\orcid#1{\kern .08em\href{https://orcid.org/#1}{\includegraphics[keepaspectratio,width=0.7em]{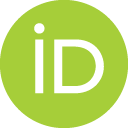}}} 
\newcommand{\RV}[1]{\bm{\mathcal{#1}}}
\newcommand{\expect}[1]{\mathbb{E}\left[ #1 \right]}
\DeclareMathOperator{\Var}{Var}
\newcommand{\measure}[3]{#1 \sim \left( #2, #3\right)}
\newcommand{\prob}[1]{\mathbb{P}\left[ #1 \right]}
\newcommand{\probM}[4]{\ensuremath{ {\mathop{\mathbb{P}}_{\measure{#1}{#2}{#3}}}{\left[#4\right]}}}
\newcommand{\tanft}[1]{\ensuremath{\tan \frac{#1}{2}}}
\newcommand{\tantft}[1]{\ensuremath{\tan^2 \frac{#1}{2}}}
\newcommand{\tannft}[2][1]{\ensuremath{\tan^{#1} \frac{#2}{2}}}
\newcommand{\truncate}[2]{\left\llbracket #1 \right\rrbracket_{#2}}
\newcommand{\normtwo}[1]{\left\lVert #1 \right\rVert_2}
\newcommand{\ceil}[1]{\left\lceil #1 \right\rceil}
\begin{document}
\preprint{APS/123-QED}

\title{More-efficient Quantum Multivariate Mean Value Estimator from Generalized Grover Operator}
\author{Letian Tang\orcid{0000-0001-8882-9187}}
\email{letiant@andrew.cmu.edu}
\affiliation{
Carnegie Mellon University, Pittsburgh, PA 15213, United States
}


\date{\today}

\begin{abstract}
In this work, we present an efficient algorithm for multivariate mean value estimation. Our algorithm outperforms previous work by polylog factors and nearly saturates the known lower bound. More formally, given a random vector $\vec{\bm{\mathcal{X}}}$ of dimension $d$, we find an algorithm that uses $O\left(n \log \frac{d}{\delta}\right)$ samples to find a mean estimate that $\vec{\tilde{\mu}}$ that differs from the true mean $\vec{\mu}$ by $\frac{\sqrt{\tr \Sigma}}{n}$ in $\ell^\infty$ norm and hence $\frac{\sqrt{d \tr \Sigma}}{n}$ in $\ell^2$ norm, where $\Sigma$ is the covariance matrix of the components of the random vector. We also presented another algorithm that uses smaller memory but costs an extra $d^\frac{1}{4}$ in complexity. Consider the Grover operator, the unitary operator used in Grover's algorithm. It contains an oracle that uses a $\pm 1$ phase for each candidate for the search space. Previous work has demonstrated that when we substitute the oracle in Grover operator with generic phases, it ended up being a good mean value estimator in some mathematical notion. We used this idea to build our algorithm. Our result remains not exactly optimal due to a $\log \frac{d}{\delta}$ term in our complexity, as opposed to something nicer such as $\log \frac{1}{\delta}$; This comes from the phase estimation primitive in our algorithm. So far, this primitive is the only major known method to tackle the problem, and moving beyond this idea seems hard. Our results demonstrates that the methodology with generalized Grover operator can be used develop the optimal algorithm without polylog overhead for different tasks relating to mean value estimation.
\end{abstract}

\maketitle

\section{Introduction}

The problem of Mean Value Estimation has been one of the most classic problems in statistics. For example, it has been widely used in Monte Carlo simulations which is useful for topics such as statistical physics \cite{Landau_2021}. With accesses to a quantum computer, it has been shown that, in terms of accuracy, one can estimate the mean quadratically more accurate than with a classical computing device \cite{Kothari_2022}. A natural generalization comes when one wish to instead estimate the mean of a random vector as opposed to a single variable. In the classical setting, there have been a large volume work for multivariate mean estimators \cite{Lugosi_2019, Lugosi_2020}. In the quantum setting, this problem have mainly be tackled by Ref.~\cite{Cornelissen_2022}, where they find the following result:
\begin{theorem}[Previous best Multivariate Estimator, Rephrased]
\label{thm: previous_best}
Given quantum experiments for a $d$-dimensional multivariate random variable $\vec{\RV{X}}$ with unknown covariance matrix $\Sigma$, there is an efficient algorithm that outputs a mean estimate $\vec{\tilde{\mu}}$ such that:
\begin{equation}
    \prob{\norm{\vec{\tilde{\mu}} - \expect{\vec{\RV{X}}}}_\infty \leqslant \frac{\sqrt{\tr \Sigma}}{n}} \geqslant 1 - \delta
\end{equation}
which implies 
\begin{equation}
    \prob{\norm{\vec{\tilde{\mu}} - \expect{\vec{\RV{X}}}}_2 \leqslant \frac{\sqrt{d \tr \Sigma}}{n}} \geqslant 1 - \delta
\end{equation}
The algorithm takes $O\left(n\log \frac{d}{\delta}\operatorname{polylog}(n,d)\right)$ accesses to the quantum experiment.
\end{theorem}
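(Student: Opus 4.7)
The plan is to reduce the multivariate problem to the known optimal univariate quantum mean estimator, and then exploit coherent access to the sampling oracle so as to avoid a $\sqrt d$ overhead. As a starting point I would invoke the univariate result of Kothari and O'Donnell: for a scalar $\RV{Y}$ with variance $\sigma^2$, a quantum algorithm using $O(n)$ accesses to its sampling oracle outputs $\tilde\mu$ with $|\tilde\mu - \expect{\RV{Y}}| \leqslant \sigma/n$ with probability at least $2/3$, and taking the median of $O(\log(1/\delta'))$ independent runs boosts this to confidence $1 - \delta'$.

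A naive per-coordinate approach applies this estimator to each component $\RV{X}_i$ of $\vec{\RV{X}}$ with target accuracy $\sqrt{\tr\Sigma}/n$. Because $\Var(\RV{X}_i) = \Sigma_{ii} \leqslant \tr\Sigma$, estimating coordinate $i$ to constant confidence costs only $n_i = n\sqrt{\Sigma_{ii}/\tr\Sigma}$ queries, so by Cauchy--Schwarz the total is $O(n\sqrt d)$. After boosting each coordinate to success probability $1 - \delta/d$ and union bounding, this route yields $O(n\sqrt d\log(d/\delta))$ accesses, still carrying an undesired $\sqrt d$ factor.

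Removing this $\sqrt d$ is the crux, and I would do it via the generalized-Grover-gate picture emphasized later in this paper. Concretely, I would prepare a single coherent state $\ket{\psi} = \sum_x \sqrt{p_x}\ket{x}\ket{\vec{\RV{X}}(x)}$ and, for a probe direction $\vec v$, build the phase oracle $R_{\vec v} = \sum_x e^{i\alpha \langle\vec v,\vec{\RV{X}}(x)\rangle}\ket{x}\bra{x}$. The generalized Grover operator $U_\psi^\dagger(I - 2\ket{0}\bra{0})U_\psi R_{\vec v}$ has an eigenphase near one that to leading order in $\alpha$ equals $\alpha\langle \vec v,\vec\mu\rangle$ with correction $O(\alpha^2 \vec v^\top\Sigma\vec v)$; phase estimation at precision $\Theta(1/n)$ then extracts $\langle\vec v,\vec\mu\rangle$ to accuracy $\sqrt{\vec v^\top\Sigma\vec v}/n$. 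Taking $\vec v$ to be each standard basis vector recovers $\mu_i$ to accuracy $\sqrt{\Sigma_{ii}}/n \leqslant \sqrt{\tr\Sigma}/n$. The main technical obstacle is showing that the $d$ phase-estimation subproblems can be carried out without $d$ simply multiplying the cost---for instance by amortizing the shared coherent oracle, or by running a single phase estimation whose eigenphase vector simultaneously encodes all coordinates---so that the overall complexity stays $O(n\operatorname{polylog}(n,d))$ at constant confidence. Combined with a union bound that boosts each coordinate to confidence $1 - \delta/d$ via median-of-means, this supplies the $\log(d/\delta)$ factor and yields the stated bound.
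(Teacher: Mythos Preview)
This theorem is not proved in the paper; it is quoted from Ref.~\cite{Cornelissen_2022} as the prior state of the art that the present work improves upon. So there is no ``paper's own proof'' to compare against here.

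That said, your sketch has a real gap at precisely the step you flag as ``the main technical obstacle.'' You correctly observe that running $d$ separate phase estimations in the coordinate directions costs a factor of $d$ (or $\sqrt d$ with the Cauchy--Schwarz budgeting), and you gesture at two fixes: ``amortizing the shared coherent oracle'' and ``a single phase estimation whose eigenphase vector simultaneously encodes all coordinates.'' The first is not a mechanism---sharing the state-preparation oracle across $d$ sequential phase estimations does not reduce the number of oracle calls. The second is in fact the right idea, but it is a specific algorithm (multidimensional phase estimation, i.e., Jordan's gradient algorithm, see Theorem~\ref{thm: phase_estimation} and Algorithm~\ref{alg: phase_estimation} in this paper), not something one can invoke without saying what it is. The point is that one prepares a uniform superposition over a hypercubic lattice $G_N\subset\mathbb R^d$, applies a single controlled unitary that imprints the phase $e^{2\pi i N\langle \vec u,\vec\mu\rangle}$ on each lattice point $\vec u$, and then a $d$-fold inverse QFT recovers all coordinates of $\vec\mu$ at once with $O(N)$ oracle calls rather than $O(dN)$. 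Cornelissen et al.\ build a ``Directional Mean Oracle'' that approximates this phase unitary (incurring polylog overheads from truncation and amplitude-estimation subroutines), and then feed it into multidimensional phase estimation. Without naming and using this primitive, your argument stops exactly where the nontrivial content begins.
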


The authors further proved that the algorithm is near-optimal in the sense that there are always some multivariate random variable such that we will have to take $\Omega(n)$ time to achieve accuracy $\norm{\vec{\tilde{\mu}} - \expect{\vec{\RV{X}}}}_2 \leqslant \frac{\sqrt{d \tr \Sigma}}{n}$ with success probability at least $\frac{2}{3}$. In addition to poly-logarithmic improvements, it also leaves open the possibility that alternative metrics or additional structural assumptions on $\vec{\RV{X}}$ could yield further improvements in specific settings. Another related work in Ref.~\cite{Huggins_2022} utilizes the same ideas but discusses the difficult situation where we are only able to simulate a set of observables (such as a frustrated Hamiltonian) with no knowledge of the eigenstates. In that case we are only able to achieve mean (expectation) estimation where the eigenvalues are bounded. 

Recently, in Ref.~\cite{Kothari_2022}, cleverly utilizing a generalization of the Grover operator, the authors were able to provide a univariate estimator that gets rid of all log factors in algorithm design. This paves the way for improvements in the multivariate case. Employing this idea, in this paper, we are able to find:
\begin{theorem}[Main Result for Meticulous Estimator]
\label{thm: mer_main}
Given quantum experiments for a $d$-dimensional multivariate random variable $\vec{\RV{X}}$ with unknown covariance matrix $\Sigma$, there is an efficient algorithm that outputs a mean estimate $\vec{\tilde{\mu}}$ such that:
\begin{equation}
    \prob{\norm{\vec{\tilde{\mu}} - \expect{\vec{\RV{X}}}}_\infty \leqslant \frac{\sqrt{\tr \Sigma}}{n}} \geqslant 1 - \delta
\end{equation}
which implies 
\begin{equation}
    \prob{\norm{\vec{\tilde{\mu}} - \expect{\vec{\RV{X}}}}_2 \leqslant \frac{\sqrt{d \tr \Sigma}}{n}} \geqslant 1 - \delta
\end{equation}
The algorithm takes
\begin{itemize}
    \item $O\left(n \log \frac{d}{\delta}\right)$ in terms accesses to the quantum experiment
    \item $O\left(\log n \log \log n\right)$ in terms of quantum registers needed to hold quantum experiments.
\end{itemize}
\end{theorem}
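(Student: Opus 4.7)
The plan is to lift the univariate generalized-Grover-gate estimator of Ref.~\cite{Kothari_2022} to the multivariate setting while amortizing the oracle cost across all $d$ coordinates. At a high level, for each coordinate $i\in\{1,\dots,d\}$ the component $\RV{X}_i$ is itself a univariate random variable, and the generalized Grover gate built from a phase oracle for $\RV{X}_i$ produces a unitary whose dominant eigenphase is, to leading order, proportional to $\expect{\RV{X}_i}$; phase estimation on this gate would yield a univariate estimate with error on the scale of $\sqrt{\Var(\RV{X}_i)}/n$. The main challenge is to perform $d$ such phase estimations in a way whose total query cost is $O\!\left(n\log\tfrac{d}{\delta}\right)$ rather than the $O\!\left(nd\log\tfrac{1}{\delta}\right)$ a naive per-coordinate estimator would incur, and, separately, to argue the coordinate-wise errors combine into an $\ell^\infty$ (and hence $\ell^2$) bound controlled by $\sqrt{\tr\Sigma}$.

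First I would pass to a bounded surrogate by coordinate-wise truncation $\truncate{\RV{X}_i}{M}$ at threshold $M=\Theta(\sqrt{\tr\Sigma})$, following the truncation arguments in Refs.~\cite{Cornelissen_2022,Kothari_2022}; a Chebyshev/Markov estimate shows the bias introduced per coordinate is $o(\sqrt{\tr\Sigma}/n)$ and therefore absorbable into the target accuracy. After truncation every coordinate has bounded range and controlled second moment, so the univariate Kothari primitive applied to coordinate $i$ with parameter $n$ returns an estimate of $\mu_i$ accurate to $\sqrt{\tr\Sigma}/n$ with constant success probability using $O(n)$ queries to a phase oracle implementing $e^{i\alpha\truncate{\RV{X}_i(\omega)}{M}}$ on the quantum experiment $|\psi\rangle=\sum_\omega\sqrt{p_\omega}|\omega\rangle$. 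The key engineering step is to notice that the same shared input state $|\psi\rangle$ and the same truncated classical value can feed a family of generalized Grover gates $G_1,\dots,G_d$, one per coordinate; in particular each single invocation of the quantum experiment can be reused, so the $n$ queries spent estimating one coordinate do not have to be re-spent for the others.

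With this shared-oracle construction I would then run phase estimation on each $G_i$ at target precision $\sqrt{\tr\Sigma}/n$ and at failure probability $\delta/d$; a union bound over coordinates yields total failure $\delta$ and an $\ell^\infty$ error bounded by $\sqrt{\tr\Sigma}/n$, from which the $\ell^2$ bound follows by $\norm{\vec{v}}_2\le\sqrt{d}\,\norm{\vec{v}}_\infty$. Boosting each phase estimation instance from constant success to $1-\delta/d$ contributes the $O\!\left(\log\tfrac{d}{\delta}\right)$ multiplicative overhead, producing the claimed $O\!\left(n\log\tfrac{d}{\delta}\right)$ total query count. For the memory figure, the phase register for each estimation has width $O(\log n)$, and the additional $\log\log n$ factor would come from representing the fixed-point phase increments $\alpha\truncate{\RV{X}_i}{M}$ to the precision required so that the eigenphase is not corrupted at the $1/n$ scale, together with the scratch registers used by the Kothari primitive.

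The hard part, and the step on which I would spend the most effort, is showing that the coordinate-wise eigenphase extracted from the generalized Grover gate for the truncated $\RV{X}_i$ really does track $\alpha\mu_i$ with a deviation governed by $\sqrt{\Var(\RV{X}_i)}$ rather than by any quantity that grows with $d$; otherwise the union bound would worsen the final error from $\sqrt{\tr\Sigma}/n$ to $\sqrt{d\tr\Sigma}/n$ in $\ell^\infty$, destroying the advertised improvement. Concretely, this will come down to repeating the eigenphase analysis of Ref.~\cite{Kothari_2022} while tracking the second-moment dependence carefully and verifying that $\sum_i\Var(\RV{X}_i)=\tr\Sigma$ produces exactly the right aggregate bound when the per-coordinate errors are recombined. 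Everything else—the truncation bias bound, the union bound, the $\ell^\infty\to\ell^2$ conversion, and the register-width accounting—is bookkeeping that parallels \cite{Cornelissen_2022,Kothari_2022} closely.
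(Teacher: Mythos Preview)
Your proposal has a genuine gap at the ``shared-oracle'' step, and this is exactly the step on which the whole query count rests. You assert that ``each single invocation of the quantum experiment can be reused'' so that the $n$ queries spent on coordinate $i$ need not be re-spent on coordinate $j$, but you give no mechanism for this. Phase estimation on $G_i$ applies $G_i^N$, and each application of $G_i$ is an oracle call that phases by $\RV{X}_i$ specifically; running $d$ separate phase estimations on $G_1,\dots,G_d$ sequentially costs $O(nd)$ queries, full stop. Putting the coordinate index in superposition does not help either: you would collapse to a random coordinate upon measurement and still need $\Theta(d)$ repetitions to cover all of them. Without a concrete construction that extracts all $d$ eigenphases from $O(n)$ total oracle calls, the claimed $O(n\log\tfrac{d}{\delta})$ complexity does not follow.

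The paper's route is quite different from per-coordinate estimation. It works with the directional random variable $\langle \vec u,\vec{\RV{X}}\rangle$ for $\vec u$ ranging over a hypercubic lattice $G_N$, and uses \emph{multidimensional} phase estimation (Jordan's algorithm) to recover all coordinates of $\expect{\vec{\RV{X}}}$ simultaneously from a single approximate phase unitary. The crucial new ingredient for the meticulous estimator is that the entire univariate Grover-gate estimator is \emph{quantized} into a coherent subroutine $V$ (Theorem~\ref{thm: uni_approx_phase}) satisfying $\norm{V\ket{0}-e^{iN\langle\vec u,\expect{\vec{\RV{X}}}\rangle}\ket{0}}\le\xi$, with $\vec u$ held in a control register; this $V$ then plays the role of $U^N$ inside Algorithm~\ref{alg: phase_estimation}. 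A sub-exponential tail bound (Lemma~\ref{lem: variance_prob_bound}) controls $\Var\langle\vec u,\vec{\RV{X}}\rangle$ in terms of $\tr\Sigma$ over random $\vec u\in G_N$, which is what ties the per-direction variance back to $\tr\Sigma$. The $O(\log n\log\log n)$ register count is not a phase-register width argument at all: it is the number of phase-estimation instances inside the quantized univariate estimator (Theorem~\ref{thm: uni_complexity}), all of whose workspaces must be kept coherent since no intermediate measurements are allowed once the algorithm runs under the $\vec u$-superposition.
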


This removes almost all log factors but the $\log \frac{d}{\delta}$ factor instead of something nicer such as $\log \frac{1}{\delta}$. Removing this factor is very difficult based on our current knowledge, as will be discussed in Sec.~\ref{sec: discussion}.

In practice, our quantum computer might be extremely limited in memory, so we provide a simpler, memory-efficient algorithm, accepting an extra $d^\frac{1}{4}$ factor in time complexity:
\begin{theorem}[Main Result for Simple Estimator]
\label{thm: sim_main}
Given quantum experiments for a $d$-dimensional multivariate random variable $\vec{\RV{X}}$ with unknown covariance matrix $\Sigma$ , there is an efficient algorithm that outputs a mean estimate $\vec{\tilde{\mu}}$ such that:
\begin{equation}
    \prob{\norm{\vec{\tilde{\mu}} - \expect{\vec{\RV{X}}}}_\infty \leqslant \frac{\sqrt{\tr \Sigma}}{n}} \geqslant 1 - \delta
\end{equation}
which implies 
\begin{equation}
    \prob{\norm{\vec{\tilde{\mu}} - \expect{\vec{\RV{X}}}}_2 \leqslant \frac{\sqrt{d \tr \Sigma}}{n}} \geqslant 1 - \delta
\end{equation}
The algorithm takes
\begin{itemize}
    \item $O\left(n d^\frac{1}{4} \log \frac{d}{\delta}\right)$ in terms accesses to the quantum experiment
    \item $O\left(1\right)$ in terms of quantum registers needed to hold quantum experiments.
\end{itemize}
\end{theorem}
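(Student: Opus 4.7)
The plan is to port Kothari's generalized-Grover scalar mean estimator from Ref.~\cite{Kothari_2022} to the vector setting by running it coordinate-by-coordinate, while exploiting the identity $\sum_i \sigma_i^2 = \tr\Sigma$ to confine the bulk of the query budget to a small subset of heavy coordinates. From the given quantum experiment for $\vec{\RV X}$ one obtains a scalar experiment for each $X_i$ by a computational-basis relabeling at no extra register cost; Kothari's scalar primitive itself uses only a single sample register plus a constant ancilla, so the $O(1)$-register budget demanded by the theorem is preserved throughout.

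Setting the per-coordinate target accuracy to $\epsilon := \sqrt{\tr\Sigma}/n$ and the per-coordinate failure probability to $\delta/d$, a union bound over $d$ scalar invocations delivers the desired $\ell^\infty$ guarantee, and thereby the $\ell^2$ guarantee via $\|\cdot\|_2 \leq \sqrt{d}\,\|\cdot\|_\infty$. Kothari applied to coordinate $i$ costs $O((\sigma_i/\epsilon)\log(d/\delta))$ queries, and Cauchy--Schwarz $\sum_i \sigma_i \leq \sqrt{d\tr\Sigma}$ sums these per-coordinate costs to $O(n\sqrt{d}\,\log(d/\delta))$ — a factor $\sqrt{d}$ short of the claim.

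The $d^{1/4}$ improvement is then pulled out via a two-pass structure that leverages the variance budget. Because $\sum_i \sigma_i^2 = \tr\Sigma$, at most $\sqrt{d}$ coordinates can be ``heavy'' in the sense $\sigma_i^2 > \tr\Sigma/\sqrt{d}$, and Cauchy--Schwarz restricted to this heavy set gives $\sum_{i \in \mathrm{heavy}} \sigma_i \leq d^{1/4}\sqrt{\tr\Sigma}$. A fine pass would refine only these heavy coordinates up to the full accuracy $\epsilon$, at a cost of $O(n d^{1/4}\log(d/\delta))$. A preceding coarse pass runs Kothari on every coordinate with the relaxed target $\epsilon \cdot d^{1/4}$ (costing $O(n d^{1/4}\log(d/\delta))$ by the same Cauchy--Schwarz argument with $\epsilon$ replaced by $\epsilon d^{1/4}$) and simultaneously produces cheap quantum variance estimates of each $\sigma_i$ that flag the heavy coordinates for the fine pass.

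The main obstacle will be showing that the two passes stitch into a valid $\ell^\infty$ guarantee: the coarse pass only gives every coordinate error $\epsilon d^{1/4}$, not $\epsilon$, so one must argue that the adaptive scheduling driven by the on-the-fly variance estimates overshoots on the light coordinates, driving them to $\epsilon$ at no extra cost beyond the coarse-pass budget. Engineering this adaptive schedule inside the $O(1)$-register constraint — while bootstrapping the a priori unknown $\sigma_i$'s and preventing any polylog-in-$d$ factor from leaking out of the repeated amplitude amplification and variance re-estimation subroutines — is the core technical difficulty, and it is what separates the simple estimator of this theorem from the meticulous estimator of Theorem~\ref{thm: mer_main} that spends additional quantum registers to sidestep the two-pass machinery entirely.
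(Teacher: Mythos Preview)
Your coordinate-by-coordinate approach has a genuine gap: it is stuck at $\Theta(n\sqrt{d})$ queries in the worst case and cannot reach $O(nd^{1/4})$. Take the instance where all $d$ coordinates have equal variance $\sigma_i^2 = \tr\Sigma/d$. Then every coordinate is ``light'' under your threshold $\tr\Sigma/\sqrt{d}$, so the fine pass is empty. Your coarse pass with target $\epsilon d^{1/4}$ spends $m = \sigma_i/(\epsilon d^{1/4}) = n/d^{3/4}$ queries per coordinate, and the accuracy actually delivered is $\sigma_i/m = \epsilon d^{1/4}$, not $\epsilon$ --- there is no overshoot. To drive each of these $d$ coordinates to accuracy $\epsilon$ you need $m = \sigma_i/\epsilon = n/\sqrt{d}$, and summing over $d$ coordinates gives $n\sqrt{d}$. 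More generally, any per-coordinate scheme pays $\sum_i \max(1,\sigma_i/\epsilon)$, and minimizing this over all variance profiles with $\sum_i\sigma_i^2=\tr\Sigma$ gives $\Theta(n\sqrt{d})$; the heavy/light split does not help because the uniform profile sits entirely on the light side.

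The paper's route is structurally different and does not decouple the coordinates. It builds the Grover gate for the \emph{projected} scalar variable $\langle \vec u, \vec{\RV X}\rangle$, where $\vec u$ is stored in a control register ranging over the hypercubic lattice $G_N$, and runs a single multidimensional phase estimation that extracts all $d$ coordinates of $\expect{\vec{\RV X}}$ simultaneously from one superposition over $\vec u$. Each query to the quantum experiment thus informs every coordinate at once. The $d^{1/4}$ factor arises for an entirely different reason: the Grover-gate analysis (Theorem~\ref{thm: alpha_RV_diff}) requires the projected variable's second moment $\expect{\langle\vec u,\vec{\RV X}\rangle^2}$ to be a small constant, and a sub-exponential concentration bound (Lemma~\ref{lem: variance_prob_bound}) together with a Hoeffding bound on $\langle\vec u,\expect{\vec{\RV X}}\rangle$ force a rescaling of $\vec{\RV X}$ by a factor $\Theta(d^{1/4})$, which is what leaks into the query count.
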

This algorithm still have the potential to outperform the existing best estimator in the situation where $d$ is not too big and we would like to know the accuracy up to a very high precision. 

After this introduction, this paper will be split into three sections. First, in Sec.~\ref{sec: preliminaries} we discuss all the necessary definitions and primitives needed for this paper; In Sec~\ref{sec: key_subroutine} we re-analyze the Grover operator and proved a different property than discussed in Ref.~\cite{Kothari_2022}. We then refine these properties and eventually build a quite straight-forward univariate estimator. In Sec.~\ref{sec: multi_estimator} we expand our ideas to the multivariate case and construct two different estimators for the case where an upper bound of $\tr \Sigma$, where $\Sigma$ is the covariance matrix, is known. Lastly, we upgrade to handle an unknown $\tr \Sigma$ and reach our conclusions. We discuss further details and future prospects in Sec.~\ref{sec: discussion}.

\section{Preliminaries}
\label{sec: preliminaries}

This section outlines the basic setups for this paper, including notations and useful primitives. 

\subsection{Notations and Definitions}

First we define some basic concepts and notations for mathematical rigor. 
\begin{definition}[Finite Probability Space]
    \label{def:probaiblity_space}
    A finite probability space is a pair $(\Omega, p)$ where $\Omega$ is a finite set of possible outcomes and $p: \Omega \to [0,1]$ is a probability mass function satisfying $\sum_{k \in \Omega} p(k) = 1$, we can abbreviate $p(k)$ as $p_k$. \footnote{Most authors would probably use $\omega$ to indicate an element from probability space $\Omega$. Here we use $k$, an index variable. The exact mathematical object as elements in $\Omega$ does not matter, so in practice we could just label them with some index, that is, $\Omega = \{\omega_{k}\}_{i \in \left[\left|\Omega\right|\right]}$. Then we might just represent these elements $\omega_k$ with the index $k$. More rigorously, while $k \in \Omega$ is not necessarily an integer index, we can effectively think of it as an index. }
\end{definition} 

\begin{definition}[Hilbert Space]
    \label{def:hilbert_space}
    Given some finite set $X$, the Hilbert space on $X$ is noted as $\mathcal{H}_X$, which has $\{\ket{x}\}_{x \in X}$ as an orthornormal basis. 
\end{definition}
\begin{remark}[Labels in Kets]
    Within this paper, whenever we designate a variable $x$ from a set $X$, the vectors labeled with this variable $\{\ket{x}\}_{x \in X}$ automatically denotes the basis for $\mathcal{H}_X$ as clarified in Definition~\ref{def:hilbert_space}. In general, we might like to use a more explicit notation such as $\ket{x}_X$ to avoid ambiguity, but within the context of this paper, dropping the $X$ subscript suffice. 
\end{remark}

\begin{definition}[Synthesizer]
    Let $(\Omega, p)$ be a probability distribution. A synthesizer is a unitary operator $\mathcal{P}$ acting on $\mathcal{H}_\Omega \otimes \mathcal{H}_{\mathrm{anc}}$:
    \begin{equation}
\mathcal{P} \ket{0} = \sum_{k \in \Omega} \sqrt{p_k}\ket{k}
\end{equation}
    where $\ket{\mathrm{aux_k}} \in \mathcal{H}_{\mathrm{anc}}$ are normalized vectors, and $\ket{0}$ denotes some state that is easily achievable.
\end{definition}

\begin{remark}[Potential Auxiliaries]
    \label{remark: ancilla}
    In general one might write 
    \begin{equation}
\mathcal{P} \ket{0} = \sum_{k \in \Omega} \sqrt{p_k}\ket{k}\ket{\mathrm{aux}_k}
\end{equation}
    where $\mathcal{H}_{\mathrm{anc}}$ is the Hilbert space of some ancilla qubits that may encode extra information. For the scope of this paper we will ignore ancillas for simplicity and clarity. (One can always recover ancillas by either re-inserting them in appropriate locations throughout derivation process, or by redefining the states $\ket{k}$ to incorporate potential ancillas.)
\end{remark}

Here are definitions relating to random variables.
\begin{definition}[Random Variable]
    A univariate (real) variable $\bm{\mathcal{X}}$ on a finite probibility space $(\Omega, p)$ is a mapping $\Omega \to \mathbb{R}$. For each $k \in \Omega$, we can abbreviate $\bm{\mathcal{X}}(k)$ as $\mathcal{X}_k$, with bold font and nonbold font accordingly. 

    Similarly, we can define a univariate complex random variable $\bm{\mathcal{Y}}$ as mapping $\Omega \to \mathbb{C}$ and introduce abbreviations $\mathcal{Y}_k$ acoordingly. 
\end{definition}

\begin{remark}[Notation]
    We will use bold font (and usually calligraphic) to represent random variables. 
\end{remark}

\begin{remark}[Omitting $\Omega$]
    In our notations we often omit $\Omega$ because in the scope of this paper we always fix a single probability space. 
\end{remark}

There are some special univariate random variables that will be useful in the future.
\begin{definition}(Identity Random variable)
    In particular, $\bm{1}$ is the random variable $\Omega \to \{1\}$.
\end{definition}

\begin{definition}(Indicator Random variable)
    Given some event $A$, $\bm{1}_A$ is the indicator random variable that is 1 when $A$ is true and 0 otherwise, that is, for all $k \in \Omega$
    \begin{equation}
        (\bm{1}_A)_k = \begin{cases}
            1 & \text{if } A \\
            0 & \text{if } \neg A \\
        \end{cases}
    \end{equation}
\end{definition}

It is convenient to label quantum states with random variables which we will define as follows:
\begin{definition}[Associated Quantum State]
    Given a complex univariate random variable $\RV{X}$ in finite probaility space $(\Omega, p)$, its associated quantum state, denoted $\ket{\bm{\mathcal{X}}}$ is a state in $\mathcal{H}_\Omega$ with
    \begin{equation}
\ket{\bm{\mathcal{X}}} = \sum_{k \in \Omega} \sqrt{p_k} \mathcal{X}_k \ket{k} 
\end{equation}
\end{definition}

\begin{remark}
    For synthesizer $\mathcal{P}$ we have
    \begin{equation}
\mathcal{P}\ket{0} = \ket{\bm{1}}
\end{equation}
\end{remark}

\begin{remark}
We have $\bra{k}\ket{\RV{X}} = \sqrt{p_k}\RV{X}_k$ for $k \in \Omega$.
\end{remark}

Our ultimate goal is to investigate multivariate random variables defined as:
\begin{definition}[Multivariate Random Variable]
    A multivariate (real) random variable $\vec{\bm{\mathcal{X}}}$ of dimension $d$ is a $d$-dimension tuple of (real) random variables $\vec{\bm{\mathcal{X}}} = (\bm{\mathcal{X}}^1, \bm{\mathcal{X}}^2, \cdots, \bm{\mathcal{X}}^d)$, each from probability space $(\Omega, p)$. As suggested, for $\alpha \in [d]$ we would denote the $\alpha$-th element with superscript $\bm{\mathcal{X}}^\alpha$, and an instance for $k \in \Omega$ of that element $\mathcal{X}^\alpha_k$.
    
    Similarly, we can define a multivariate complex random variable and use notation accordingly. There is also the random variable $\vec{\bm{1}} = (\bm{1}, \bm{1}, \cdots, \bm{1})$.

    Lastly, for $k \in \Omega$ an instance of a mulvariate random variable is a vector $\vec{\mathcal{X}}_k = (\mathcal{X}_k^1, \mathcal{X}_k^2, \cdots, \mathcal{X}_k^d)$.
\end{definition}

Drawn from physicists' playbook, we will utilize both superscripts and subscripts for different indices.
\begin{remark}[Index]
    We will consistently use Greek letters as dimension indices in $[d]$ and put them as superscripts; conversely, we will use Latin letters as elements from $\Omega$ and put them as subscripts.
\end{remark}


Here we introduce notations on operations between random variables:
\begin{definition}[Operations on Random Variables]
    Whenever we have an elementary operation (addition, multiplication, exponential, trig function, etc) acting on one or two random variables, the outcome is the element-wise result of such operation on the random variables, with the same dimensions. 
\end{definition}

\begin{remark}[Example]
    As examples, we have:
    \begin{itemize}
        \item $\RV{C} = \RV{A} + \RV{B}$ will be a univariate random variable with $\mathcal{C}_k = \mathcal{A}_k + \mathcal{B}_k$ for all outcome $k \in \Omega$.
        \item $\vec{\RV{B}} = e^{\vec{\RV{A}}}$ is a univariate random variable with $\mathcal{B}_k^\alpha = e^{\mathcal{A}_k^\alpha}$, for $k \in \Omega$ and $\alpha \in [d]$.
        \item $\RV{B} = \arcsin \RV{A}$ is a random variable with $\RV{B}_k = \arcsin \RV{A}_k$ for $k \in \Omega$.
    \end{itemize}
\end{remark}

There are some miscellaneous operations and definitions that will be important in the future.
\begin{definition}[Truncation]
    \label{def: truncation}
    Consider a univariate random variable $\RV{X}$, we denote its truncation to threshold $K$ (where $K \geqslant 0$) $\truncate{\RV{X}}{K}$ to be a random variable such that for all $k \in \Omega$:
    \begin{equation}
        \left(\truncate{\RV{X}}{K}\right)_k = \begin{cases}
        -K & \mathcal{X}_k < -K \\
        \mathcal{X}_k & \left| \mathcal{X}_k \right| \leqslant K \\
        K & \mathcal{X} > K \\
        \end{cases} 
    \end{equation}
\end{definition}
We will also need to define truncation for multivariate random variable.
\begin{definition}[Truncation of Multivariate Random Variable]
    \label{def: truncation}
    Consider a multivariate random variable $\vec{\RV{X}}$, we denote its truncation to threshold $K$ (where $K \geqslant 0$) $\truncate{\RV{X}}{K}$ to be a random variable such that for all $k \in \Omega$:
    \begin{equation}
        \left(\truncate{\RV{X}}{K}\right)_k = \begin{cases}
        \vec{\mathcal{X}_k} & \normtwo{\vec{\mathcal{X}_k}} \leqslant K \\
        0 & \text{otherwise}
        \end{cases}
    \end{equation}
\end{definition}

\begin{definition}[Inner Product]
    Given vectors $\vec u, \vec v \in \mathbb{R}^d$, there inner product $\langle u, v \rangle$ is defined as $\langle u, v \rangle = \sum_{\alpha=1}^{d} u^\alpha v^\alpha$, where subscript is the index for dimension.

    Given vector $\vec u \in \mathbb{R}^d$ and read multivariate random variable $\vec{\RV{X}}$, their inner product is a univariate random variable $\langle \vec u, \vec{\RV{X}} \rangle = \sum_{\alpha = 1}^{d} u^\alpha \RV{X}^\alpha$.
\end{definition}

\begin{definition}[Vector Encoding]
    Given vector $\vec u \in S \subseteq \mathbb{R}^d$, where $S$ is some finite set that is made clear within the context, we can use a quantum register in Hilbert space $\mathcal{H}_S$ to encode the vector denoted as $\ket{\vec u}$.
\end{definition}

A random variable has expectations and variances which we will define here:
\begin{definition}[Expectation]
    Given a univariate random variable $\RV{X}$, its expectation $\expect{\RV{X}}$ is defined as $\expect{\RV{X}} = \sum_{k \in \Omega} p_k \mathcal{X}_i$.

    For a multivariate random variable with dimension $d$ $\vec{\RV{X}}$, its expectation $\mathbb{E}[\vec{\RV{X}}]$ is a vector with $\expect{\vec{\RV{X}}} ^ \alpha = \expect{\RV{X}^\alpha}$ for $\alpha \in [d]$.
\end{definition}

\begin{remark}
For univariate random variables $\RV{X}$, $\RV{Y}$, $\bra{\RV{X}}\ket{\RV{Y}} = \expect{\RV{X}^* \RV{Y}}$.
\end{remark}

\begin{definition}[Variance]
    For a univariate random variable $\RV{X}$, its variance is $\sigma^2 = \expect{\left(\RV{X} - \expect{\RV{X}}\right)^2}$. The variance can also be denoted $\Var \RV{X}$.

    For a multivariate random variable $\vec{\RV{Y}}$, its covariance matrix $\Sigma$ satisfies $\Sigma^{\alpha\beta} = \expect{\left(\RV{Y}^\alpha - \expect{\RV{Y}^\alpha}\right)\left(\RV{Y}^\beta - \expect{\RV{Y}^\beta}\right)}$
\end{definition}

There are also some basic concepts from quantum information that needs to be addressed and assigned with notations for convenience.
\begin{definition}[Measurement]
As in quantum mechanics, in some Hilbert space $\mathcal{H}$, given an observable (Hermitian operator) $\mathcal{A}$ and a normalized state $\ket{\psi}$, measuring the observable on the state $\ket{\psi}$ generates a probability space and a random variable (as the resulting eigenvalues). Let the random variable be $\RV{\lambda}$, we denote this as $\measure{\RV{\lambda}}{\mathcal{A}}{\ket{\psi}}$, and the probability of some event $E$ depending on $\RV{\lambda}$ being $\probM{\RV{\lambda}}{\mathcal{A}}{\ket{\psi}}{E(\RV{\lambda})}$.
\end{definition}
\begin{remark}
When we are describing probabilities resulting from some algorithm, the labels for the Hilbert space, the subscript after $\mathbb{P}$ can be dropped.
\end{remark}

\begin{definition}[Phases of Unitary]
Given some unitary $\mathcal{U}$, let $\arg(\mathcal{U})$ be a Hermitian operator with eigenvalues in $(-\pi, \pi]$ such that $\mathcal{U} = \exp{i \arg (\mathcal{U})}$. In other words, $\arg(\mathcal{U})$ will denote the operator corresponding to the phases of $\mathcal{U}$.
\end{definition}

\begin{remark}[Phase Estimation]
Given a unitary operator $\mathcal{U}$ and input state $\ket{\psi}$, as an example, phase estimation aims to produce a random variable $\measure{\RV{\theta}}{\arg(\mathcal{U})}{\ket{\psi}}$.
\end{remark}

\begin{definition}[absolute value mod $2\pi$]
    Given a real number (in practice, a phase) $\varphi$, we define $|\varphi|_{2\pi}$ as $\min_{n \in \mathbb{Z}} |\varphi - 2\pi n|$.
\end{definition}

\subsection{Primitives}

\label{sec: primitives}

In this section, we set up the problem of mean value estimation and provide useful algorithmic tools. First, we introduce the notion of a quantum experiment.

\begin{definition}[Quantum Experiment]
    \label{def: experiment}
    Given a multivariate random variable $\vec{\RV{X}}$ of dimension $d$, a quantum experiment is a unitary $U$ such that for all $k \in \Omega$
    \begin{equation}
U \ket{k}\ket{0} = \ket{k}\ket{\vec{\mathcal{X}_k}}
\end{equation}
    The unitary acts on $\mathcal{H}_\Omega \otimes \mathcal{H}_{\mathrm{vec}}$, where $\mathcal{H}_{\mathrm{vec}}$ is the Hilbert space with basis representing quantum states corresponding to vectors in $\mathbb{R}^d$.
\end{definition}

\begin{remark}[Incorporating Ancilla]
The experiment can be redefined to incorporate extra changes in other ancillas as
\begin{equation}
U \ket{k}\ket{0}\ket{0} = \ket{k}\ket{\vec{\mathcal{X}_k}}\ket{\mathrm{aux'}_k}
\end{equation}
Just as in Remark~\ref{remark: ancilla} we drop it for simplicity.
\end{remark}

In practice, we might not have access to either the synthesizer or the quantum experiment, but presented with access to a complete quantum experiment in the form of the following
\begin{definition}[Complete Quantum Experiment]
\label{def: complete_quantum_exp}
Given a multivariate random variable $\vec{\RV{X}}$, a complete quantum experiment is a unitary $V$ with
\begin{equation}
V \ket{0}\ket{0} = \sum_{k \in \Omega} \sqrt{p_k}\ket{k}\ket{\vec{\mathcal{X}_k}}
\end{equation}
\end{definition}

One can easily see that $V = U \mathcal{P}$ where $U$ is the quantum experiment for $\vec{\RV{X}}$. As will eventually be discussed in Remark~\ref{remark: join}, access to controlled-$V$ and controlled-$V^\dag$, in the end, is the only thing that we need. 

The quantum experiment highlighted here can mean a number of things. Consider a classical Monte Carlo process in some simulation code, we can transport all parts of the code into a quantum computer and the resulting unitary will be $U$. From a complexity perspective, this draws an even comparison with a quantum and a classical mean estimator. In addition, this language generalizes to a quantum sensor, where the quantum computer, via the unitary, is coupled to the physical world. \footnote{Of course, it needs to be hooked up to a fault-tolorant device to run our algorithm.} With some modifications we can further expand this language for inherently quantum information processes without one-to-one classical analogues, such as quantum chromodynamics simulations \cite{Atas_2023, preskill_2018} or variational quantum circuits used in QML \cite{Wang_2024}. Further intricacies of these discussions as will be explored in Sec~\ref{sec: discussion}. 




Similarly, for a univariate random variable, there is also a notion of quantum experiment implemented below:
\begin{definition}[Univariate Quantum Experiment]
    \label{def: univariate experiment}
    Given a univariate random variable $\RV{X}$, a quantum experiment is a unitary $U$ such that for all $k \in \Omega$
    \begin{equation}
U \ket{\Omega}\ket{k}\ket{0} = \ket{k}\ket{\mathcal{X}_k}\ket{\mathrm{aux}'_k}
\end{equation}
    The unitary acts on $\mathcal{H}_\Omega \otimes \mathcal{H}_{\mathrm{vec}} \otimes \mathcal{H}_{\mathrm{anx'}}$, where $\mathcal{H}_{\mathrm{vec}}$ is the Hilbert space with basis representing quantum states corresponding to vectors in $\mathbb{R}^d$.
\end{definition}

One can then easily define complete univariate quantum experiment, which we will omit. 

Note that:
\begin{remark}[Drawing Classical Samples from Quantum Experiment]
\label{remark: quantum_ex_to_classical}
Using one access to the (complete) quantum experiment for a random variable (either univairate or multivariate) $\RV{X}$ ensured by a measurement, one can draw a sample of $\RV{X}$ in the classical sense. 
\end{remark}

Note that given a fault-tolerant quantum computer, we can make simple transformations of the outcome of a random experiment, and it is effectly transforming the random variable of concern.
\begin{remark}[Post-processing Quantum Experiment]
   \label{remark: quantum_experiment_algebra}
   Given a simple function $f$ that is efficiently computable and a univariate quantum experiment of some random variable $\mathcal{X}$, we can transport the computation process to a quantum computer and construct a quantum experiment for $f(\mathcal{X})$ with corresponding small overhead. 
\end{remark}

Given access to a quantum experiment, we can build a phase oracle that will be crucial to our mean value estimation algorithm:

\begin{definition}[Univariate Phase Oracle]
    \label{def: univariate phase oracle}
     With $O(1)$ call to the quantum experiment that implements that implements some random variable $\RV{X}$ (and its inverse), we can construct a phase oracle $\mathcal{O}$ on $\mathcal{H}_S \otimes \mathcal{H}_\Omega$. For $\vec u \in S$ and $k \in \Omega$, where $S$ is some set of vectors in $\mathbb{R}^d$, we have
    \begin{equation}
\mathcal{O}\ket{k} = e^{i \mathcal{X}_k} \ket{k}
\end{equation}
\end{definition}
Overall, many of the above definitions follow those from Ref.~\cite{Kothari_2022}. In this paper, we will ignore the potential errors and infidelities during implementations of the primitives introduced in this paper. A more meticulous reader may reintegrate these errors back into our analysis. 

The phase estimation algorithm is one of the most classic algorithm of quantum computing, and we will briefly describe it as:
\begin{theorem}[Phase Estimation]
\label{thm: uni_phase_estimation}
Given a unitary $U$ on some register in $\mathcal{H}$. Let $U$ to have some eigenvector $\ket{\varphi}$ with eigenvalue $e^{2\pi i \varphi}$. We set $\varphi \in (-\frac{1}{2}, \frac{1}{2}]$.

Say we initialize the register $\mathcal{H}$ with some easy-to-prepare state $\ket{\psi}$. Using $N$ access to controlled versions of $U$ (entangling with an extra register to control), there is an algorithm that returns an estimate $\tilde{\varphi}$ such that: 
\begin{equation}
    \prob{\left|\tilde{\varphi}-\varphi\right| \leqslant \frac{\kappa}{N}} \geqslant \left|\bra{\varphi}\ket{\psi}\right|^2\left(1-\frac{1}{2(\kappa-1)}\right)
\end{equation}
where $\kappa \in \mathbb{N}^+$, and note that $\left|\bra{\varphi}\ket{\psi}\right|^2 = \probM{\RV{\theta}}{\arg U}{\ket{\psi}}{\RV{\theta} = 2 \pi \varphi}$.
\end{theorem}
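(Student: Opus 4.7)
The plan is to follow the textbook analysis of quantum phase estimation, with careful bookkeeping of the success probability when the input state is not an exact eigenvector. First I would set up the standard circuit: take a control register of $t = \lceil \log_2(N+1) \rceil$ qubits prepared in a uniform superposition over $\{0, 1, \ldots, 2^t - 1\}$, apply controlled-$U^{2^j}$ for $j = 0, 1, \ldots, t-1$ on the target register holding $\ket{\psi}$, perform the inverse QFT on the control register, and measure to obtain an integer $y$; output $\tilde\varphi = y/2^t$ (viewed as an element of $(-1/2, 1/2]$ after the usual relabeling of $y \geqslant 2^{t-1}$ to $y - 2^t$). The total number of controlled-$U$ applications is $2^t - 1 \leqslant N$, within the resource count allowed.

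Next I would analyze the ideal case $\ket{\psi} = \ket{\varphi}$: the controlled exponentiation decouples the two registers and imprints the phase $e^{2\pi i x \varphi}$ on control basis state $\ket{x}$, so after the inverse QFT the probability of outcome $y$ is the familiar Dirichlet-kernel expression
\begin{equation}
\prob{y} = \frac{1}{2^{2t}} \frac{\sin^2\bigl(\pi \cdot 2^t(\varphi - y/2^t)\bigr)}{\sin^2\bigl(\pi(\varphi - y/2^t)\bigr)}.
\end{equation}
From here I would invoke the standard tail-sum argument: bound each off-peak term by $1/\bigl(4 \cdot 2^{2t}(\varphi - y/2^t)^2\bigr)$ using $|\sin(\pi\theta)| \geqslant 2|\theta|$ for $|\theta| \leqslant 1/2$, sum the geometric-like contributions outside the $\kappa$-neighborhood of the closest integer multiple of $2^{-t}$, and conclude that the probability of $|\tilde\varphi - \varphi| > \kappa/2^t$ is at most $1/(2(\kappa-1))$ for integer $\kappa \geqslant 2$.

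To handle a general input $\ket{\psi}$, I would expand it in the eigenbasis of $U$ as $\ket{\psi} = \sum_\lambda c_\lambda \ket{\lambda}$. By linearity, the state just before measurement is a superposition of eigenstate branches; crucially, the $\ket{\varphi}$ branch is produced verbatim with amplitude $c_\varphi = \bra{\varphi}\ket{\psi}$ and remains orthogonal, on the target register, to the branches from $\lambda \neq \varphi$. Tracing out the target therefore realizes the eigenstate-$\ket\varphi$ outcome distribution with weight $|\bra{\varphi}\ket{\psi}|^2$, which combined with the previous step gives exactly the bound $|\bra{\varphi}\ket{\psi}|^2\bigl(1 - 1/(2(\kappa-1))\bigr)$ claimed.

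The main obstacle is really just the Dirichlet tail bound in the eigenstate case, which is standard but slightly delicate to make tight enough to hit the clean $1/(2(\kappa-1))$ constant. A secondary subtlety is that $|\tilde\varphi - \varphi|$ should naturally be interpreted on the torus $\mathbb{R}/\mathbb{Z}$ rather than on the real line; the theorem's restriction $\varphi, \tilde\varphi \in (-1/2, 1/2]$ tacitly assumes we stay far enough from the $\pm 1/2$ branch-cut that wrap-around can be ignored, which I would note briefly when converting the cyclic estimate to the linear one at the end.
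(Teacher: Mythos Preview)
The paper does not actually supply a proof of this theorem: it is stated in Section~\ref{sec: primitives} as a standard primitive, with the surrounding text describing phase estimation as ``one of the most classic algorithm[s] of quantum computing'' and the subsequent multidimensional version attributed to Ref.~\cite{Nielsen_Chuang_2010} via Ref.~\cite{Gilyen_2019}. So there is nothing to compare against; your proposal is simply the textbook derivation that the paper takes for granted, and it is the right one.

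Two small points worth tightening. First, your resource count is off by a sign: with $t = \lceil \log_2(N+1)\rceil$ you get $2^t - 1 \geqslant N$, not $\leqslant N$. In the paper's usage (see Algorithm~\ref{alg: uni_refinement}) $N$ is always chosen as a power of two and plays the role of the resolution $2^t$ itself, so you should just set $t = \log_2 N$ and the accuracy $\kappa/2^t$ becomes exactly $\kappa/N$. Second, your orthogonality argument for the non-eigenstate input is correct but be careful with degenerate eigenvalues: branches sharing the eigenvalue $e^{2\pi i\varphi}$ need not be orthogonal on the target register, yet they produce the \emph{same} control-register distribution and therefore can only increase the success probability, so the lower bound survives. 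With those adjustments the argument is complete.
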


More generally, we can extend the phase estimation into the multivariate case, as an algorithm that we will call ``multidimensional phase estimation.'' It is originally used to estimate the gradient of some function proposed in Refs.~\cite{Jordan_2005, bulger_2005}. Later authors polished the results and referred to it as Jordan's algorithm in Ref.~\cite{Gilyen_2019}. We, in a similar fashion as in Ref.~\cite{Cornelissen_2022}, will reformulate results mainly following Ref.~\cite{Gilyen_2019}.

First, we consider vectors from the following set:
\begin{definition}[Hypercubic Lattice]
For a dimension $d$ and resolution $N$ (which is usually a power of 2), define the hypercubic lattice to be:
\begin{equation}
    G_N = \left\{ j \in \{0, 1, \cdots, N-1\}:  \frac{j}{N} - \frac{1}{2} - \frac{1}{2N} \right\}^d
\end{equation}
\end{definition}

The multidimensional phase estimation requires the following component which we call multivariate phase unitary:
\begin{definition}[Multivariate Phase Unitary]
    Given a vector $\vec x$, a multivariate phase unitary $\mathcal{U}$ acts on $\mathcal{H}_{G_N}$ via:
    \begin{equation}
        U \ket{\vec u} = e^{2 \pi i \expval{\vec u, \vec x}} \ket{u}
    \end{equation}
    where $\vec u \in G_N$, $G$ is the hypercubic lattice for dimension $d$ and resolution $N$. 
\end{definition}

\begin{algorithm}
\caption{\label{alg: phase_estimation} Multidimensional Phase Estimation}
\KwData{Resolution $N$, dimension $d$, Multivariate Phase Unitary $\mathcal{U}$ on $\mathcal{H}_{G_N}$}
\KwResult{Some estimate of the phase $\vec y$}
Initialize state $\ket{\psi} \gets \frac{1}{N^{\frac{d}{2}}} \sum_{\vec u \in G_N} \ket{\vec u}$ \;
Obtain state $\ket{\varphi} \gets U^N \ket{\psi}$ \;
Apply inverse QFT: $\ket{\psi_2} \gets 
\mathsf{QFT}_{G_N}^{-1}\ket{\psi_1}$ \;
Measure all dimensions and obtain the estimate $\vec y$.
\end{algorithm}

Now, algorithm~\ref{alg: phase_estimation} illustrates the process to recover the multivariate phase parameter $\vec x$ from the unitary. The $\mathsf{QFT}_{G_N}$ represents Fourier transforming the register that corresponds to each dimension in $\vec u \in G_N$. Its definition can be found in Definition 5.2 (or Definition 17 in preprint) in Ref.~\cite{Gilyen_2019}. The following theorem summarizes the accuracy, which, as written in Ref.~\cite{Gilyen_2019}, is eventually sourced from Ref.~\cite{Nielsen_Chuang_2010}. 

\begin{theorem}[Multidimensional Phase Estimation]
For each dimension $\alpha \in [d]$, the probability for the outcome of Algorithm~\ref{alg: phase_estimation} satisfy:
\begin{equation}
\label{eq: perfect_phase_estimation}
    \mathbb{P} \left[ |x^\alpha - y^\alpha| > \frac{\kappa}{N} \right] \leqslant \frac{1}{2(\kappa-1)} 
\end{equation}
\end{theorem}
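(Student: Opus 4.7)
My plan is to reduce Algorithm~\ref{alg: phase_estimation} to $d$ independent copies of standard one-dimensional phase estimation and then invoke the known one-dimensional accuracy bound componentwise.

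First I would observe that every step of the algorithm factorizes across the $d$ coordinates. Since $\langle \vec u, \vec x\rangle = \sum_{\alpha} u^\alpha x^\alpha$, the multivariate phase unitary splits as $\mathcal{U} = \bigotimes_{\alpha=1}^{d} \mathcal{U}_\alpha$ with $\mathcal{U}_\alpha\ket{u^\alpha} = e^{2\pi i u^\alpha x^\alpha}\ket{u^\alpha}$ acting on the single coordinate register indexed by $\{j/N - 1/2 - 1/(2N) : j \in \{0,\dots,N-1\}\}$. The initial state $\ket{\psi} = N^{-d/2} \sum_{\vec u \in G_N}\ket{\vec u}$ is a tensor product of identical one-coordinate uniform superpositions, and $\mathsf{QFT}_{G_N}^{-1}$ is by definition the tensor product of inverse QFTs on each coordinate register. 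Consequently the state immediately before measurement is a product state across coordinates, and the outcomes $\{y^\alpha\}_{\alpha \in [d]}$ are independent, with the marginal distribution of $y^\alpha$ depending only on $x^\alpha$.

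Next I would match each one-coordinate subroutine to textbook phase estimation. After applying $\mathcal{U}_\alpha^N$, the single-coordinate register is in the state $N^{-1/2} \sum_{j=0}^{N-1} e^{2\pi i N u_j x^\alpha} \ket{u_j}$ where $u_j = j/N - 1/2 - 1/(2N)$; since $N u_j = j + \text{const}$, this is (up to a global phase and a fixed relabeling of basis elements) exactly the state $N^{-1/2} \sum_{j} e^{2\pi i j x^\alpha}\ket{j}$ that standard phase estimation prepares in its control register right before the inverse QFT. Applying $\mathsf{QFT}^{-1}$ and measuring therefore produces the identical distribution as the standard algorithm estimating the phase $x^\alpha$ to $\log_2 N$ bits of precision. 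The bound in Eq.~\eqref{eq: perfect_phase_estimation} is then the usual phase-estimation tail bound of Nielsen--Chuang (the same form already used in Theorem~\ref{thm: uni_phase_estimation}), applied coordinate by coordinate.

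The only genuinely subtle point is the bookkeeping around the shift $-1/2 - 1/(2N)$ in the definition of $G_N$: I would check that this shift contributes only a global phase under $\mathcal{U}_\alpha^N$ and a deterministic relabeling of the measurement outcomes after $\mathsf{QFT}^{-1}$, so that the event $|x^\alpha - y^\alpha| > \kappa/N$ translates directly into the corresponding textbook event with no change in probability. Once this is settled, the componentwise bound yields the stated inequality, and the proof is complete.
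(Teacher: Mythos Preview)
Your proposal is correct and matches the paper's treatment: the paper does not give its own proof but simply cites Ref.~\cite{Gilyen_2019} and ultimately Ref.~\cite{Nielsen_Chuang_2010}, and the argument in those references is exactly the tensor-product factorization into $d$ independent one-dimensional phase estimations followed by the standard Nielsen--Chuang tail bound that you outline. Your bookkeeping remark about the lattice shift is the only nontrivial detail, and it is handled in the cited works precisely as you describe.
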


In algorithm~\ref{alg: phase_estimation} we assumed that the unitary $U$ is perfectly the multivariate phase unitary. If we are using some other unitary $V$ to approximate $U^N$ there will be another term in error:
\begin{theorem}[Multidimensional Phase Estimation with Noise]
\label{thm: phase_estimation}
If we are using another unitary $V$ in the place of $U^N$ in Algorithm~\ref{alg: phase_estimation}, such that we end up achieving $\ket{\varphi'}$ instead of $\ket{\varphi}$, with $\norm{\ket{\varphi'} - \ket{\varphi}} \leqslant \varepsilon$ for some $\varepsilon$, Eq.~(\ref{eq: perfect_phase_estimation}) is modified to:
\begin{equation}
\label{eq: phase_estimation}
    \mathbb{P} \left[ |x^\alpha - y^\alpha| > \frac{\kappa}{N} \right] \leqslant \frac{1}{2(\kappa-1)} + 2 \varepsilon
\end{equation}
\end{theorem}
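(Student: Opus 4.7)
The plan is to treat the noisy variant as a perturbation of the ideal algorithm and argue that the resulting measurement outcome distributions are close in total variation, so the bound of Eq.~(\ref{eq: perfect_phase_estimation}) only degrades by an additive term controlled by $\varepsilon$. After the (possibly noisy) unitary $V$ is applied, the remainder of Algorithm~\ref{alg: phase_estimation} consists of $\mathsf{QFT}_{G_N}^{-1}$ followed by a computational-basis measurement. Since $\mathsf{QFT}_{G_N}^{-1}$ is unitary, it preserves the Euclidean norm, so the states just before measurement satisfy $\norm{\mathsf{QFT}_{G_N}^{-1}\ket{\varphi'} - \mathsf{QFT}_{G_N}^{-1}\ket{\varphi}} \leqslant \varepsilon$.

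Next, I would invoke the standard fact that $\varepsilon$-close pure states yield measurement outcome distributions that are $2\varepsilon$-close in total variation. Concretely, for normalized $\ket{\psi_1},\ket{\psi_2}$ with $\norm{\ket{\psi_1}-\ket{\psi_2}} \leqslant \varepsilon$ and any projector $\Pi$,
\begin{equation}
\bigl|\norm{\Pi\ket{\psi_1}}^2 - \norm{\Pi\ket{\psi_2}}^2\bigr| = \bigl(\norm{\Pi\ket{\psi_1}} + \norm{\Pi\ket{\psi_2}}\bigr) \bigl|\norm{\Pi\ket{\psi_1}} - \norm{\Pi\ket{\psi_2}}\bigr| \leqslant 2 \norm{\ket{\psi_1}-\ket{\psi_2}} \leqslant 2\varepsilon,
\end{equation}
using $\norm{\Pi\ket{\psi_i}}\leqslant 1$, the reverse triangle inequality, and $\norm{\Pi}\leqslant 1$. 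Applied with $\Pi$ equal to the projector onto the set of measurement outcomes $\vec y$ with $|x^\alpha - y^\alpha| > \kappa/N$, this shows that the probability of the bad event under $\ket{\varphi'}$ differs from its probability under $\ket{\varphi}$ by at most $2\varepsilon$. Combining with Eq.~(\ref{eq: perfect_phase_estimation}) immediately yields the claimed $\frac{1}{2(\kappa-1)} + 2\varepsilon$.

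The main obstacle, such as it is, is purely bookkeeping: if $V$ acts on a larger register than $U^N$ (for example because the phase unitary was synthesized from a controlled quantum experiment with ancilla workspace), one must verify that the $\varepsilon$-closeness is stated on the \emph{joint} state of all registers that participate in the subsequent QFT and measurement. Once this is pinned down, contraction of the trace distance under the partial trace handles any auxiliary registers that are discarded at the end, and the $2\varepsilon$ bound propagates through unchanged, giving the stated result.
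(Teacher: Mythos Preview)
Your argument is correct and is essentially the standard perturbation-of-measurement-statistics argument that the cited reference (Gily\'en et al., Lemma 5.2 / Lemma 20 of the preprint) uses; the paper itself does not give an independent proof but simply refactors that lemma with symbolic parameters. Your remark about ancilla registers is also on point and matches how the paper later applies the theorem (see the discussion around Eq.~(\ref{eq: refinement_diff})), so nothing further is needed.
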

The proof for the above theorem is directly refactored from Lemma 5.2 (or Lemma 20 in preprint) in Ref.~\cite{Gilyen_2019} with specific values replaced by symbolic variables. 
\begin{corollary}
    \label{cry: phase_estimation}
    In theorem~\ref{thm: phase_estimation} we might take $\varepsilon = \frac{1}{12}, \kappa = 4$, and we get:
    \begin{equation}
    \mathbb{P} \left[ |x^\alpha - y^\alpha| > \frac{4}{N} \right] \leqslant \frac{1}{3}
\end{equation}
\end{corollary}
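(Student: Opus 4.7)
The plan is to obtain the corollary by direct substitution into Theorem~\ref{thm: phase_estimation}, since no new analytic content is needed beyond choosing the constants sensibly. First I would recall that Theorem~\ref{thm: phase_estimation} gives, for any valid $\kappa \in \mathbb{N}^+$ and any noise parameter $\varepsilon \geqslant 0$ with $\norm{\ket{\varphi'} - \ket{\varphi}} \leqslant \varepsilon$, the tail bound
\begin{equation*}
\mathbb{P}\left[|x^\alpha - y^\alpha| > \frac{\kappa}{N}\right] \leqslant \frac{1}{2(\kappa-1)} + 2\varepsilon.
\end{equation*}
The corollary only asserts a specific numerical instance of this inequality, so there is nothing to derive beyond choosing $\kappa$ and $\varepsilon$ and evaluating.

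Next I would substitute $\kappa = 4$ and $\varepsilon = \frac{1}{12}$ into both sides. On the left, the threshold becomes $\frac{\kappa}{N} = \frac{4}{N}$, matching the statement. On the right, the first term becomes $\frac{1}{2(4-1)} = \frac{1}{6}$ and the second term becomes $2 \cdot \frac{1}{12} = \frac{1}{6}$, summing to $\frac{1}{3}$. This yields exactly the claimed bound.

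Since the derivation is a single arithmetic step, there is no real obstacle; the only point worth a brief sentence in the actual proof is to observe that the chosen $\kappa$ and $\varepsilon$ are admissible (the former is a positive integer, and the latter is a legitimate norm bound that can be met by standard techniques for approximating $U^N$). In short, the proof amounts to plugging in the constants and simplifying, and I would write it as a one-line verification.
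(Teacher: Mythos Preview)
Your proposal is correct and matches the paper's own treatment: the corollary is stated without proof because it is precisely the one-line substitution $\frac{1}{2(4-1)} + 2 \cdot \frac{1}{12} = \frac{1}{6} + \frac{1}{6} = \frac{1}{3}$ that you describe.
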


Going to hybrid algorithms, we will use the fact that one can easily translate a hybrid circuit into a quantum subroutine.
\begin{theorem}[Hybird to Quantum Conversion]
\label{thm: c_to_q}
Consider a hybrid circuit as a classical deterministic circuit controlling some quantum circuit. The classical circuit can not choose to skip the execution of the quantum circuit, it only feeds in parameters each time the quantum circuit is invoked. The classical circuit calls the quantum circuit $O(g(n, \delta))$ times and costs $O(f(n, \delta))$ by some measure of time complexity, where $0 < \delta < 1$. \footnote{\label{footnote: O_with_delta} For the sake or rigor, whenever we put $\delta$ in big-$O$ notation, we are setting $\delta \to 0$. So our definition of big-$O$ notation with $\delta$ only considers situations where $\delta < c < 1$ where $c$ is some constant. This way, we can say statements such as $O\left(n\ceil{\log \frac{1}{\delta}}\right) = O\left(n\log \frac{1}{\delta}\right)$.}

If the circuit output $\bm{x}$ satisfying some ``successful condition'' $P(\bm{x})$ with probability at least $1 - \delta$. Then by translating the circuit into a classical reversible circuit and then a quantum circuit, one can construct a unitary $U$ utilizing the quantum subrountine $O(g(n, \delta))$ times with total cost $O(f(n, \delta))$. Omitting auxiliaries, It acts on $\mathcal{H}_{\Omega'} \otimes \mathcal{H}_{\text{output}}$ and does the following
\begin{equation}
    U\ket{0}\ket{0} = \sum_{j \in \Omega'} \sqrt{q_j} \ket{\lambda_j}\ket{x_j}
\end{equation}
where $\Omega'$ is the set of outcomes from all measurements during the classical algorithm. Each outcome is defined as a tuple of all outcomes from all the $O(f(n, \delta))$ measurements done at each time the quantum algorithm being run. So $\Omega'$ is another probability space. $\{\ket{\lambda}_j\}$ is a set of orthogonal vectors as corresponding quantum states for each outcome. It lives in $n$ quantum registers. We absorbed potential phases into anxillary states, which we ignore (or can be absorbed into $\ket{\lambda_j}$ via redefinition, similar to Remark~\ref{remark: ancilla}). We can treat $U$ as a complete univariate quantum experiment for some random variable $\bm{x}$ in probability space $\Omega'$ where $\{x_j\}$ are its values. Then:
\begin{equation}
    \mathbb{P}[P(\bm{x})] \geqslant 1 - \delta
\end{equation}
The unitary also needs allocate $O(g(n, \delta))$ quantum registers for the execution for each call to the quantum subroutine. 
\end{theorem}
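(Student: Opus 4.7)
The plan is to invoke the standard ``principle of deferred measurement'' combined with Bennett-style classical-to-reversible conversion, applied uniformly to every stage of the hybrid circuit. The crucial structural assumption to exploit is that the classical controller is not permitted to skip quantum-subroutine calls based on prior outcomes; therefore the \emph{skeleton} of the hybrid algorithm (which subroutine is invoked, on which qubits, in what order, and for how many steps) is outcome-independent, and only the \emph{parameters} fed to each invocation depend on previously observed measurement results. This is exactly what makes a fully coherent simulation possible with the claimed resource budget.

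First I would replace every intermediate measurement in the hybrid algorithm by a CNOT (or generalized CNOT for multi-bit outcomes) copying the measured register into a fresh ancilla. By the deferred-measurement principle this produces an entangled state whose marginal on the measured bits has the same distribution as the original measurement, while leaving the residual computation coherent. Next I would take the classical deterministic parameter-computation that sits between subroutine calls and make it reversible via Bennett's construction, with only constant-factor time overhead and polynomial scratch space; embedding the resulting Boolean reversible circuit as a sequence of Toffoli and CNOT gates is automatic. Each such classical stage then becomes a unitary that reads the accumulated ancilla transcript and writes the correct parameters into the input register of the next quantum subroutine invocation, after which the subroutine is applied coherently and its outcome qubits are again CNOT'd into fresh ancillas.

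Composing $O(g(n,\delta))$ such stages produces a unitary $U$ that, starting from $\ket{0}\ket{0}$, yields precisely
\begin{equation}
U\ket{0}\ket{0} \;=\; \sum_{j \in \Omega'} \sqrt{q_j}\,\ket{\lambda_j}\ket{x_j},
\end{equation}
where $j$ ranges over all possible transcripts of deferred measurement outcomes, $q_j$ is the probability of that transcript in the original hybrid execution (by unitarity of the coherent CNOTs and the reversibilized classical glue), $\ket{\lambda_j}$ is the computational-basis ancilla state recording that transcript, and $\ket{x_j}$ is the value the hybrid circuit would have declared as its output along that branch. Orthogonality of the $\{\ket{\lambda_j}\}$ follows because distinct transcripts live on distinct computational basis states of the ancilla registers. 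Marginalizing over $\ket{\lambda_j}$ reproduces the original output distribution, so the success guarantee $\Pr[P(\bm{x})] \geq 1-\delta$ transfers directly. Any relative phases accumulated along different branches are absorbed into the definition of $\ket{\lambda_j}$, exactly as foreshadowed in Remark~\ref{remark: ancilla}, yielding the claimed complete univariate quantum experiment for $\bm{x}$.

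The only real obstacle is bookkeeping: one must check that neither the reversibilization nor the deferral inflates the complexity beyond $O(f(n,\delta))$ gates or $O(g(n,\delta))$ subroutine calls. Bennett's construction contributes only a constant-factor slowdown, at the price of scratch ancillas, and deferred measurement contributes one ancilla qubit per measured bit and no extra subroutine calls. Accumulating these ancillas across $O(g(n,\delta))$ subroutine invocations accounts precisely for the $O(g(n,\delta))$ registers stated in the theorem, completing the plan.
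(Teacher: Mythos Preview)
Your proposal is correct and is precisely the standard argument. Note, however, that the paper does not actually supply a proof of this theorem: it is stated in the Primitives subsection of the Preliminaries as a known fact and is never followed by a \texttt{proof} environment. What you have written---deferred measurement via CNOTs into fresh ancillas, Bennett reversibilization of the intervening deterministic classical glue, and the observation that the no-skip hypothesis makes the subroutine-call skeleton branch-independent---is exactly the canonical justification the paper is implicitly invoking. Your accounting of the $O(g(n,\delta))$ ancilla registers (one block per subroutine invocation, reused nowhere because they store the transcript) matches the theorem's space claim, and the transfer of the success probability via the marginal is the right way to close. There is nothing to compare against; your write-up simply fills in what the paper left unstated.
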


Then, there are also some well-known algorithmic tricks for classical computing that will be helpful for this work. First, for an algorithm that attempts to locate a value that is within a desriable range, say it success with probability $\frac{2}{3}$. We can arbitrarily boost this probability with repetitions and taking the medium.
\begin{theorem}[Boosting Success Probability]
\label{thm: median}
Consider some algorithm that outputs some value $x$, such that $x \in I$ with probability at least $\frac{2}{3}$, where $I$ is some desirable interval. By repeating the algorithm $ 2 \left\lceil \frac{18 \ln\frac{1}{\delta} - 1}{2} \right\rceil + 1$ (an odd number at least $18 \ln \frac{1}{\delta}$) times and taking the medium, we find an output $y \in I$ with probability at least $1 - \delta$.
\end{theorem}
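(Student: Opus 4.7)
The plan is the classical median-trick analysis. Let $x_1, \ldots, x_m$ denote the outputs of $m = 2\lceil (18\ln(1/\delta)-1)/2\rceil + 1$ independent repetitions of the given algorithm, and let $y$ be their median. Define the failure indicators $X_i = \mathbf{1}[x_i \notin I]$, so that by hypothesis each $X_i$ is a $\{0,1\}$-valued random variable with $\mathbb{E}[X_i] \leq 1/3$, and the $X_i$ are mutually independent.

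The key structural observation is that $I$ is an \emph{interval}: if the median $y$ fails to lie in $I$, then either $y < \inf I$ or $y > \sup I$, and in either case strictly more than half of the samples $x_1,\ldots,x_m$ must lie on the same side of $I$ as $y$, hence outside $I$. Since $m$ is odd, this rewrites as
\begin{equation}
    \{y \notin I\} \;\subseteq\; \left\{\sum_{i=1}^{m} X_i \geq \frac{m+1}{2}\right\} \;\subseteq\; \left\{\sum_{i=1}^{m} X_i \geq \frac{m}{2}\right\}.
\end{equation}
So it suffices to upper bound the probability that the empirical failure rate exceeds $1/2$.

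For this I would invoke Hoeffding's inequality for bounded independent variables. Writing $\bar X = \frac{1}{m}\sum_i X_i$, we have $\mathbb{E}[\bar X] \leq 1/3$, and Hoeffding gives
\begin{equation}
    \mathbb{P}\!\left[\bar X \geq \tfrac{1}{2}\right] \;\leq\; \mathbb{P}\!\left[\bar X - \mathbb{E}[\bar X] \geq \tfrac{1}{6}\right] \;\leq\; \exp\!\left(-2m \cdot \tfrac{1}{36}\right) \;=\; e^{-m/18}.
\end{equation}
Since the chosen $m$ satisfies $m \geq 18\ln(1/\delta)$, we conclude $e^{-m/18} \leq \delta$, which yields $\mathbb{P}[y \in I] \geq 1 - \delta$ as required. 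The main obstacle is really just calibrating the constants so that the exponent of Hoeffding matches the prefactor $18$ advertised in the theorem; using a multiplicative Chernoff bound would give a slightly worse constant, so Hoeffding with deviation $1/6$ is the right tool to hit the stated bound exactly.
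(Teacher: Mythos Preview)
Your proof is correct and follows exactly the approach the paper indicates: the paper states only that the theorem ``can be easily proven from the fact that the combined algorithm is guaranteed to succeed when at least half of all calls succeeds and Hoeffding's inequality,'' and you have filled in precisely those details, including the interval observation and the Hoeffding computation with deviation $1/6$ that lands on the constant $18$.
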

The above theorem can be easily proven from the fact that the combined algorithm is guaranteed to succeed when at least half of all calls succeeds and Hoeffding's inequality. 

Second, there is a commonly used technique adapted from Ref.~\cite{Kothari_2022}. Consider a subroutine which depends on an accuracy parameter $\varepsilon$ and confidence parameter $\delta$ such that it takes $O\left(\frac{1}{\varepsilon} \log \frac{1}{\delta}\right)$ time to succeed with probability at least $1 - \delta$. There is a way to chain together a sequence of access to the algorithm, with exponentially decaying accuracy parameters, such that the combined cost is $O\left(\frac{1}{\varepsilon}\log(\delta)\right)$ with success probability at least $1 - \delta$, where $\varepsilon$ is the accuracy parameter of the last step. The exact theorem is as follows, with more details added. 
\begin{theorem}[Log log trick]
\label{thm: log_log}
Give some algorithm $\mathcal{A}$ depending on parameters $(\varepsilon, \delta)$, where $\varepsilon > 0$ and $0 < \delta < 1$, such that 
\begin{itemize}
    \item Algorithm always costs $O\left(\frac{1}{\varepsilon} \log \frac{1}{\delta} \right)$ by some measure of complexity and $O\left(\log \frac{1}{\delta}\right)$ by another \footnote{When we use this theorem, the two measures will eventually leads to time and space complexities};
    \item Algorithm ``succeed'' with probability at least $1 - \delta$ whenever it is called. ``succeed'' can be defined as some logical predicate.
\end{itemize}

Fix $\varepsilon, \delta$. Consider calling the algorithm $T$ times. Let the $j$-th time the algorithm to be called with $(\varepsilon'_j, \delta'_j)$. We fix $\varepsilon'_j$ such that $\varepsilon'_T = \varepsilon$ and $\varepsilon'_{j+1} \leqslant \frac{\varepsilon'_{j}}{R}$, where $R > 1$ is some fixed constant. By setting 
\begin{equation}
    \delta'_j = \frac{6}{\pi^2} \frac{1}{\left(T-j + 1\right)^2} \delta
\end{equation}
We can make sure that 
\begin{itemize}
    \item All calls to algorithm $\mathcal{A}$ succeed simultaneously with probability at least $1 - \delta$.
    \item The combined cost is $O\left(\frac{1}{\varepsilon} \log \frac{1}{\delta}\right)$ and $O\left(T \log \frac{T}{\delta}\right)$ by the two measures of complexity respectively. 
\end{itemize} 
\end{theorem}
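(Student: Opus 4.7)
The plan is to handle the probability bound and the two complexity bounds as three essentially independent calculations, the first via a union bound and the other two via summing over the $T$ invocations.

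For the probability claim, I would apply the union bound to the $T$ failure events of the individual calls. Each call's failure probability is $\delta'_j = \tfrac{6}{\pi^2}(T-j+1)^{-2}\delta$, so after reindexing with $m = T-j+1$ running from $1$ to $T$, the total failure probability is at most
\begin{equation}
\sum_{j=1}^{T} \delta'_j \;=\; \frac{6\delta}{\pi^2}\sum_{m=1}^{T}\frac{1}{m^2} \;\leqslant\; \frac{6\delta}{\pi^2}\cdot\frac{\pi^2}{6} \;=\; \delta,
\end{equation}
using the Basel identity to bound the tail. This immediately gives simultaneous success with probability at least $1-\delta$.

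For the first complexity measure, the step-$j$ cost is $O(\tfrac{1}{\varepsilon'_j}\log\tfrac{1}{\delta'_j})$. The chain $\varepsilon'_{j+1}\leqslant \varepsilon'_j/R$ with $\varepsilon'_T=\varepsilon$ implies $1/\varepsilon'_{T-k}\leqslant 1/(R^k\varepsilon)$. I would substitute the chosen $\delta'_j$ so that $\log\tfrac{1}{\delta'_{T-k}} = \log\tfrac{1}{\delta} + 2\log(k+1) + O(1)$, and then sum in reverse order:
\begin{equation}
\sum_{k=0}^{T-1} \frac{1}{R^k\varepsilon}\left(\log\tfrac{1}{\delta} + 2\log(k+1) + O(1)\right) \;\leqslant\; \frac{1}{\varepsilon}\left(C_1 \log\tfrac{1}{\delta} + C_2\right),
\end{equation}
where $C_1 = \sum_{k\geqslant 0} R^{-k} = R/(R-1)$ and $C_2 = O\!\left(\sum_{k\geqslant 0} R^{-k}\log(k+1)\right)$, both finite because exponential decay dominates the logarithm. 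This is $O(\tfrac{1}{\varepsilon}\log\tfrac{1}{\delta})$ as claimed, with the hidden constant depending only on $R$.

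For the second complexity measure, step $j$ costs $O(\log\tfrac{1}{\delta'_j}) = O(\log(T-j+1) + \log\tfrac{1}{\delta})$. Summing over $j$ gives $O\!\left(\sum_{m=1}^{T}\log m + T\log\tfrac{1}{\delta}\right) = O(T\log T + T\log\tfrac{1}{\delta}) = O(T\log\tfrac{T}{\delta})$, using $\log T + \log\tfrac{1}{\delta} = \log\tfrac{T}{\delta}$. The only mild subtlety is justifying that the constant-absorption conventions in footnote~\ref{footnote: O_with_delta} remain valid when $T$ is large and $\delta$ small, which is routine. The main obstacle, if any, is the arithmetic book-keeping in the first-measure sum to confirm that the $2\log(k+1)$ factor really is swallowed by the geometric series and does not pollute the leading $\log\tfrac{1}{\delta}$ behavior; everything else is a direct substitution and union bound.
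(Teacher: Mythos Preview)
Your proposal is correct and follows essentially the same route as the paper's proof in Appendix~\ref{app: log_log}: union bound with the Basel sum for the probability, geometric-decay-beats-logarithm for the first complexity measure, and a direct $\sum_{m}\log m = O(T\log T)$ estimate for the second. The only difference is cosmetic---the paper spells out the convergence of $\sum_{k\geqslant 0} R^{-k}\log(k+1)$ via an explicit integral-test computation, whereas you (reasonably) just assert it.
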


This algorithm is called the ``log log trick'' because without this technique (i.e., if we fix the $\delta_j$s to be a constant, we will end up getting an extra $\log \log \frac{1}{\varepsilon}$ in our complexity. In Ref.~\cite{Kothari_2022} the authors provided an explicit choice with $\delta'_j = \delta e^{- D \left(\frac{\varepsilon_j}{\varepsilon}\right)^{\frac{1}{2}}}$ where $D = D(R)$ is some constant depending on $R$. However, for the purpose of this paper we also want to reduce another, yet unspecified, measure of complexity. (It will eventually leads to space complexity for the meticulous estimator in Algorithm~\ref{alg: con_mer_estimator} and we'll see why.) For this purpose we need an upgrade to the set of parameters discussed in the theorem statement. The proof to this claim can be found in Appendix~\ref{app: log_log}.

Lastly, There are many algorithms useful for the final classical reduction in Sec.~\ref{sec: final_classical_reduction}. First is the well-known median of means algorithm for classical univariate mean estimation.
\begin{theorem}[Classical Univariate Mean Estimator]
\label{thm: classical_uni_estimator}
There is a classical algorithm (median of means) that given a univariate random variable $\RV{X}$, confidence parameter $\delta$, and $n \in \mathbb{N}^+$, it uses $O\left(n\log \frac{1}{\delta}\right)$ draws to return a mean estimate $\tilde{\mu}$ such that:
\begin{equation}
    \prob{\left|\tilde{\mu} - \expect{\RV{X}}\right| \leqslant \sqrt{\frac{\Var{\RV{X}}}{n}}} \geqslant 1 - \delta
\end{equation}
\end{theorem}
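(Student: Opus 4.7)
The plan is to prove this by the classical \emph{median-of-means} construction. I would begin by splitting a budget of $Cn\lceil\log\tfrac{1}{\delta}\rceil$ samples into $k=\Theta(\log\tfrac{1}{\delta})$ disjoint groups, each of size $cn$ for a constant $c$ to be fixed (I anticipate $c=4$ suffices). Within each group $j$, I compute the empirical mean $\hat{\mu}_j = \tfrac{1}{cn}\sum_{i} X_{j,i}$. Then I return the median $\tilde{\mu} = \operatorname{median}(\hat{\mu}_1,\dots,\hat{\mu}_k)$.

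The analysis proceeds in two steps. First, since each group mean $\hat{\mu}_j$ has variance $\Var(\RV{X})/(cn)$, Chebyshev's inequality gives
\begin{equation}
    \mathbb{P}\!\left[\,|\hat{\mu}_j - \mathbb{E}[\RV{X}]| > \sqrt{\tfrac{\Var(\RV{X})}{n}}\,\right] \;\leqslant\; \frac{1}{c}.
\end{equation}
Choosing $c=4$ makes each group ``good'' with probability at least $3/4$. Second, I observe the key structural fact about the median: if $\tilde{\mu}$ falls outside the interval $I := [\mathbb{E}[\RV{X}] - \sqrt{\Var(\RV{X})/n},\,\mathbb{E}[\RV{X}] + \sqrt{\Var(\RV{X})/n}]$, then at least $\lceil k/2\rceil$ of the $\hat{\mu}_j$'s must lie on the same side outside $I$, hence outside $I$. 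So failure of the overall estimator is contained in the event that at least half of the groups are ``bad.''

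To control this event, I apply a Chernoff/Hoeffding bound to the i.i.d.\ Bernoulli indicators of ``group $j$ is bad,'' each of mean at most $1/4$. The probability that their sum exceeds $k/2$ is at most $\exp(-c'k)$ for an explicit constant $c'>0$ (essentially the KL divergence between Bernoulli$(1/2)$ and Bernoulli$(1/4)$). Picking $k = \lceil 18\ln\tfrac{1}{\delta}\rceil$ (as in Theorem~\ref{thm: median}, applied with the base success probability $3/4$) drives this below $\delta$. The total sample complexity is $k \cdot cn = O(n\log\tfrac{1}{\delta})$, matching the claim.

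I do not expect a real obstacle here: this is the textbook median-of-means argument, and the only care needed is fixing the constant $c$ in the group size large enough so that Chebyshev yields a per-group success probability strictly above $1/2$, which is what makes the concentration of the median argument go through. Alternatively, one could cite Theorem~\ref{thm: median} directly, since the per-group estimator with $cn$ samples is exactly a subroutine of the kind boosted there; this packages the Hoeffding step cleanly and avoids re-deriving constants.
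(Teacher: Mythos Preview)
Your proposal is correct and is exactly the standard median-of-means argument. The paper does not actually prove this theorem: it simply states the result, remarks that ``this is a very well-known algorithm,'' and points the reader to Ref.~\cite{Minsker_2023} for further discussion. So there is nothing to compare against on the paper's side; your write-up is more complete than what the paper provides. One minor cosmetic point: Theorem~\ref{thm: median} as stated in the paper assumes base success probability at least $2/3$, not $3/4$; since your per-group success probability $3/4$ exceeds $2/3$, invoking that theorem is valid as written, but you could equally well set $c=3$ to match the $2/3$ threshold and save a constant.
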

As mentioned, this is a very well-known algorithm, but one can find more discussions on Ref.~\cite{Minsker_2023}.

In the multivariate case, we can also introduce a classical multivariate mean value estimator from Ref.~\cite{Lugosi_2019}.
\begin{theorem}[Classical Multivariate Mean Estimator]
\label{thm: classical_multi_estimator}
There is a classical algorithm that given a multivariate random variable $\vec{\RV{X}}$ of dimension $d$, confidence parameter $\delta$, and $n \in \mathbb{N}^+$, it uses $O\left(n \log \frac{1}{\delta}\right)$ draws to return a mean estimate $\vec{\tilde{\mu}}$ such that:
\begin{equation}
\label{eq: classical_multi_estimator}
\prob{\normtwo{\vec{\tilde{\mu}} - \expect{\vec{\RV{X}}}} \leqslant \sqrt{\frac{\tr \Sigma}{n \ln \frac{1}{\delta}}} + \sqrt{\frac{\normtwo{\Sigma} }{n}}} \geqslant 1 - \delta
\end{equation}
where $\Sigma$ is the covariance matrix. \footnote{As a comment, the original paper format it as the algorithm takes $O(n)$ such that:
\begin{equation}
\prob{\normtwo{\vec{\tilde{\mu}} - \expect{\vec{\RV{X}}}} \leqslant \sqrt{\frac{\tr \Sigma}{n}} + \sqrt{\frac{\normtwo{\Sigma} \ln \frac{1}{\delta}}{n}}} \geqslant 1 - \delta
\end{equation}
We then substituted $n \to n \ln \frac{1}{\delta}$. Strictly speaking, in a multivariate setting, one must be cautious about such substitutions when using big-$O$ notation. But we can always resolve potential issues by declaring that our complexity is always finite within a finite range in parameter space. 

Additionally, since $n$ is an integer, after the substitution we have ignored the regime of $n \ll \log \frac{1}{\delta}$ before the substitution. This makes our claim slightly weaker. Nevertheless, this format allows us to present the algorithms more succinctly. The omitted edge cases are of marginal importance and can be handled separately if desired.
} 
\end{theorem}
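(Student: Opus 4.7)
The plan is to follow Lugosi and Mendelson's multivariate median-of-means construction. First, I would split the $N = \Theta(n \log \tfrac{1}{\delta})$ i.i.d.\ samples into $k \approx \lceil \ln \tfrac{1}{\delta} \rceil$ disjoint blocks of size $m = N/k \approx n$, and form the per-block sample means $Z_1, \dots, Z_k \in \mathbb{R}^d$. By Chebyshev applied to each block, $\expect{\normtwo{Z_i - \mu}^2} = \tr \Sigma / m$ with $\mu = \expect{\vec{\RV{X}}}$, so each block mean lies within a controlled ball around $\mu$ with at least a constant probability.

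Next, I would define $\vec{\tilde{\mu}}$ via a multivariate median of the $Z_i$: choose $\vec{\tilde{\mu}}$ so that for every direction $u \in S^{d-1}$ a majority of the projected block means $\langle u, Z_i\rangle$ lie within a target radius of $\langle u, \vec{\tilde{\mu}}\rangle$. For a single fixed $u$, the classical one-dimensional median-of-means argument (Hoeffding applied to the indicator that $\langle u, Z_i - \mu\rangle$ exceeds $O(\sqrt{u^\top \Sigma u / m})$, which has mean $\leqslant 1/4$ by Chebyshev) shows the scalar median of $\langle u, Z_i\rangle$ is within $O(\sqrt{u^\top \Sigma u / m})$ of $\langle u, \mu\rangle$ except with probability $e^{-ck}$.

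The main obstacle, and the real content of the Lugosi--Mendelson argument, is upgrading this one-direction guarantee to a bound that is uniform over all $u \in S^{d-1}$ without paying a factor of $\sqrt{d}$. A naive $\varepsilon$-net of the sphere would give only a bound of order $\sqrt{d\,\normtwo{\Sigma}/N}$, which is too weak. Instead one applies Mendelson's small-ball method together with generic chaining on the empirical process $u \mapsto \tfrac{1}{k}\sum_i \mathbf{1}\{|\langle u, Z_i - \mu\rangle| > r(u)\}$, exploiting the quadratic structure of the covariance. This separates the fluctuation into a ``diameter'' term controlled by $\sqrt{\tr \Sigma / N}$ and an ``operator norm'' term controlled by $\sqrt{\normtwo{\Sigma}\, k/N}$; substituting $N = n \ln(1/\delta)$ and $k = \ln(1/\delta)$ yields exactly the two summands in Eq.~(\ref{eq: classical_multi_estimator}).

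Finally, I would verify that $\vec{\tilde{\mu}}$ can be computed (or at least approximated to the required accuracy) in time polynomial in $N$ and $d$ from the $k$ block means, so that the sample-complexity claim is algorithmically meaningful. The full details, especially the chaining step and the precise definition of the multivariate median, are given in Ref.~\cite{Lugosi_2019}; beyond transcribing them, the only additional work here is the reparametrization from the original $(n, \ln(1/\delta))$ scaling to the form stated above, taking care of the marginal regime $n \ll \ln(1/\delta)$ flagged in the footnote.
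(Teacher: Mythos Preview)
Your sketch is a reasonable outline of the Lugosi--Mendelson argument, but the paper does not prove this statement at all: Theorem~\ref{thm: classical_multi_estimator} appears in the Primitives section as a black-box result imported from Ref.~\cite{Lugosi_2019}, with the only ``proof'' being the footnoted reparametrization $n \to n \ln \frac{1}{\delta}$. So you are not comparing against an alternative argument here --- you are reproducing the content of the cited reference, which the paper deliberately chose not to unpack.
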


\begin{corollary}
\label{cry: classical_multi_estimator_relaxed}
For the purpose of this paper, we may relax Eq.~(\ref{eq: classical_multi_estimator}) in Theorem~\ref{thm: classical_multi_estimator} as:
\begin{equation}
\prob{\normtwo{\vec{\tilde{\mu}} - \expect{\vec{\RV{X}}}} \leqslant \sqrt{\frac{\tr \Sigma}{n}} \left(1 + \sqrt{\frac{1}{\ln \frac{1}{\delta}}}\right)} \geqslant 1 - \delta 
\end{equation}
\end{corollary}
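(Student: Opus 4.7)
The plan is to derive Corollary~\ref{cry: classical_multi_estimator_relaxed} by showing that the bound inside the probability in the corollary dominates the bound in Theorem~\ref{thm: classical_multi_estimator} term-by-term. Once that domination is in place, the event appearing in the corollary contains the event appearing in the theorem, so the lower bound on the probability transfers directly.

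First I would unpack what needs to be shown. The corollary statement can be rewritten as
\begin{equation}
\sqrt{\frac{\tr \Sigma}{n}} \left(1 + \sqrt{\frac{1}{\ln \frac{1}{\delta}}}\right) = \sqrt{\frac{\tr \Sigma}{n}} + \sqrt{\frac{\tr \Sigma}{n \ln \frac{1}{\delta}}},
\end{equation}
while the theorem bound is
\begin{equation}
\sqrt{\frac{\tr \Sigma}{n \ln \frac{1}{\delta}}} + \sqrt{\frac{\normtwo{\Sigma}}{n}}.
\end{equation}
The first term of the theorem bound matches the second term of the corollary bound verbatim, so it suffices to show $\sqrt{\normtwo{\Sigma}/n} \leqslant \sqrt{\tr \Sigma / n}$, i.e.\ $\normtwo{\Sigma} \leqslant \tr \Sigma$.

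This last inequality is the heart of the argument, and it follows from the fact that $\Sigma$, as a covariance matrix, is positive semidefinite. Its spectral norm $\normtwo{\Sigma}$ equals its largest eigenvalue $\lambda_{\max}(\Sigma)$, while $\tr \Sigma = \sum_\alpha \lambda_\alpha(\Sigma)$ is the sum of all its (nonnegative) eigenvalues. Hence $\normtwo{\Sigma} = \lambda_{\max}(\Sigma) \leqslant \sum_\alpha \lambda_\alpha(\Sigma) = \tr \Sigma$.

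With this domination established, every outcome of the estimator that satisfies the theorem's norm bound automatically satisfies the relaxed bound in the corollary. Consequently the probability that the relaxed bound holds is at least the probability that the original bound holds, which by Theorem~\ref{thm: classical_multi_estimator} is at least $1 - \delta$. There is no genuine obstacle here; the only subtle point worth flagging is that the relaxation throws away the separation between $\normtwo{\Sigma}$ and $\tr \Sigma$, which is wasteful in settings where one eigenvalue dominates, but is harmless for our purposes since all subsequent usages in the paper only ever need a $\tr \Sigma$-type bound anyway.
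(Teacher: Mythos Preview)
Your argument is correct and is exactly the intended derivation: expand the relaxed bound, match the $\sqrt{\tr\Sigma/(n\ln(1/\delta))}$ terms, and use $\normtwo{\Sigma}=\lambda_{\max}(\Sigma)\leqslant\sum_\alpha\lambda_\alpha(\Sigma)=\tr\Sigma$ for positive-semidefinite $\Sigma$ to dominate the remaining term. The paper does not spell out a proof for this corollary, but your reasoning is the standard (and only reasonable) route.
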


We will also need to invoke the quantile estimation algorithm from Ref.~\cite{Hamoudi_2021}. The original theorem is:
\begin{theorem}[Qunantile Estimation]
\label{thm: quantile}
Given access to quantum experiment for multivariate random varaible $\RV{X}$, $p, \delta \in (0,1)$, there is a quantum algorithm that uses $O\left(\frac{\log \frac{1}{\delta}}{\sqrt{p}}\right)$ quantum experiments to return an approximate quantile $\tilde{Q}$ such that:
\begin{equation}
    Q(p) \leqslant \tilde{Q} \leqslant Q(Cp)
\end{equation}
with probability at least $1 - \delta$. $C \in (0,1)$ is some universal constant that is easy to compute. $Q$ denotes the quantile function defined as:
\begin{equation}
    Q(x) = \sup\{y \in \mathbb{R}: \prob{\RV{X} \geqslant y} \geqslant p\}
\end{equation}
\end{theorem}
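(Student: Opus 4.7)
The plan is to reduce quantile estimation to Grover-style tail amplification combined with amplitude estimation. Given a candidate threshold $y$, an $O(1)$-call preprocessing around the quantum experiment for $\RV{X}$ produces a phase oracle that flips the sign of $\ket{k}$ exactly when $\RV{X}_k \geqslant y$: evaluate $\RV{X}$ into an ancilla register, compare with $y$, kick back a $\pm 1$ phase on the flag qubit, and uncompute. Conjugating this oracle with the complete quantum experiment $V$ from Definition~\ref{def: complete_quantum_exp} yields a Grover gate whose marked amplitude equals $\sqrt{a(y)}$, where $a(y) = \prob{\RV{X} \geqslant y}$.

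First, I would run amplitude estimation on that Grover gate; with $N$ queries it returns $\tilde{a}(y)$ with additive accuracy $O(\sqrt{a(y)}/N + 1/N^2)$. Choosing $N = O(1/\sqrt{p})$ yields a constant-multiplicative-accuracy estimate of $a(y)$ on the scale of $p$, so in particular we can decide with constant confidence whether $a(y) \in [Cp, p]$ for some explicit universal $C \in (0,1)$. This is the core subroutine, of cost $O(1/\sqrt{p})$ per invocation.

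Next, I would search for a $y$ in the target band by iterative Grover refinement, mirroring the D\"urr--H{\o}yer minimum-finding recipe. Sample $y_0$ by one measurement of $V$; at step $t$ use Grover amplification on the subspace $\{k : \RV{X}_k > y_t\}$ to draw a random witness $k_t$, and set $y_{t+1} = \RV{X}_{k_t}$. A uniformly random index above $y_t$ lies above the conditional median with probability at least $1/2$, so in expectation $a(y_{t+1}) \leqslant a(y_t)/2$. Each round costs $O(1/\sqrt{a(y_t)})$ Grover iterations, and the sequence terminates once $a(y_t) \in [Cp, p]$; the per-round costs form a geometric series that telescopes to $O(1/\sqrt{p})$ with constant overall success probability. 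Applying the median-of-independent-runs booster of Theorem~\ref{thm: median} raises the confidence to $1 - \delta$ at multiplicative cost $O(\log(1/\delta))$, yielding the claimed $O(\log(1/\delta)/\sqrt{p})$ bound.

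The main obstacle is the cost accounting: naively binary-searching over a discretized threshold set would introduce a spurious $\log(1/p)$ factor, so the geometric-decay property of the D\"urr--H{\o}yer scheme is what keeps the top-order $1/\sqrt{p}$ term clean. One must also carry the standard exponentially-increasing-guess wrapper of Boyer--Brassard--H{\o}yer--Tapp in order to amplify on a subspace of unknown amplitude $a(y_t)$ without over- or under-rotation, and make the universal constant $C$ in the amplitude-estimation decision rule explicit enough that ``$a(y) \geqslant p$'' and ``$a(y) \leqslant Cp$'' can be reliably distinguished with constant probability per call.
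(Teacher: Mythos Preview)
The paper does not prove this statement at all: Theorem~\ref{thm: quantile} is quoted verbatim as a black-box primitive from Ref.~\cite{Hamoudi_2021} and is simply invoked later in Algorithm~\ref{alg: final_multi_esimator}. There is therefore no in-paper proof for your attempt to be compared against.

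That said, your sketch is a faithful reconstruction of the argument in the cited reference: the D\"urr--H{\o}yer-style geometric descent, with the BBHT exponential-guess wrapper handling unknown marked amplitude, is exactly what produces the clean $O(1/\sqrt{p})$ top-order term there, and the median boost is the standard way to reach confidence $1-\delta$. One point deserves tightening. As written, you describe an $O(1/\sqrt{p})$ amplitude-estimation ``core subroutine'' to test whether $a(y)\in[Cp,p]$, and separately the descent whose per-round Grover costs telescope to $O(1/\sqrt{p})$; but if the amplitude-estimation test is what detects termination and is called once per descent round, you pay $O(1/\sqrt{p})$ times $O(\log(1/p))$ rounds, reintroducing the very logarithm you warned against. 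The way Hamoudi avoids this is to let the BBHT wrapper itself, with its iteration cap set at $\Theta(1/\sqrt{p})$, act as the stopping signal: once BBHT can no longer find a witness above $y_t$ within that budget, one is (with constant probability) in the band $a(y_t)\in[Cp,p]$, and no separate per-round amplitude estimation is needed. Folding the termination test into the search in this way is what keeps the total at $O(1/\sqrt{p})$ before the median boost.
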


\section{Key Subroutine Analysis}
\label{sec: key_subroutine}

With primitives all set, we now establish theorems for the key quantum subroutine used for multivariate mean value estimation. Intuitively, the idea is to build from the oracle in Definition~\ref{def: univariate phase oracle}, combined with the phase estimation procedure used in Ref.~\cite{Cornelissen_2022}. Eventually, we build several programs that give us insights into the expectation value of the random variable with high probability.

\subsection{Spectrum Analysis of The Grover Operator}

\label{sec: sectrum_Grover}

As we will see later, a successful univariate estimator is built upon an estimator for the univariate case. First, following the insight made by Ref.~\cite{Kothari_2022}, we construct the Grover operator with access to the phase oracle of a univariate random variable:

\begin{definition}[Univariate Grover Operator]
Given the phase oracle $\mathcal{O}$ for the univariate random variable $\RV{\theta}$, the Grover operator for $\RV{\theta}$ is 
\begin{equation}
\mathcal{G} = \mathcal{R} \mathcal{O}
\end{equation}
where $\mathcal{R}$ is the reflection gate with 
\begin{equation}
\mathcal{R} = \mathcal{P} (2 \ket{0}\bra{0} - I)\mathcal{P}^\dag = 2\ket{\bm{1}}\bra{\bm{1}} - I
\end{equation}
\end{definition}

In Sec.~3.7 of Ref.~\cite{Kothari_2022} the authors provide a geometric understanding of the eigenvectors and eigenvalues. Here, we would like to provide a more explicit form.

Fix a univariate random variable $\RV{\theta}$. Consider an eigenvector of the Grover operator with eigenvalue $e^{i\alpha}$. Say it is an associated quantum state of the complex random variable $\RV{\psi}$. Applying Grover operator gives
\begin{equation}
\begin{aligned}
  \mathcal{G}\ket{\RV{\psi}} &  = \left( {2\ket{\bm{1}}\bra{\bm{1}} - I} \right)U\ket{\RV{\psi}} = 2\ket{\bm{1}}\bra{\bm{1}}\ket{{e^{i\RV{\theta} }}\RV{\psi} } - \ket{{e^{i\RV{\theta} }}\RV{\theta} } \\ 
   &  = \ket{2\mathbb{E}\left[ e^{i\RV{\theta} }\RV{\psi}  \right] - {e^{i\RV{\theta} }}\RV{\psi} } \\ 
\end{aligned} 
\end{equation}

The eigenvalue equation is thus
\begin{equation}
    \ket{2\mathbb{E}\left[ e^{i\RV{\theta} }\RV{\psi}  \right] - {e^{i\RV{\theta} }}\RV{\psi} } = e^{i\alpha} \ket{\RV{\psi}} = \ket{e^{i\alpha} \RV{\psi}}
\end{equation}
In terms of the random variables:
\begin{equation}
    \label{eq: eigenvalue_eq}
    2\mathbb{E}\left[ e^{i\RV{\theta} }\RV{\psi}  \right] - {e^{i\RV{\theta}}}\RV{\psi} = e^{i\alpha} \RV{\psi}
\end{equation}

Despite the look, this equation is solvable. We find that ignoring edge cases, $\alpha$ needs to satisfy
\begin{equation}
\label{eq: alpha_eigen}
\expect{\tanft{\RV{\theta} - \alpha}} = 0
\end{equation}
and then the eigenvector, in random variable form, is given by
\begin{equation}
\label{eq: eigen_vecs}
\RV{\psi} = \frac{1 - i \tanft{\RV{\theta} - \alpha}}{\sqrt{1 + \expect{\tantft{\RV{\theta} - \alpha}}}} 
\end{equation}
The details of calculation is found in the following theorem. 

\begin{theorem}[Spectrum of Grover Operator]
\label{thm: spectrum_G}
The spectrum of the Grover operator $\mathcal{G}$ for a real univariate variable $\RV{\theta}$ contains all eigenvalues of the form $e^{i \alpha}$ where $-\pi < \alpha \leqslant \pi$ satisfies Eq.~(\ref{eq: alpha_eigen}) (assuming $\tanft{\theta_k - \alpha}$ do not blow up for any $k \in \Omega$). Corresponding to $\alpha$, the eigenvector, specified by $\ket{\RV{\psi}}$ where $\RV{\psi}$ is a complex random variable, satisfy Eq.~(\ref{eq: eigen_vecs}).

In the event that there are multiple outcomes in $\RV{\theta}$ that share the same value (mod $2\pi$). For each possible such value $\varphi \in (-\pi, \pi]$, let $S = \{k \in \Omega: \mathcal{\theta}_k = \varphi \pmod{2\pi}\}$. We find, in addition, an eigenvalue $e^{i\alpha}$ satisfying $\varphi - \alpha = \pi \pmod{2\pi}$, and eigenvectors $\ket{\RV{\psi}}$ (where $\RV{\psi}$ is a complex random variable) with 
\begin{equation}
\left\{
\begin{gathered}
    \psi_k = 0 \quad \forall k \not\in S \hfill \\
    \expect{\RV{\psi}} = 0 \hfill \\
\end{gathered}
\right.
\end{equation}
\end{theorem}

\begin{proof}
First, rewrite Eq.~(\ref{eq: eigenvalue_eq}) as:
\begin{equation}
\expect{e^{i (\RV{\theta} - \alpha)}\RV{\psi}} = \frac{1 + e ^ {i (\RV{\theta} - \alpha)}}{2} \RV{\psi}
\end{equation}

As a sanity check, $\alpha$ is covariant under translations in $\RV{\theta}$ (i.e., if we send $\RV{\theta} \to \RV{\theta} + \beta$ then $\alpha \to \alpha + \beta$). This equation implies that $\frac{ 1 + e ^ {i (\RV{\theta} - \alpha)} }{2} \RV{\psi}$ is a constant, i.e., 
\begin{equation}
\label{eq: simple_eigen_eq_1}
    \frac{1 + e ^ {i (\theta_k - \alpha)} }{2} \psi_k = C \quad \forall k \in \Omega
\end{equation}
where $C$ is some constant. But then 
\begin{equation}
    \begin{aligned}
        C & = \expect{e^{i (\RV{\theta} - \alpha)}\RV{\psi}} = \expect { \left(1 + e^{i (\RV{\theta} - \alpha)}\right)\RV{\psi}} - \expect{\RV{\psi}} \\
        & = 2C - \expect{\RV{\psi}} \\
    \end{aligned}
\end{equation}
so $C = \expect{\psi}$. The equation becomes:
\begin{equation}
\label{eq: simple_eigen_eq_2}
\expect{\RV{\psi}} = \frac{1 + e ^ {i (\RV{\theta} - \alpha)}}{2} \RV{\psi}
\end{equation}
By the same strategy one can also show that the above equation implies backwards to Eq.~(\ref{eq: simple_eigen_eq_1}). So they are equivalent. \footnote{We can also take the expectation on top of Eq.~(\ref{eq: simple_eigen_eq_1}) and also get $\expect{e^{i(\RV{\theta}-\alpha)}\RV{\psi}} = \expect{\RV{\psi}}$ and vise versa}

We now assume $\alpha$ is such that there is no $k \in \Omega$ such that $e^{i (\theta_k - \alpha)} = -1$. When this is not true (i.e., $\exists k \in \Omega \; e^{i (\theta_k - \alpha)} = -1$), the case can be found in Appendix~\ref{app: eigen_continued} because it is not relevant to later discussions in this paper.

With this assumption we derive
\begin{equation}
    \RV{\psi} = \frac{\expect{\RV{\psi}}}{\frac{1 + e^{i (\RV{\theta} - \alpha)}}{2}} = \expect{\RV{\psi}} \left( 1 - i \tanft{\RV{\theta} - \alpha}\right)
\end{equation}
where we used the fact that
\begin{equation}
\frac{1}{\frac{1+e^{i (\RV{\theta} - \alpha)}}{2}} = \frac{e^{-i \frac{\RV{\theta} - \alpha}{2}}}{ \frac{e^{-i \frac{\RV{\theta} - \alpha}{2}} + e^{i \frac{\RV{\theta} - \alpha}{2}}}{2}}  = \frac{e^{-i \frac{\RV{\theta} - \alpha}{2}}}{\cos \frac{\RV{\theta} - \alpha}{2} } = 1 - i \tanft{\RV{\theta} - \alpha} 
\end{equation}

We have now found the eigenvector corresponding to eigenvalue $e^{i \alpha}$. Note that $\expect{\RV{\psi}}$ is effectively a overall constant, which can be set such that the state $\ket{\RV{\psi}}$ is normalized. We find 
\begin{equation}
\bra{\RV{\psi}}\ket{\RV{\psi}} = \expect{\RV{\psi}}^2 \left(1 + \expect{\tantft{\RV{\theta} - \alpha}}\right)
\end{equation}

Ignoring phases we can pick $\expect{\RV{\psi}} = \frac{1}{\sqrt{\expect{\RV{\psi}}}}$, which gives 
\begin{equation}
    \RV{\psi} = \frac{1 - i \tanft{\RV{\theta} - \alpha}}{\sqrt{1 + \expect{\tantft{\RV{\theta} - \alpha}}}} 
\end{equation}

This proves the first half of the theorem. As mentioned, proof to the second half is not important but can be found in Appendix~\ref{app: eigen_continued}.
\end{proof}

A natural result is that if we were to do phase estimation with input state $\ket{\bm{1}}$, the probability of outcome $\alpha$ is $\frac{1}{1 + \expect{\tantft{\RV{\theta} - \alpha}}}$, as formally stated in the following remark.
\begin{corollary}[Probability]
\label{cry: phase_probability}
Given univariate random variable $\RV{X}$ and $\RV{\theta} = 2 \arctan \RV{X}$. Let $-\pi < \alpha \leqslant \pi$ satisfy $\expect{\tanft{\RV{\theta} - \alpha}} = 0$ (and that $\tanft{\RV{\theta}- \alpha}$ does not blow up). 

The Grover operator $\mathcal{G}$ from $\RV{X}$ satisfies:
\begin{equation}
    \label{eq: prob_Grover}
    \probM{\RV{\varphi}}{\arg \mathcal{G}}{\ket{\bm{1}}}{\RV{\varphi} = \alpha} = \left|\bra{\RV{\psi}}\ket{\bm{1}}\right|^2  = \frac{1}{1 + \expect{\tantft{\RV{\theta} - \alpha}}}
\end{equation}
where $\ket{\RV{\psi}}$ is the eigenvector just specified in Theorem~\ref{thm: spectrum_G}
\end{corollary}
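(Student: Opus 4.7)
The plan is to split the corollary into its two equalities, both of which follow quickly from Theorem~\ref{thm: spectrum_G} and the inner-product remark $\bra{\RV{X}}\ket{\RV{Y}} = \expect{\RV{X}^* \RV{Y}}$.

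For the first equality, I would invoke the standard Born rule. Theorem~\ref{thm: spectrum_G} tells us that $\ket{\RV{\psi}}$ is a (normalized) eigenvector of $\mathcal{G}$ with eigenvalue $e^{i\alpha}$, hence also an eigenvector of $\arg(\mathcal{G})$ with eigenvalue $\alpha$. Since $\ket{\RV{\psi}}$ lies in the $\alpha$-eigenspace and the assumption that $\tanft{\RV{\theta} - \alpha}$ does not blow up for any $k$ forbids the degenerate scenario treated in the second half of Theorem~\ref{thm: spectrum_G}, the projector onto that eigenspace (as it pertains to the component of $\ket{\bm{1}}$ we are interested in) is $\ket{\RV{\psi}}\bra{\RV{\psi}}$. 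The measurement probability is then $\bra{\bm 1}\ket{\RV{\psi}}\bra{\RV{\psi}}\ket{\bm 1} = |\bra{\RV{\psi}}\ket{\bm 1}|^2$.

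For the second equality, the strategy is direct computation. Using the inner-product remark with $\RV{X} = \RV{\psi}$ and $\RV{Y} = \bm 1$ gives $\bra{\RV{\psi}}\ket{\bm 1} = \expect{\RV{\psi}^*} = \expect{\RV{\psi}}^*$. Substituting the explicit form from Eq.~(\ref{eq: eigen_vecs}) and using linearity of expectation,
\begin{equation}
\expect{\RV{\psi}} = \frac{1 - i\,\expect{\tanft{\RV{\theta} - \alpha}}}{\sqrt{1 + \expect{\tantft{\RV{\theta} - \alpha}}}}.
\end{equation}
The eigenvalue condition (\ref{eq: alpha_eigen}), namely $\expect{\tanft{\RV{\theta} - \alpha}} = 0$, is precisely what kills the imaginary part of the numerator, leaving $\expect{\RV{\psi}} = 1/\sqrt{1 + \expect{\tantft{\RV{\theta}-\alpha}}}$, which is real. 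Taking the squared modulus yields the claimed $1/(1 + \expect{\tantft{\RV{\theta}-\alpha}})$.

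There is no real obstacle here; the corollary is essentially a sanity-check unpacking of Theorem~\ref{thm: spectrum_G}. The only subtle point worth a sentence of justification is why Born's rule applied to the single eigenvector $\ket{\RV{\psi}}$ suffices — this is handled by noting that under the non-degeneracy assumption, any other eigenvector of $\mathcal{G}$ with the same eigenvalue would have to coincide with $\ket{\RV{\psi}}$ up to a phase (Eq.~(\ref{eq: simple_eigen_eq_2}) determines $\RV{\psi}$ up to the overall constant $\expect{\RV{\psi}}$), so the $\alpha$-eigenspace is one-dimensional and $\ket{\RV{\psi}}\bra{\RV{\psi}}$ is indeed its projector.
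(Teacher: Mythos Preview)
Your proposal is correct and matches the paper's approach: the paper presents the corollary as an immediate consequence of Theorem~\ref{thm: spectrum_G} without a separate proof, and your two-step unpacking (Born rule plus direct computation of $\expect{\RV{\psi}}$ from Eq.~(\ref{eq: eigen_vecs})) is exactly the intended justification. In fact the value $\expect{\RV{\psi}} = 1/\sqrt{1 + \expect{\tantft{\RV{\theta}-\alpha}}}$ already appears in the normalization step of the proof of Theorem~\ref{thm: spectrum_G}, so your second equality is essentially a restatement of that line; your extra care about one-dimensionality of the $\alpha$-eigenspace is a welcome clarification the paper leaves implicit.
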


Combining results from Theorem~\ref{thm: spectrum_G} and Corollary~\ref{cry: phase_probability}, we can approximately understand the spectrum of the Grover operator. As an intuitive justification of our later results, we first rewrite Eq.~(\ref{eq: alpha_eigen}) as:
\begin{equation}
    \label{eq: alpha_eigen_n1}
    \expect{\frac{\tanft{\RV{\theta}} - \tanft{\alpha}}{1 + \tanft{\RV{\theta}}\tanft{\alpha}}} = 0
\end{equation}
If we made assumptions such that $\tanft{\RV{\theta}}\tanft{\alpha}$ is sufficiently small, then the above equation simply reduces to 
\begin{equation}
    \tanft{\alpha} \approx \expect{\tanft{\RV{\theta}}}
\end{equation}
This is a great result. However, in practice $\tanft{\theta_k}$ can be very large for some $k \in \Omega$. During mean value estimation, the algorithm should iteratively refine the estimation of the mean value, so the best assumptions we can make is that $\expect{\tanft{\RV{\theta}}} = O(\varepsilon)$ and $\tanft{\theta_k} = O(\frac{1}{\varepsilon})$. In that case we expect $\tanft{\alpha} = O(\varepsilon)$ such that $\tanft{\RV{\theta}}\tanft{\alpha}$ is some sufficiently small constant. Additionally, $\expect{\tantft{\RV{\theta}}} = s^2$ is some small constant.  

Using the fact that $\frac{1}{1-x} = \sum_{n \geqslant 0} x^n$ we can rewrite Eq.~(\ref{eq: alpha_eigen_n1}) as:
\begin{equation}
    \begin{aligned}
          \expect{\frac{\tanft{\RV{\theta}} - \tanft{\alpha}}{1 + \tanft{\RV{\theta}}\tanft{\alpha}}}  & = \sum_{n \geqslant 0} \expect{(\tanft{\RV{\theta}} - \tanft{\alpha})\left(- \tanft{\RV{\theta}} \tanft{\alpha}\right)^n} \\
        & = \sum_{n \geqslant 0} (-1)^n \left(\expect{\tannft[n+1]{\RV{\theta}}} \tannft[n]{\alpha} + \expect{\tannft[n]{\RV{\theta}}} \tannft[n+1]{\alpha}\right) \\
        & = 0
    \end{aligned}
\end{equation}

Our previous assumption means that $\expect{\tannft[n+2]{\RV{\theta}}} \leqslant \expect{|\tanft{\RV{\theta}}|^{n+2}}= s^2 O(\frac{1}{\varepsilon})^n$ for $n \geqslant 0$, where $O(\frac{1}{\varepsilon})$ represents some small constant divided by $\varepsilon$. (so the ``1'' in big $O$ specifically refers to a small constant) The above equation then simplifies to:
\begin{equation}
    \begin{aligned}
          \expect{\frac{\tanft{\RV{\theta}} - \tanft{\alpha}}{1 + \tanft{\RV{\theta}}\tanft{\alpha}}} & = \expect{\tanft{\RV{\theta}}} - s^2 \sum_{n\geqslant 0} (-1)^n O\left(\frac{1}{\varepsilon}\right)^{n} \tannft[n+1]{\alpha}  - \tanft{\alpha} + O(\varepsilon^3) \\
        & = \expect{\tanft{\RV{\theta}}} - \left(1 + s^2 \sum_{n\geqslant 0}(-1)^n O(1)^{n} \right) \tanft{\alpha} + O(\varepsilon^3) \\
        & = \expect{\tanft{\RV{\theta}}} - \left(1 + s^2 O(1)\right) \tanft{\alpha} + O(\varepsilon^3) \\
        & = 0
    \end{aligned}
\end{equation}
where $O(1)$ represents constants small enough such that the summation $\sum_{n\geqslant 0}(-1)^n O(1)^{n}$ do not diverge. 

This implies that 
\begin{equation}
    \tanft{\alpha} = \frac{\expect{\tanft{\RV{\theta}}}}{1 + s^2 O(1)} + O(\varepsilon^3)
\end{equation}

Since $\alpha$ is itself a small quantity, we might as well write is as:
\begin{equation}
    \begin{aligned}
        \alpha & = 2 \frac{\expect{\tanft{\RV{\theta}}}}{1 + s^2 O(1)} + O(\varepsilon^3) \\
        & = 2 \expect{\tanft{\RV{\theta}}}\left(1 - s^2 O(1)\right) \\
    \end{aligned}
\end{equation}
where we used the fact that $\varepsilon \leqslant s$, as $\expect{\tanft{\RV{X}}}^2 \leqslant \expect{\tantft{\RV{X}}}$.

Since $\expect{\tantft{\theta}} = s^2$, the intuitive understanding is that we set the constants in our assumptions such that for each $k \in \Omega$, $\tanft{\theta_k + \alpha}$ do not differ from $\tanft{\theta_k}$ by more than $O(1)$ factor. Therefore we find that $\expect{\tantft{\theta + \alpha}} = s^2 O(1)$, which gives the measurement probability
\begin{equation}
    \probM{\RV{\varphi}}{\arg \mathcal{G}}{\ket{\bm{1}}}{\RV{\varphi} = \alpha} = 1 - s^2 O(1)
\end{equation}

The above discussions offer an intuitive understanding of eigenvalue of the Grover operator near $\alpha = 2 \expect{\tanft{\theta}}$. We formalize the statements in Theorem~\ref{thm: alpha_close} below. 
\begin{theorem}[Key Property of Grover Operator Spectrum]

\label{thm: alpha_close}

Fix some parameters $\varepsilon, \lambda, s_0$ with $\varepsilon \leqslant s_0 \leqslant \frac{1}{6}$, $\lambda \geqslant 5$, and $c \leqslant \frac{1}{4}$.

Consider random variable $\RV{\theta}$ satisfying 
\begin{itemize}
    \item $\forall k \in \Omega \; |\tanft{\theta_k}| \leqslant \frac{1}{\lambda \varepsilon}$;
    \item $\left|\expect{\tanft{\RV{\theta}}}\right| \leqslant \varepsilon$;
    \item $\expect{\tantft{\RV{\theta}}} \leqslant s_0^2$;
\end{itemize}
Let $\mathcal{G}$ be the Grover operator for $\RV{\theta}$. There is an eigenvalue $e^{i\alpha}$ (where $\alpha \in (-\pi, \pi]$) such that 
\begin{itemize}
    \item $\left|\alpha - 2\expect{\tanft{\RV{\theta}}}\right| \leqslant 2 c \varepsilon$;
    \item $\probM{\RV{\varphi}}{\arg \mathcal{G}}{\ket{\bm{1}}}{\RV{\varphi} = \alpha} \geqslant 1 - \delta$;
\end{itemize}
with the added constraint:
\begin{itemize}
    \item $c \geqslant \frac{7.635 s_0^2}{1+s_0^2}$; \footnote{It is clear $\frac{7.635 s_0^2}{1+s_0^2} \leqslant \frac{1}{4}$ so $c$ exists}.
    \item $\delta \geqslant 1.7983 s_0^2 + 7.480 s_0 \varepsilon$
\end{itemize}
\end{theorem}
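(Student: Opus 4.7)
The plan is to apply the Intermediate Value Theorem to the function
\[
f(\alpha) := \expect{\tanft{\RV{\theta}-\alpha}},
\]
whose zeros are, by Theorem~\ref{thm: spectrum_G}, exactly the eigenvalues described in that theorem's first half, and then read off the measurement probability at the located $\alpha$ via Corollary~\ref{cry: phase_probability}. Since $f'(\alpha) = -\tfrac{1}{2}\expect{\sec^2\tfrac{\RV{\theta}-\alpha}{2}} < 0$ wherever the integrand stays finite, $f$ is strictly decreasing on any window where every $\tanft{\theta_k-\alpha}$ remains bounded, so a unique zero will sit between any two test points of opposite sign. Writing $\RV{T}:=\tanft{\RV{\theta}}$ and $t:=\tanft{\alpha}$, the hypotheses $|\RV{T}| \leq 1/(\lambda\varepsilon)$ together with $|\alpha|\lesssim \varepsilon$ give $|\RV{T}t| \leq 1/\lambda \leq 1/5$, so the denominators $1+\RV{T}t$ stay at least $4/5$ throughout the target window $[2\expect{\RV{T}}-2c\varepsilon,\,2\expect{\RV{T}}+2c\varepsilon]$ and no blow-up occurs there.

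Next I would apply the tangent subtraction formula to obtain $f(\alpha) = \expect{(\RV{T}-t)/(1+\RV{T}t)}$, expand $(1+\RV{T}t)^{-1}$ as a convergent geometric series, and integrate term by term to get
\[
f(\alpha) = \expect{\RV{T}} - t\bigl(1 + \expect{\RV{T}^2}\bigr) + R(t),
\]
where $R(t)$ collects $\expect{\RV{T}^n}t^{n-1}$ contributions with $n \geq 3$ together with the $\expect{\RV{T}}t^2$ piece. Bounding $\expect{|\RV{T}|^n} \leq s_0^2/(\lambda\varepsilon)^{n-2}$ via the hypotheses, the tail sums geometrically to $O(s_0^2\varepsilon/\lambda) + O(\varepsilon^3)$ with explicit prefactors. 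Evaluating at $\alpha_\pm := 2\expect{\RV{T}} \pm 2c\varepsilon$, the dominant contribution to $f(\alpha_\pm)$ is $\mp c\varepsilon(1+\expect{\RV{T}^2})$, while the remaining pieces (including an $-\expect{\RV{T}}\expect{\RV{T}^2}$ correction from the difference between $\tanft{2\expect{\RV{T}}}$ and $\expect{\RV{T}}$, plus Taylor errors in $t_\pm=\tanft{\alpha_\pm}$) are all controlled by $\varepsilon s_0^2$ up to explicit constants. Forcing the dominant term to overwhelm every correction produces precisely the threshold $c \geq 7.635\,s_0^2/(1+s_0^2)$, at which point $f(\alpha_-) > 0 > f(\alpha_+)$ and IVT delivers $\alpha \in (\alpha_-,\alpha_+)$.

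For the probability bound, Corollary~\ref{cry: phase_probability} gives
\[
\probM{\RV{\varphi}}{\arg\mathcal{G}}{\ket{\bm{1}}}{\RV{\varphi}=\alpha} = \frac{1}{1+\expect{\tantft{\RV{\theta}-\alpha}}} \geq 1 - \expect{\tantft{\RV{\theta}-\alpha}},
\]
so it remains to upper bound $\expect{\tantft{\RV{\theta}-\alpha}}$ by $1.7983\,s_0^2 + 7.480\,s_0\varepsilon$. Using $\tantft{\RV{\theta}-\alpha} = (\RV{T}-t)^2/(1+\RV{T}t)^2$ together with either the uniform denominator bound $(1+\RV{T}t)^{-2} \leq (1-1/\lambda)^{-2}$ or its refined series form $1-2\RV{T}t+3(\RV{T}t)^2-\cdots$, combined with $\expect{(\RV{T}-t)^2} = \expect{\RV{T}^2} - 2t\expect{\RV{T}} + t^2 \leq s_0^2 + O(\varepsilon^2)$ and Cauchy--Schwarz on higher cross moments such as $|\expect{\RV{T}^n(\RV{T}-t)^2}| \leq \sqrt{\expect{\RV{T}^{2n}}\expect{(\RV{T}-t)^4}}$, one obtains a bound of the form $A s_0^2 + B s_0\varepsilon$ with $A, B$ explicit. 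The main obstacle will be precisely this constant bookkeeping: no individual step is conceptually subtle, but to hit the specific numbers $7.635$, $1.7983$, and $7.480$ each geometric tail and each Cauchy--Schwarz step must be tracked with its sharpest multiplicative constant, and the interplay among $c$, $s_0$, $\varepsilon$, and $\lambda$ must be chased very carefully so that the final expressions match the stated inequalities.
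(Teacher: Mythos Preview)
Your proposal is correct and follows essentially the same architecture as the paper: apply the Intermediate Value Theorem to $f(\alpha)=\expect{\tanft{\RV{\theta}-\alpha}}$ on the window $[2\mu-2c\varepsilon,\,2\mu+2c\varepsilon]$, show continuity/monotonicity there via the bound $|\RV{T}\,t|<1$, force a sign change at the endpoints by comparing to a linear approximation whose error is controlled by $s_0^2\varepsilon$, and then read off the probability from Corollary~\ref{cry: phase_probability} together with a moment bound on $\expect{\tantft{\RV{\theta}-\alpha}}$.

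The one organizational difference worth flagging: you propose bounding the remainder by expanding $(1+\RV{T}t)^{-1}$ as a geometric series and summing the tail moment-by-moment (this is exactly the heuristic the paper sketches just before the theorem statement), whereas the paper's formal proof instead Taylor-expands $f$ in the real variable $\beta$ and packages all higher-order contributions into a single Lagrange remainder $\tfrac{1}{2}f''(\beta')\beta^2$, which is then bounded using $\expect{|\tannft[3]{\RV{\theta}-\beta'}|}$. The Taylor route buys slightly cleaner constant bookkeeping because it reduces the tail to one quantity rather than an infinite sum, which is why the paper can land on the specific numbers $7.635$, $1.7983$, $7.480$ with a single chain of inequalities; your series route is equally valid but, as you anticipate, makes the constant tracking a bit more tedious. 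One small correction: your bound $|\RV{T}t|\le 1/\lambda$ tacitly assumes $|t|\le\varepsilon$, but on the window you have $|\alpha|\le 2(1+c)\varepsilon$ so $|t|\le\eta\varepsilon$ with $\eta\approx 1.27$ (the paper tracks this carefully), giving $|\RV{T}t|\le\eta/\lambda$ rather than $1/\lambda$; this is harmless but feeds into the explicit constants.
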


\begin{proof}

Without loss of generality set $ \forall k\in\Omega \; \theta_k, \alpha \in (-\pi,\pi]$. Let $\expect{\tanft{\RV{\theta}}} = \mu$, $\Var \left(\tanft{\RV{\theta}}\right) = \sigma^2$, and $\expect{\tantft{\RV{\theta}}} = s^2 = \sigma^2 + \mu^2$.  

Define a function 
\begin{equation}
    f(\beta) = \expect{\tanft{\RV{\theta} - \beta}}
\end{equation}
where $\beta$ is restricted to range
\begin{equation}
    B = [-2 (1+c) \varepsilon, 2 (1+c) \varepsilon)]
\end{equation}
First, we would like to show that there is some $\alpha$ from the range:
\begin{equation}
    A = \left[ 2\left(\mu - c \varepsilon\right), 2\left(\mu + c \varepsilon\right)\right]
\end{equation}
such that such that $f(\alpha)$ intersects the x axis. Such $\alpha$ will imply an eigenvalue of $e^{i\alpha}$ for $\mathcal{G}$ via Theorem~\ref{thm: spectrum_G}, as we can verify the the endpoints of $B$ are within $(-\pi, \pi]$.  It is clear that $A \subseteq B$. 

First, we want to show that the function is continuous in range $B$. To do that, first, we bound $\tanft{\beta}$ with Taylor remainder theorem:
\begin{equation}
    | \tanft{\beta} | \leqslant \frac{|\beta|}{2} + \frac{C |\beta|^3}{6} 
    \leqslant (1 + c + \kappa) \varepsilon 
\end{equation}
where $C = \left. \pdv[3]{\beta} ( \tanft{\beta})\right|_{\beta = (1+c) \varepsilon} \leqslant 0.2962 $ (because we can show that $\pdv[3]{\beta} ( \tanft{\beta})$ is an even function that is increasing function in $\beta \in B \cup \mathbb{R}^+$), and $\kappa = \frac{4}{3} C (1+c)^3 \varepsilon^2 \leqslant 0.3950  (1+c)^3 \varepsilon^2$ or simply $\kappa \leqslant 0.02143$. We use the fact that $(1+c) \varepsilon \leqslant \frac{5}{12}$. We can further define $\eta = 1 + c + \kappa \leqslant 1.27143$ such that $\tanft{\beta} \leqslant \eta \varepsilon$.

Fix $\beta \in I$. It is clear that  $\lambda > \eta$. Thus $\forall k \in \Omega \;  |\tanft{\theta_k} \tanft{\beta}| \leqslant \frac{\eta}{\lambda} < 1$, i.e., $|\tanft{\theta_k}| \leqslant \left|\frac{1}{\tanft{\beta}}\right|$. This means that $|\frac{\beta}{2}| \leqslant \frac{\pi}{2} - |\frac{\theta_k}{2}|$. Therefore, $|\frac{\theta_k - \beta}{2}| \leqslant \frac{\pi}{2}$. This shows $f$ is continuous in $B$ and hence also non-increasing because $\tan$ is a non-decreasing function. So, to show that a solution exists in $A$, it suffices to consider the endpoints of interval $A = [2(\mu - c \varepsilon), 2(\mu + c \varepsilon)]$ and show that $f(2(\mu - c \varepsilon)) \geqslant 0$ and $f(2(\mu + c \varepsilon)) \leqslant 0$.

For later convenience, for $k \in \Omega$ we continue to find
\begin{equation}
    \label{eq: lemma_shift_factor}
    \left|\tanft{\theta_k - \beta}\right| \leqslant \frac{\left|\tanft{\theta_k}\right| + \left|\tanft{\beta}\right|}{1 - \left|\tanft{\theta_k} \tanft{\beta}\right|} \leqslant \left(\left|\tanft{\theta_k}\right| + \eta \varepsilon \right)\frac{\lambda}{\lambda-\eta} = \chi \left( \left|\tanft{\theta_k}\right| + \eta\varepsilon \right)
\end{equation}
For simplicity we define $\chi = \frac{\lambda}{\lambda - \eta}$. By Cauchy-Schwarz inequality we obtain
\begin{equation}
    \label{eq: 1-norm_bound}
    \expect{\left|\tanft{\RV{\theta}}\right|} \leqslant \sqrt{\expect{\tannft[2]{\RV{\theta}}}\expect{\bm{1}^2}} = s
\end{equation}
With Eq.~(\ref{eq: lemma_shift_factor}) and Eq.~(\ref{eq: 1-norm_bound}) we find:
\begin{equation}
    \label{eq: 1-norm_bound_shift}
    \expect{\left|\tanft{\RV{\theta} - \beta}\right|} \leqslant \chi \left( s + \eta \varepsilon \right)
\end{equation}
along with 
\begin{equation}
    \label{eq: 2-norm_bound_shift}
    \begin{aligned}
        \expect{\tantft{\RV{\theta} - \beta}} 
        & \leqslant \chi^2 \left( \expect{\tantft{\RV{\theta}}} + 2 \eta  \expect{\left|\tanft{\RV{\theta}}\right|} + \eta^2 \varepsilon^2 \right) \\
        & \leqslant \chi^2(s+\eta\varepsilon)^2 \leqslant \chi^2  (s^2 + (\eta^2 + 2 \eta) s \varepsilon)
    \end{aligned}
\end{equation}
Meanwhile, from $\left|\tanft{\theta_k}\right| \leqslant \frac{1}{\lambda\varepsilon} \; \forall k \in \Omega$ we obtain $\expect{\tannft[3]{\RV{\theta}}} \leqslant \frac{1}{\lambda\varepsilon} s^2$. With the same strategy we can then show
\begin{equation}
    \label{eq: 3-norm_bound_shift}
    \begin{aligned}
        \expect{\left|\tannft[3]{\RV{\theta} - \beta}\right|} & \leqslant \chi^3 \left(\frac{s^2}{\lambda \varepsilon} + 3 s^2 \eta \varepsilon + 3 s \eta^2 \varepsilon^2 + \eta ^3 \varepsilon^3 \right) \\
        & \leqslant \chi^3 \left(\frac{s^2}{\lambda \varepsilon} + (\eta^3 + 3 \eta^2 + 3 \eta) s^2 \varepsilon \right) \\ 
    \end{aligned}
\end{equation}

We note the derivatives of $f$:
\begin{equation}
    \left\{\begin{gathered}
        f'(\beta) = -\frac{1}{2} \left(1 + \expect{\tantft{\RV{\theta} - \beta}}\right)\hfill \\
        \hfill \\
        f''(\beta) = \frac{1}{2}\left(\expect{\tanft{\RV{\theta} - \beta}} + \expect{\tannft[3]{{\RV{\theta} - \beta}}}\right)
    \end{gathered}\right.
\end{equation}
This gives a Taylor expansion around $\beta = 0$:
\begin{equation}
f(\beta) = f(0) + f'(0) \beta + \frac{1}{2} f''(\beta') \beta^2 = \mu - \frac{1}{2} (1 + s^2) \beta + \frac{1}{2} f''(\beta') \beta^2
\end{equation}
where $\beta'$ is between $0$ and $\beta$. 

With results from Eqs.~(\ref{eq: 1-norm_bound_shift}) and (\ref{eq: 3-norm_bound_shift}) we bound $f''(\beta')$ with 

\begin{equation}
    \begin{aligned}
        \left|f''(\beta')\right| & \leqslant \chi\left(\frac{1}{2} s + \frac{\eta}{2} \varepsilon\right) + \chi^3 s^2 \left(\frac{1}{2 \lambda \varepsilon} + \frac{\eta^3 + 3\eta^2 + \eta}{2} \varepsilon\right) \\
        & \leqslant \chi^3 s^2 \frac{1}{2 \lambda \varepsilon} + s \left(\chi \frac{1+\eta}{2} + \chi^3 \frac{\eta^3 + 3 \eta^2 + 3 \eta}{2} s \varepsilon \right) s \\
        & = \chi^3 s^2 \frac{1}{2 \lambda \varepsilon} + D s  \\
    \end{aligned} 
\end{equation}
where we used the assumption $\varepsilon \leqslant s$. We also defined:
\begin{equation}
D = \chi \frac{1+\eta}{2} + \chi^3 \frac{\eta^3 + 3 \eta^2 + 3 \eta}{2} s \varepsilon \leqslant  1.13572 \chi + 0.14888 \chi^3
\end{equation}

Let's define $g(\beta) = \mu - \frac{1}{2} (1+s^2) \beta$ to be the linear approximation around for $f(\beta)$. We find that:
\begin{equation}
    \left\{ \begin{gathered}
        g\left(2(\mu - c \varepsilon)\right) = - s^2 \mu + (1+s^2) c \varepsilon \geqslant (c + s^2(c-1)) \varepsilon \hfill \\
    g\left(2(\mu + c \varepsilon)\right) = - s^2 \mu - (1+s^2) c \varepsilon \leqslant -(1+s^2) c \varepsilon \hfill \\
    \end{gathered} \right.
\end{equation}
Since $-(1+s^2)c = - (c+s^2c) \leqslant - (c+s^2c - c) = - (c+s^2(c-1)) $ we combine these two equations into:
\begin{equation}
    \label{eq: bound_g}
    \left\{ \begin{gathered}
        g\left(2(\mu - c \varepsilon)\right) \geqslant (c + s^2(c-1)) \varepsilon \hfill \\
        g\left(2(\mu + c \varepsilon)\right) \leqslant - (c + s^2(c-1)) \varepsilon \hfill \\
    \end{gathered} \right.
\end{equation}

Let the difference between $f$ and $g$ at $2(\mu \pm c \varepsilon)$ be $\Delta_{\pm}$, we can bound it with
\begin{equation}
    \label{eq: bound_Delta}
    \begin{aligned}
         \Delta_{\pm} & = \left| f \left(2 (\mu \pm c \varepsilon) \right)  - g\left(2 (\mu \pm c \varepsilon) \right)\right| = \frac{1}{2} |f'(\beta'_{\pm})| (2 (\mu \pm c\varepsilon))^2 \\
        & \leqslant \frac{1}{2} \left(  \chi^3 s^2 \frac{1}{2 \lambda \varepsilon} + D s \right) \times 4 (1+c)^2 \varepsilon^2 = \chi^3 (1+c)^2 \frac{1}{\lambda} s^2 \varepsilon + 2D(1+c)^2 s \varepsilon^2 \\
        & \leqslant (1+c)^2 \left( \frac{\chi^3}{\lambda} + 2D \right) s^2 \varepsilon
    \end{aligned}
\end{equation}
where $\beta'_{-}, \beta'_+ \in B$, $(\mu \pm c\varepsilon)^2 \leqslant (1+c)^2 \varepsilon^2$ because $|\mu| \leqslant \varepsilon$.

Using $\lambda \geqslant 5$, we obtain $\chi \leqslant 1.3410$ and $D \leqslant 1.88203$. This gives $\Delta_{\pm} \leqslant 6.635$ via Eq.~(\ref{eq: bound_Delta}). The constraint we defined in theorem statement $c \geqslant \frac{7.635 s_0^2}{1+s_0^2}$ thus becomes:
\begin{equation}
    \begin{gathered}
        c \geqslant \frac{7.635 s_0^2}{1+s_0^2} \geqslant \frac{7.635 s^2}{1+s^2} \hfill \\
        \therefore (1+s^2)c - s^2 \geqslant 6.635 s^2 \hfill \\ 
        \therefore (c+(s^2)(c-1)) \varepsilon \geqslant 6.635s^2 \varepsilon \geqslant \Delta_{\pm} \hfill \\ 
    \end{gathered} 
\end{equation}
Combined with Eq.~(\ref{eq: bound_g}) we then reach the conclusion that $f(2(\mu - c\varepsilon)) \geqslant 0$ and $f(2(\mu+c \varepsilon)) \leqslant 0$. This demonstrates that there is a solution $\alpha \in [2(\mu - c \varepsilon), 2 (\mu + c \varepsilon)]$. 

As the second part of the proof, we show that the probability for $\alpha$ is sufficiently high. First, Eq.~(\ref{eq: 2-norm_bound_shift}) becomes
\begin{equation}
    \expect{\tantft{\RV{\theta} - \beta}} \leqslant 1.7983 s^2 + 7.480 s \varepsilon 
\end{equation}
Clearly $\expect{\tantft{\RV{\theta} - \beta}} < 1$, by Eq.~(\ref{eq: prob_Grover}) we obtain:
\begin{equation}
    \begin{aligned}
        \probM{\RV{\varphi}}{\arg \mathcal{G}}{\ket{\bm{1}}}{\RV{\varphi} = \alpha} & = \frac{1}{1 + \expect{\tantft{\RV{\theta} - \alpha}}} \\
        & \geqslant 1 - \expect{\tantft{\RV{\theta} - \alpha}} \\
        & \geqslant  1 - (1.7983 s^2 + 7.480 s \varepsilon) \\
        & \geqslant 1 - \delta
    \end{aligned}
\end{equation}
The last inequality can be derived from the constraint $\delta \geqslant 1.7983 s_0^2 + 7.480 s_0 \varepsilon$.
\end{proof}



\subsection{Quantum Subroutine}   

Theorem~\ref{thm: alpha_close} gives a pretty good result of the properties of the Grover operator. However, there are some nasty constraints and irregularities, namely a factor 2 in the eigenvalue, a lot of $\tan$ functions, and an enforced upper bound for all instances of the random variables. The following theorem cleans things up:

\begin{theorem}[Theorem~\ref{thm: spectrum_G} Packaged]
    \label{thm: alpha_RV}
    Consider a univariate random variable $\RV{X}$ with
    \begin{itemize}
        \item $\left| \expect{\RV{X}} \right| \leqslant \varepsilon$;
        \item $\expect{\RV{X}^2} \leqslant s_0^2$
    \end{itemize}
    where $\varepsilon \leqslant s_0 \leqslant \frac{1}{3}$.
    Consider the Grover operator $\mathcal{G}$ on random variable $\RV{\theta} = 2 \arctan{\left(\frac{1}{2} \truncate{\RV{X}}{\frac{1}{\lambda \varepsilon}}\right)}$ with $\lambda = \frac{5}{4 - 5 s_0^2}$, it has an eigenvalue $e^{i\alpha}$ with eigenvector $\ket{\alpha}$, where $\alpha \in (-\pi, \pi]$, such that 
    \begin{itemize}
        \item $\left|\alpha - \expect{\RV{X}}\right| \leqslant \frac{3.1588 s_0^2}{1 - 1.25s_0^2}\varepsilon$ ;
        \item $\left|\bra{\alpha}\ket{\bm{1}}\right|^2 \geqslant 1 - \frac{1}{4} \left( 1.7983 s_0^2 + 7.480 s_0 \frac{\varepsilon}{1 - 1.25 s_0^2} \right)$.
    \end{itemize}
\end{theorem}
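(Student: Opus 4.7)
The plan is a direct reduction to Theorem~\ref{thm: alpha_close}, with carefully chosen effective parameters, followed by translating its guarantee back from $\RV{\theta}$-language to $\RV{X}$-language. Since $\RV{\theta} = 2 \arctan(\frac{1}{2} \truncate{\RV{X}}{K})$ with $K = 1/(\lambda \varepsilon)$, the identity $\tanft{\RV{\theta}} = \frac{1}{2} \truncate{\RV{X}}{K}$ is immediate and gives the pointwise bound $|\tanft{\theta_k}| \leq 1/(2 \lambda \varepsilon)$ together with the second-moment bound $\expect{\tantft{\RV{\theta}}} \leq \tfrac{1}{4} \expect{\RV{X}^2} \leq (s_0/2)^2$. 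For the first moment I would apply a Markov-type estimate of the truncation bias, $|\expect{\truncate{\RV{X}}{K}} - \expect{\RV{X}}| \leq \expect{\RV{X}^2}/K \leq s_0^2/K$; substituting $\lambda = 5/(4 - 5 s_0^2)$ simplifies this to $|\expect{\tanft{\RV{\theta}}}| \leq \varepsilon/(2(1 - 1.25 s_0^2))$, and this is exactly the origin of the denominator in the stated conclusion.

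I would then invoke Theorem~\ref{thm: alpha_close} with $\lambda' = 5$, $\varepsilon' = \varepsilon/(2(1-1.25 s_0^2))$, $s_0' = s_0/2$, and the tightest allowed slack $c = 7.635\, s_0^2/(4 + s_0^2)$. The hypotheses reduce to algebraic identities: the identity $\lambda' \varepsilon' = 2 \lambda \varepsilon$ makes the pointwise bound line up, while $s_0' \leq 1/6$, $c \leq 1/4$, and $\varepsilon' \leq s_0'$ all follow from $s_0 \leq 1/3$ and $\varepsilon \leq s_0$ within the stated range. The theorem yields an eigenvalue $e^{i\alpha}$ with $|\alpha - 2\expect{\tanft{\RV{\theta}}}| \leq 2 c \varepsilon'$ and overlap at least $1 - (1.7983 (s_0')^2 + 7.480\, s_0' \varepsilon')$. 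The crucial observation is that $2 \expect{\tanft{\RV{\theta}}} = \expect{\truncate{\RV{X}}{K}}$ — the $\tfrac{1}{2}$ in the definition of $\RV{\theta}$ exists precisely for this cancellation — so the triangle inequality gives
\begin{equation*}
|\alpha - \expect{\RV{X}}| \leq 2 c \varepsilon' + \frac{s_0^2}{K} = \frac{s_0^2 \varepsilon}{1 - 1.25 s_0^2} \left( \frac{7.635}{4 + s_0^2} + 1.25 \right),
\end{equation*}
and the bracketed factor is maximized as $s_0 \to 0$ at $7.635/4 + 1.25 = 3.15875$, matching the stated constant. Plugging $(s_0', \varepsilon')$ into the probability expression recovers $\tfrac{1}{4}[1.7983\, s_0^2 + 7.480\, s_0\, \varepsilon/(1 - 1.25 s_0^2)]$ directly.

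The main obstacle is meticulous constant-tracking rather than any deep step: every factor of two, every denominator $1 - 1.25 s_0^2$ (all of which originate from the single Markov estimate on the truncation bias), and every hypothesis of Theorem~\ref{thm: alpha_close} has to line up. The only genuinely quantitative input is the Markov bound $\expect{\RV{X}^2}/K$ controlling how much the truncation perturbs the mean; beyond that, the entire proof is essentially repackaging.
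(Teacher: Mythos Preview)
Your approach is essentially identical to the paper's: same reduction to Theorem~\ref{thm: alpha_close} with the same effective parameters $(\varepsilon', s_0', \lambda') = \bigl(\tfrac{\varepsilon}{2(1-1.25s_0^2)}, \tfrac{s_0}{2}, 5\bigr)$, same truncation-bias estimate $\lambda s_0^2 \varepsilon$ (you get it via the pointwise inequality $|\RV{X}|\bm{1}_{|\RV{X}|>K} \leq \RV{X}^2/K$, the paper via Cauchy--Schwarz plus Markov, same numerical outcome), and the same final assembly.

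One small gap: your claim that $\varepsilon' \leq s_0'$ follows from $\varepsilon \leq s_0$ and $s_0 \leq \tfrac{1}{3}$ is false --- it is equivalent to $\varepsilon \leq s_0(1-1.25s_0^2)$, which fails e.g.\ when $\varepsilon = s_0$. The paper handles this edge case explicitly by invoking Theorem~\ref{thm: alpha_close} with effective accuracy $\min\{\varepsilon', s_0'\}$ and checking separately (via Cauchy--Schwarz on the first moment and monotonicity of the pointwise bound) that the hypotheses still hold when the min is attained at $s_0'$. The final inequalities you want are unaffected, since replacing $\varepsilon'$ by something smaller only tightens the $2c\varepsilon'$ term, but you should acknowledge and dispatch this case rather than assert it away.
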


\begin{proof}

Note that $\tanft{\RV{\theta}} = \frac{1}{2} \truncate{\RV{X}}{\frac{1}{\lambda \varepsilon}}$. By Cauchy-Schwarz Inequality we find:
\begin{equation}
    \label{eq: alpha_RV_proof_1}
    \left| \expect{\RV{X}} - \expect{\truncate{\RV{X}}{\frac{1}{\lambda \varepsilon}}} \right| \leqslant \expect{\left|\RV{X} - \truncate{\RV{X}}{\frac{1}{\lambda \varepsilon}}\right|} \leqslant \expect{\left|\RV{X}\right|\bm{1}_{\left|\RV{X}\right| > \frac{1}{\lambda\varepsilon}}} \leqslant \sqrt{\expect{\RV{X}^2} \prob{\left|\RV{X}\right| > \frac{1}{\lambda \varepsilon}}}
\end{equation}
With Markov inequality applied to $\RV{X}^2$ we find:
\begin{equation}
\label{eq: alpha_RV_proof_2}
\prob{\left|\RV{X}\right| > \frac{1}{\lambda \varepsilon}} \leqslant \frac{\expect{\RV{X}^2}}{\frac{1}{\lambda\varepsilon}} \leqslant (\lambda \varepsilon)^2 s_0^2
\end{equation}
Combining Eqs.~(\ref{eq: alpha_RV_proof_1} and \ref{eq: alpha_RV_proof_2}) we obtain:
\begin{equation}
\label{eq: alpha_RV_deviate_truncate}
\left| \expect{\RV{X}} - \expect{\truncate{\RV{X}}{\frac{1}{\lambda \varepsilon}}} \right| \leqslant \lambda s_0^2 \varepsilon
\end{equation}
Thus, it is easy to see the following three results:
\begin{itemize}
    \item $\forall k\in\Omega \quad \left|\tanft{\theta_k}\right| \leqslant \frac{1}{\frac{4\lambda}{1+\lambda s_0^2} \frac{1}{2} (1+\lambda s_0^2) \varepsilon}$;
    \item $\expect{\tantft{\RV{\theta}}} \leqslant \frac{1}{4}s_0^2$;
    \item $\expect{\tanft{\RV{\theta}}} \leqslant \frac{1}{2}(1 + \lambda s_0^2) \varepsilon$;
\end{itemize}
By picking $\lambda = \frac{5}{4 - 5 s_0^2} \in [1.25, 1.452)$ we make $\frac{4\lambda}{1+\lambda s_0^2} = 5$. Then, Theorem~\ref{thm: alpha_close} holds with parameters $(\varepsilon, s_0, \lambda)$ replaced with $(\min\left\{\frac{1}{2}(1 + \lambda s_0^2) \varepsilon, \frac{1}{2}s_0\right\}, \frac{1}{2} s_0, 5)$. The min function is placed in the event that $(1 + \lambda s_0^2) \varepsilon > s_0$, in which case we still have $\expect{\tanft{\RV{\theta}}} \leqslant \frac{1}{2} s_0$ by Cauchy-Schwarz inequality, and also  $\forall k\in\Omega \quad \left|\tanft{\theta_k}\right| \leqslant \frac{1}{5 \frac{1}{2} (1+\lambda s_0^2) \varepsilon} \leqslant \frac{1}{5s_0}$. 

Note that $\frac{1}{2}(1 + \lambda s_0^2) \varepsilon = \frac{1}{2} \frac{\varepsilon}{1-1.25 s_0^2}$. Theorem~\ref{thm: alpha_close} says there is a solution $\alpha$ such that  
\begin{equation}
     \left|\alpha - 2 \expect{\tanft{\RV{\theta}}}\right| = \left|\alpha - \expect{\truncate{\RV{X}}{\frac{1}{\lambda \varepsilon}}}\right| \leqslant 2 c \left(\frac{1}{2}(1 + \lambda s_0^2)\varepsilon\right) = \frac{c \varepsilon}{1-1.25 s_0^2}
\end{equation}
where we take $c = 1.9088 s_0^2 \geqslant \frac{7.635 (\frac{1}{4} s_0^2)} {1 + \left(\frac{1}{4} s_0^2\right)}$. Combined with Eq.~(\ref{eq: alpha_RV_deviate_truncate}), we find 
\begin{equation}
    \left|\alpha - \expect{\RV{X}}\right| \leqslant \left(\frac{c}{1-1.25 s_0^2} + \lambda s_0^2\right) \varepsilon
    \leqslant \frac{3.1588 \varepsilon s_0^2}{1 - 1.25s_0^2}
\end{equation}

Moreover, let the corresponding eigenstate (with eigenvalue $e^{i \alpha}$ for $\mathcal{G}$ be $\ket{\alpha}$ with norm 1, the theorem also states that $\left|\bra{\alpha}\ket{\bm{1}}\right|^2 \geqslant 1 - \delta$, where $\delta = \frac{1}{4} \left( 1.7983 s_0^2 + 7.480 s_0 \frac{\varepsilon}{1 - 1.25 s_0^2} \right)$. We can verify that the above derivation process is still valid in the event that $(1+ \lambda s_0^2) \varepsilon > s_0$. 
\end{proof}

\begin{remark}[Constructing Grover Operator]
\label{remark: alpha_RV}
Given $O(1)$ access to a quantum experiment (and its inverse) to the random variable $\RV{X}$, by Remark~\ref{remark: quantum_experiment_algebra} we can construct a quantum experiment for $\RV{\theta}$. By Definition~\ref{def: univariate phase oracle} we can construct the phase oracle for $\RV{\theta}$ which gives us access to $\mathcal{G}$ discussed in Theorem~\ref{thm: alpha_RV}
\end{remark}

Continue on from Theorem~\ref{thm: alpha_RV}, we can then give a bound in terms of distance between quantum states.
\begin{theorem}[State Distance Bound for Grover Operator]
    \label{thm: alpha_RV_diff}
    Consider a univariate random variable $\RV{X}$ with
    \begin{itemize}
        \item $\left| \expect{\RV{X}} \right| \leqslant \varepsilon$;
        \item $\expect{\RV{X}^2} \leqslant s_0^2$
    \end{itemize}
    where $\varepsilon \leqslant s_0 \leqslant \frac{1}{3}$.
    Consider the Grover operator $\mathcal{G}$ on random variable $\RV{\theta} = 2 \arctan{\left(\frac{1}{2} \truncate{\RV{X}}{\frac{1}{\lambda \varepsilon}}\right)}$ with $\lambda = \frac{5}{4 - 5 s_0^2}$, we claim that for $N \in \mathbb{N}$
    \begin{equation}
        \norm{\mathcal{G}^N \ket{\bm{1}} -  e^{i N \expect{\RV{X}}} \ket{\bm{1}}}^2 
        \leqslant 1.7983 s_0^2 + \frac{7.480}{1 - 1.25s_0^2} s_0 \varepsilon + \left(\frac{3.1588 N \varepsilon s_0^2}{1 - 1.25 s_0^2}\right)^2 
    \end{equation}
\end{theorem}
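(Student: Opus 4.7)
The plan is to apply Theorem~\ref{thm: alpha_RV} and then decompose $\ket{\bm{1}}$ in the eigenbasis of $\mathcal{G}$, isolating the "good" eigenvector $\ket{\alpha}$ whose phase $\alpha$ is close to $\expect{\RV{X}}$ and whose overlap with $\ket{\bm{1}}$ is close to $1$. Concretely, I would write $\ket{\bm{1}} = c_\alpha \ket{\alpha} + \ket{\perp}$ with $\ket{\perp} \perp \ket{\alpha}$. Theorem~\ref{thm: alpha_RV} directly gives $|c_\alpha|^2 = |\bra{\alpha}\ket{\bm{1}}|^2 \geqslant 1 - \delta$ where $\delta := \tfrac{1}{4}\bigl(1.7983\, s_0^2 + 7.480\, s_0\, \tfrac{\varepsilon}{1-1.25 s_0^2}\bigr)$, hence $\norm{\ket{\perp}}^2 \leqslant \delta$.

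Next I would apply $\mathcal{G}^N$ and $e^{iN\expect{\RV{X}}}$ to each side. Since $\mathcal{G}$ is unitary, the eigenspace orthogonal to $\ket{\alpha}$ is invariant, so $\mathcal{G}^N\ket{\perp}$ still lies in $\{\ket{\alpha}\}^\perp$. Thus
\begin{equation}
\mathcal{G}^N\ket{\bm{1}} - e^{iN\expect{\RV{X}}}\ket{\bm{1}} = c_\alpha\bigl(e^{iN\alpha} - e^{iN\expect{\RV{X}}}\bigr)\ket{\alpha} + \bigl(\mathcal{G}^N - e^{iN\expect{\RV{X}}}\bigr)\ket{\perp},
\end{equation}
and the two pieces are orthogonal. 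Taking squared norms separates the bound into an "$\alpha$ part" and a "$\perp$ part":
\begin{equation}
\norm{\mathcal{G}^N\ket{\bm{1}} - e^{iN\expect{\RV{X}}}\ket{\bm{1}}}^2 = |c_\alpha|^2\,\bigl|e^{iN\alpha} - e^{iN\expect{\RV{X}}}\bigr|^2 + \norm{\bigl(\mathcal{G}^N - e^{iN\expect{\RV{X}}}\bigr)\ket{\perp}}^2.
\end{equation}

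For the $\alpha$ part, bound $|c_\alpha|^2 \leqslant 1$ and use $|e^{ix}-e^{iy}|^2 = 4\sin^2\!\tfrac{x-y}{2} \leqslant (x-y)^2$ together with the phase bound from Theorem~\ref{thm: alpha_RV}, giving $\bigl|e^{iN\alpha}-e^{iN\expect{\RV{X}}}\bigr|^2 \leqslant \bigl(\tfrac{3.1588\, N s_0^2 \varepsilon}{1-1.25 s_0^2}\bigr)^2$, which is exactly the last term of the stated bound. For the $\perp$ part, use unitarity of both $\mathcal{G}^N$ and the scalar phase to get $\norm{\mathcal{G}^N\ket{\perp}} = \norm{e^{iN\expect{\RV{X}}}\ket{\perp}} = \norm{\ket{\perp}}$, and then apply $\norm{u-v}^2 \leqslant 2(\norm{u}^2 + \norm{v}^2)$ to conclude $\norm{(\mathcal{G}^N - e^{iN\expect{\RV{X}}})\ket{\perp}}^2 \leqslant 4\norm{\ket{\perp}}^2 \leqslant 4\delta$, which reproduces the first two terms of the stated bound.

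No step is really an obstacle: the content of Theorem~\ref{thm: alpha_RV} does the work of controlling both the phase gap and the leakage out of $\ket{\alpha}$, and the rest is the standard "spectral decomposition of $\ket{\bm{1}}$" argument plus the inequality $|e^{ix}-e^{iy}| \leqslant |x-y|$. The only spot that requires any care is ensuring the orthogonality $\mathcal{G}^N\ket{\perp} \perp \ket{\alpha}$ so that the cross terms actually vanish; this follows because $\ket{\alpha}$ is an eigenvector of the unitary $\mathcal{G}$ and thus its orthogonal complement is an invariant subspace. Summing the two contributions gives exactly the right-hand side of the theorem.
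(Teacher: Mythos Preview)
Your proposal is correct and follows essentially the same approach as the paper: invoke Theorem~\ref{thm: alpha_RV}, decompose $\ket{\bm{1}}$ along the eigenvector $\ket{\alpha}$ and its orthogonal complement, then bound the phase piece by $\bigl(\tfrac{3.1588\,N s_0^2\varepsilon}{1-1.25 s_0^2}\bigr)^2$ and the orthogonal piece by $4\delta$. The only cosmetic difference is that the paper splits the difference into three vectors $\ket{\Delta_1},\ket{\Delta_2},\ket{\Delta_3}$ (with $\ket{\Delta_1}$ and $\ket{\Delta_3}$ both in $\{\ket{\alpha}\}^\perp$) and then uses the triangle inequality on $\ket{\Delta_1}+\ket{\Delta_3}$, whereas you keep those two together as $(\mathcal{G}^N - e^{iN\expect{\RV{X}}})\ket{\perp}$ and bound it via $\norm{u-v}^2\leqslant 2\norm{u}^2+2\norm{v}^2$; both routes land on exactly $4\delta$.
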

\begin{proof}

Theorem~\ref{thm: alpha_RV} tells us that $\mathcal{G}$ has eigenvalue $e^{i\alpha}$ with eigenket $\ket{\alpha}$, where $\alpha \in (-\pi, \pi]$, such that 
\begin{itemize}
        \item $\left|\alpha - \expect{\RV{X}}\right| \leqslant \frac{3.1588 s_0^2}{1 - 1.25s_0^2} \varepsilon$ ;
        \item $\left|\bra{\alpha}\ket{\bm{1}}\right|^2 \geqslant 1 - \frac{1}{4} \left( 1.7983 s_0^2 + 7.480 s_0 \frac{\varepsilon}{1 - 1.25 s_0^2} \right)$.
\end{itemize}

Consider the projection of $\ket{\bm{1}}$ onto $\ket{\alpha}$, $\bra{\alpha}\ket{\bm{1}}\ket{\alpha}$. By triangle inequality we know that 
\begin{equation}
    \mathcal{G}^N \ket{\bm{1}} -  e^{i N \expect{\RV{X}}} \ket{\bm{1}} = \ket{\Delta_1} + \ket{\Delta_2} + \ket{\Delta_3}
\end{equation}
where
\begin{equation}
    \left\{\begin{gathered}
        \ket{\Delta_1}  = \mathcal{G}^N \ket{\bm{1}} - \mathcal{G}^N  \bra{\alpha}\ket{\bm{1}}\ket{\alpha}  =  \mathcal{G}^N \ket{\bm{1}} - e^{i N \alpha}  \bra{\alpha}\ket{\bm{1}}\ket{\alpha} \\  \hfill \\
        \ket{\Delta_2} =  e^{i N \alpha} \bra{\alpha}\ket{\bm{1}}\ket{\alpha}  - e^{i N \expect{\RV{X}}}  \bra{\alpha}\ket{\bm{1}}\ket{\alpha} \hfill \\
        \ket{\Delta_3} = e^{i N \expect{\RV{X}}}  \bra{\alpha}\ket{\bm{1}}\ket{\alpha} - e^{i N \expect{\RV{X}}} \ket{\bm{1}} \hfill \\
    \end{gathered} \right.
\end{equation}
Or, to simplify a bit:
\begin{equation}
    \left\{\begin{gathered}
        \ket{\Delta_1} = \mathcal{G}^N \left(\ket{\bm{1}} -  \bra{\alpha}\ket{\bm{1}}\ket{\alpha} \right) \hfill \\
        \ket{\Delta_2} =  \left( e^{i N \alpha} - e^{i N \expect{\RV{X}}} \right) \bra{\alpha}\ket{\bm{1}}\ket{\alpha} \hfill \\
        \ket{\Delta_3} = e^{i N \expect{\RV{X}}}  \left( \bra{\alpha}\ket{\bm{1}}\ket{\alpha} - \ket{\bm{1}} \right) \hfill \\
    \end{gathered} \right.
\end{equation}
It is thus easy to see that $\ket{\Delta_1} \perp \ket{\Delta_2}$ and $ \ket{\Delta_3} \perp \ket{\Delta_2}$. Combined we can use triangular inequality to bound:
\begin{equation}
    \label{eq: alpha_RV_proof_3}
    \begin{aligned}
        \left\lVert\mathcal{G}^N \ket{\bm{1}} -  e^{i N \expect{\RV{X}}} \ket{\bm{1}}\right\rVert^2 & \leqslant \biggl| \bigl\lVert \ket{\Delta_1} \bigr\rVert + \bigl\lVert \ket{\Delta_3} \bigr\rVert \biggr|^2 + \bigl\lVert\ket{\Delta_2} \bigr\rVert ^2 \\
        & \leqslant 4 \delta + \left(\frac{3.1588 N \varepsilon s_0^2}{1 - 1.25s_0^2}\right)^2 \\
        & \leqslant 1.7983 s_0^2 + \frac{7.480}{1 - 1.25s_0^2} s_0 \varepsilon + \left(\frac{3.1588 N \varepsilon s_0^2}{1 - 1.25s_0^2}\right)^2 \\
    \end{aligned}
\end{equation}
where we used the fact that $\bigl\lVert \ket{\Delta_1} \bigr\rVert =  \bigl\lVert \ket{\Delta_3} \bigr\rVert \leqslant \delta$ and 
\begin{equation}
\bigl\lVert \ket{\Delta_2} \bigr\rVert = \left|e^{i N \alpha} - e^{i N \expect{\RV{X}}}\right| \left|\bra{\alpha}\ket{\bm{1}}\right| \leqslant N |\alpha - \expect{\RV{X}}|\sqrt{1-\delta} \leqslant  \frac{3.1588 N \varepsilon s_0^2}{1 - 1.25 s_0^2}
\end{equation}

Eq.~(\ref{eq: alpha_RV_proof_3}) is exactly our conclusion. 

\end{proof}

Theorem~\ref{thm: alpha_RV_diff} and Remark~\ref{thm: alpha_RV_diff} presented a primitive that serves the same purpose as the ``Directional Mean Oracle'' in Ref.~\cite{Cornelissen_2022}, except there are no truncations and the fact that the deviation from $e^{i N \expect{\RV{X}}}$ cannot be made arbitarily small. This would eventually results in a different multivariate mean value estimator in Sec.~\ref{sec: simple_algo}.

\subsection{Univariate Mean Value Estimator}
\label{sec: uni_estimator}
Theorem~\ref{thm: alpha_RV} has been begging for us to build an intuitive mean value estimator, in which we exponentially decrease our confidence parameter $\varepsilon$ to give a more accurate estimation of the random variable. First, knowing that the mean is small, the following algorithm gives a closer mean estimate. This can be thought as a ``refinement'' step because say we are trying to estimate some $\RV{X}$, knowing a mean estimate $\tilde{\mu}$, running our program on $\RV{X} - \tilde{\mu}$ leads to a more accurate estimate of the mean. 
\begin{algorithm}
    \caption{\label{alg: uni_refinement} Refinement Step of Univariate Mean Value Estimator}
    \KwData{Access to Quantum Experiment of Random Variable $\RV{X}$, accuracy parameter $\varepsilon \geqslant 0$, confidence parameter $0 < \delta < 1$}
    \KwResult{An mean value estimate $\tilde{\mu}$}
    $M \leftarrow  2\left\lceil\frac{18 \ln \frac{1}{\delta} - 1}{2}\right\rceil + 1$\;
    $N \leftarrow 2^{\left\lceil\log_2 \frac{24\pi}{\varepsilon}\right\rceil}$\;
    Initialize space for vector arrays $u$, $\mu$ of dimension $d$ and length $M$ \;
    \For{$\ell = 1 $ \KwTo $M$}{
        $u_\ell \leftarrow $ result of phase estimation algorithm on Grover operator $\mathcal{G}$ with $N$ access to phase oracle of random variable $\RV{\theta} = 2 \arctan{\left(\frac{1}{2} \truncate{\RV{X}}{\frac{1}{\lambda \varepsilon}}\right)}$ where $\lambda = \frac{5}{4 - 5 \left(\frac{\sqrt{10}}{12}\right)^2}$ \;
        $\mu_\ell \leftarrow 2 \pi u_\ell$ \;
    }
    $\tilde{\mu} \leftarrow \text{ median of } \{\ell \in [M]: \mu_\ell\}$\;
    Output $\tilde{\mu}$\;
\end{algorithm}

\begin{theorem}[Correctness of Univariate Refinement Step]
\label{thm: uni_refinement}
When $\varepsilon \leqslant \frac{1}{12}$. Knowing that 
\begin{itemize}
    \item $\left|\expect{\RV{X}}\right| \leqslant \varepsilon$ \;
    \item $\Var{\RV{X}} \leqslant \left(\frac{1}{4}\right)^2$\;
\end{itemize}
Algorithm~\ref{alg: uni_refinement} return an mean estimate $\tilde{\mu}$ such that 
\begin{equation}
    \prob{\left|\tilde{\mu} - \expect{\RV{X}}\right| \leqslant \frac{1}{2} \varepsilon} \geqslant 1 - \delta
\end{equation}
\end{theorem}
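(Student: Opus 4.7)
The plan is to treat each of the $M$ iterations of the loop as an independent trial that estimates $\expect{\RV{X}}$ to within $\varepsilon/2$ with probability at least $2/3$, and then invoke the median-boosting result of Theorem~\ref{thm: median} to amplify this to probability $1 - \delta$. The core work inside a single trial is to chain the Grover-gate spectrum bound of Theorem~\ref{thm: alpha_RV} with the phase-estimation guarantee of Theorem~\ref{thm: uni_phase_estimation}.

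First I would check that the hypotheses of Theorem~\ref{thm: alpha_RV} hold with $s_0 = \sqrt{10}/12$. Since $\expect{\RV{X}^2} = \Var(\RV{X}) + \expect{\RV{X}}^2 \leqslant (1/4)^2 + \varepsilon^2 \leqslant 10/144$ in the regime $\varepsilon \leqslant 1/12$, this choice of $s_0$ is valid, and it is precisely the value hard-coded in the definition of $\lambda$ in Algorithm~\ref{alg: uni_refinement}. Theorem~\ref{thm: alpha_RV} then supplies an eigenvalue $e^{i\alpha}$ of $\mathcal{G}$ with eigenvector $\ket{\alpha}$ satisfying $|\alpha - \expect{\RV{X}}| \leqslant C_1 \varepsilon$ for an explicit $C_1 < 1/4$, together with $|\bra{\alpha}\ket{\bm{1}}|^2 \geqslant 1 - \eta(\varepsilon)$ for an explicit $\eta(\varepsilon)$ that stays comfortably below $1/9$ throughout $\varepsilon \leqslant 1/12$.

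Next I would run phase estimation on $\mathcal{G}$ with input $\ket{\bm{1}}$ and $N \geqslant 24\pi/\varepsilon$ controlled accesses, choosing $\kappa = 3$ in Theorem~\ref{thm: uni_phase_estimation}. Writing the eigenvalue as $e^{2\pi i (\alpha/(2\pi))}$ with $\alpha/(2\pi) \in (-1/2, 1/2]$, phase estimation returns $u_\ell$ with $|2\pi u_\ell - \alpha| \leqslant 2\pi\kappa/N \leqslant \varepsilon/4$ with probability at least $|\bra{\alpha}\ket{\bm{1}}|^2 (1 - 1/(2(\kappa-1))) = \tfrac{3}{4}|\bra{\alpha}\ket{\bm{1}}|^2$. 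The triangle inequality then gives $|\mu_\ell - \expect{\RV{X}}| \leqslant (C_1 + 1/4)\varepsilon \leqslant \varepsilon/2$, and the single-trial success probability is at least $\tfrac{3}{4}(1 - \eta(\varepsilon)) \geqslant 2/3$ in the allowed range. Having established this per-trial guarantee, Theorem~\ref{thm: median} applied to the $M$ outputs $\{\mu_\ell\}$ with $M \geqslant 18 \ln(1/\delta)$ yields $\tilde{\mu}$ within $\varepsilon/2$ of $\expect{\RV{X}}$ with probability at least $1 - \delta$, which is the claim.

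The main obstacle is purely numerical: verifying that the constants from Theorem~\ref{thm: alpha_RV} at $s_0 = \sqrt{10}/12$, combined with the discretization $N \geqslant 24\pi/\varepsilon$ and the fixed choice $\kappa = 3$, leave enough slack so that the total error fits under $\varepsilon/2$ while the overlap $|\bra{\alpha}\ket{\bm{1}}|^2$ stays large enough to keep the single-shot probability above $2/3$. A secondary concern is phase wrapping, since the eigenvalue is represented with $\alpha \in (-\pi, \pi]$ while $\expect{\RV{X}}$ lies in $\mathbb{R}$; however $|\alpha| \leqslant (1 + C_1)\varepsilon \leqslant 2/12 \ll \pi$, so no wrap-around arises and the identification of $\mu_\ell = 2\pi u_\ell$ with a real-valued estimate of $\alpha$ is unambiguous.
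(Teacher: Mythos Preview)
Your proposal is correct and follows essentially the same route as the paper's proof: verify the hypotheses of Theorem~\ref{thm: alpha_RV} with $s_0=\sqrt{10}/12$, extract the eigenvalue and overlap bounds, feed them into phase estimation (Theorem~\ref{thm: uni_phase_estimation}) with $\kappa=3$ and $N\geqslant 24\pi/\varepsilon$, and finish with the median trick. The paper simply carries out the numerics explicitly (obtaining $C_1\leqslant 0.2403$ and $|\bra{\alpha}\ket{\bm{1}}|^2\geqslant 0.9238$, hence single-trial success $\geqslant 0.69285>2/3$), which matches your symbolic estimates $C_1<1/4$ and $\eta<1/9$.
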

\begin{proof}
By condition given we know that $\expect{\RV{X}^2} = \expect{\RV{X}}^2 + \Var{\RV{X}} \leqslant \left(\frac{\sqrt{10}}{12}\right)^2 < (\frac{1}{3})^2$, so Theorem~\ref{thm: alpha_RV} holds with $(\varepsilon, s_0)$ replaced with $(\varepsilon, \frac{\sqrt{10}}{12})$. Fix $\ell \in [M]$. There is an eigenvalue $e^{i \alpha}$ with eigenket $\ket{\alpha}$ where $\alpha \in (-\pi, \pi]$
\begin{itemize}
    \item $\left|\alpha - \expect{\RV{X}}\right| \leqslant 0.2403 \varepsilon \leqslant \frac{1}{4} \varepsilon$;
    \item $\left|\bra{\alpha}\ket{\bm{1}}\right|^2 \geqslant 0.9238$, where we used $\varepsilon \leqslant \frac{1}{12}$;
\end{itemize}
Using Theorem~\ref{thm: uni_phase_estimation} with $\kappa = 3$, set $N \geqslant \frac{24\pi}{\varepsilon}$ accordingly to algorithm, we then have $\frac{3}{N} \leqslant \frac{1}{8\pi}\varepsilon$, which means that:
\begin{equation}
    \begin{aligned}
        \prob{\left|\mu_\ell- \alpha\right| \leqslant \frac{1}{4}\varepsilon }  & = \prob{\left|\frac{\mu_\ell}{2\pi} - \frac{\alpha}{2\pi}\right| \leqslant \frac{1}{8\pi}\varepsilon } \geqslant \prob{\left|\frac{\mu_\ell}{2\pi} - \frac{\alpha}{2\pi}\right| \leqslant \frac{3}{N} } \\
        & \geqslant \left(1-\frac{1}{4}\right) \left|\bra{\alpha}\ket{\bm{1}}\right|^2 \geqslant 0.69285 > \frac{2}{3} \\
    \end{aligned}
\end{equation}
Since  $\left|\alpha - \expect{\RV{X}}\right| < \frac{1}{4} \varepsilon$ so we know that 
\begin{equation}
    \prob{\left|\mu_\ell - \expect{\RV{X}}\right| \leqslant \frac{1}{2} \varepsilon} \geqslant \frac{2}{3}
\end{equation}
By Theorem~\ref{thm: median}, the success probability is boosted to $1 - \delta$ by taking the median:
\begin{equation}
    \prob{\left|\tilde{\mu} - \expect{\RV{X}}\right| \leqslant \frac{1}{2} \varepsilon} \geqslant 1 - \delta
\end{equation}

\end{proof}

Following Remark~\ref{remark: alpha_RV}, we can also bound the complexity of the algorithm in terms of calls to the quantum experiment. We will also keep track of the number of phase estimation used for later convenience. 
\begin{remark}[Complexity of Univariate Refinement Step]
    \label{remark: uni_refinement_complexity}
    Algorithm~\ref{alg: uni_refinement} always uses $O\left(\frac{1}{\varepsilon} \log \frac{1}{\delta}\right)$ accesses of oracle, and runs phase estimation $\log \frac{1}{\delta}$ times.
\end{remark}

With Theorem~\ref{thm: uni_refinement}, we can repeatedly call Algorithm~\ref{alg: uni_refinement} for mean value estimation:
\begin{algorithm}
    \caption{\label{alg: uni_esimator}Constrained Univariate Mean Value Estimator}
    \KwData{Access to Quantum Experiment of Random Variable $\RV{X}$, variance bound $\sigma_0 \geqslant 0$, initial accuracy parameter $0 \leqslant \varepsilon_0 \leqslant \frac{1}{3} \sigma_0$, number of trials $n \in \mathbb{N}^+$, confidence parameter $0 < \delta < 1$}
    \KwResult{A mean value estimate $\tilde{\mu}$}
    $M \leftarrow \left\lceil\log_2 \frac{n \varepsilon_0}{\sigma_0}\right\rceil$ \;
    Set array $\varepsilon'$ of length $M$ with  $\varepsilon'_\ell = \frac{\varepsilon_0}{2^{\ell-1} 4 \sigma_0} \forall \ell \in [M]$\;
    Set array $\delta'$ of length $M$ via Theorem~\ref{thm: log_log} according to array $\varepsilon'$ with parameters $(\delta,R)$ set to $(\delta,2)$\;
    $\tilde{\mu} \leftarrow 0$ \;
    \For{$\ell = 1$ \KwTo $M$}{
        $\mu_\ell \leftarrow$ result of calling Algorithm~\ref{alg: uni_refinement} on $\frac{\RV{X}-\tilde{\mu}}{4 \sigma_0}$ with parameters $(\varepsilon, \delta)$ set to $(\varepsilon'_\ell, \delta'_\ell)$ \;
        $\tilde{\mu} \leftarrow \tilde{\mu} + 4 \sigma_0 \mu_\ell$\;
    }
    Output $\tilde{\mu}$\;
\end{algorithm}

\begin{theorem}[Correctness of Algorithm~\ref{alg: uni_esimator}]
\label{thm: uni_estimator}
Given random variable $\RV{X}$ such that 
\begin{itemize}
    \item $\Var{\RV{X}} \leqslant \sigma_0^2$\;
    \item $\expect{\RV{X}} \leqslant \varepsilon_0 \leqslant \frac{1}{3} \sigma_0$\;
\end{itemize}
Algorithm~\ref{alg: uni_esimator} give a mean estimate $\tilde{\mu}$ with
\begin{equation}
    \prob{\left|\tilde{\mu} - \expect{\RV{X}}\right|\leqslant \frac{\sigma_0}{n}} \geqslant 1 - \delta
\end{equation}
\end{theorem}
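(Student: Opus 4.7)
The plan is to prove this by induction on the loop iteration, showing that after $\ell$ rounds, the running estimate $\tilde{\mu}$ satisfies $\left| \tilde{\mu} - \expect{\RV{X}} \right| \leqslant 4 \sigma_0 \varepsilon'_{\ell+1}$, with the convention $\varepsilon'_{M+1} = \varepsilon'_M / 2$. The base case is handled by the initial value $\tilde{\mu} = 0$ together with the hypothesis $\left|\expect{\RV{X}}\right| \leqslant \varepsilon_0 = 4\sigma_0 \varepsilon'_1$.

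For the inductive step at iteration $\ell$, I would first check that the rescaled random variable $\RV{X}' = (\RV{X} - \tilde{\mu})/(4\sigma_0)$ satisfies the hypotheses of Theorem~\ref{thm: uni_refinement}. Its variance is $\Var(\RV{X})/(16 \sigma_0^2) \leqslant (1/4)^2$, and the inductive hypothesis gives $\left|\expect{\RV{X}'}\right| \leqslant \varepsilon'_\ell$; additionally $\varepsilon'_\ell \leqslant \varepsilon'_1 = \varepsilon_0 / (4 \sigma_0) \leqslant 1/12$ by the assumption $\varepsilon_0 \leqslant \sigma_0/3$, so the refinement step can be applied. Conditioned on its success, Theorem~\ref{thm: uni_refinement} delivers a $\mu_\ell$ with $|\mu_\ell - \expect{\RV{X}'}| \leqslant \varepsilon'_\ell / 2$. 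Rescaling back, the updated estimate $\tilde{\mu} + 4\sigma_0 \mu_\ell$ satisfies
\begin{equation}
    \left| \tilde{\mu} + 4\sigma_0 \mu_\ell - \expect{\RV{X}} \right| \leqslant 4 \sigma_0 \cdot \frac{\varepsilon'_\ell}{2} = 4 \sigma_0 \varepsilon'_{\ell+1},
\end{equation}
because $\varepsilon'_{\ell+1} = \varepsilon'_\ell/2$ by construction, which closes the induction.

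The final error after $M$ iterations is therefore $4\sigma_0 \varepsilon'_{M+1} = \varepsilon_0/2^M$, and the choice $M = \left\lceil \log_2 (n\varepsilon_0/\sigma_0) \right\rceil$ makes this at most $\sigma_0/n$. For the confidence, since the $\varepsilon'_\ell$ form a geometric sequence with ratio $1/2$, the log-log trick of Theorem~\ref{thm: log_log} with $R = 2$ can be invoked on the $M$ calls to Algorithm~\ref{alg: uni_refinement}; this produces the array $\delta'_\ell$ used in the pseudocode and guarantees that all refinement steps succeed simultaneously with probability at least $1 - \delta$, which is exactly what we need for the induction to go through globally.

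The main obstacle I expect is bookkeeping the various factors of $4\sigma_0$ so that the halving of $\varepsilon'_\ell$ is precisely compensated: the factor of $2$ gained from Theorem~\ref{thm: uni_refinement} is exactly the factor needed to carry the inductive bound from $\ell$ to $\ell+1$, and the factor of $4$ in the denominator of $\varepsilon'_\ell$ is exactly what is required to simultaneously enforce the variance bound $(1/4)^2$ and the initial small-mean hypothesis $\varepsilon'_1 \leqslant 1/12$. Once these constants are checked to be consistent, the remainder is a routine induction together with an invocation of Theorem~\ref{thm: log_log}.
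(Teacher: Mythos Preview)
Your proposal is correct and follows essentially the same approach as the paper: an induction on the loop index showing the running estimate halves its error each round, with the log-log trick (Theorem~\ref{thm: log_log}) controlling the joint failure probability. The only point on which the paper is slightly more explicit is the definition of the success predicate at step $\ell$: since the hypothesis of Theorem~\ref{thm: uni_refinement} is only guaranteed when the previous step succeeded, the paper declares step $\ell$ to \emph{vacuously} succeed whenever step $\ell-1$ failed, so that $\mathbb{P}[P_\ell]\geqslant 1-\delta'_\ell$ holds unconditionally as Theorem~\ref{thm: log_log} requires; you should make this explicit when writing out the details.
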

\begin{proof}
Define $\tilde{\mu}_\ell$ to be the value of $\tilde{\mu}$ after the $\ell$-th time Algorithm~\ref{alg: uni_refinement} is called. $\tilde{\mu}_0 = 0$. We can see that $\forall \ell \in [M]$, $\Var{\frac{\RV{X}-\tilde{\mu}_{\ell-1}}{4 \sigma_0}} \leqslant \left(\frac{1}{4}\right)^2$.
For the $\ell$-th time Algorithm~\ref{alg: uni_refinement} is called, if $\ell = 1$ define success condition as:
\begin{equation}
    P_1 \equiv \left(\left|\tilde{\mu}_1 - \expect{\RV{X}}\right| \leqslant 2 \sigma_0 \varepsilon'_1\right)
\end{equation}
For $\ell > 1$, define success condition as the following:
\begin{equation}
    P_\ell \equiv \begin{cases}
        \left|\tilde{\mu}_\ell - \expect{\RV{X}}\right| \leqslant 2 \sigma_0 \varepsilon'_\ell & \left(\left|\tilde{\mu}_{\ell-1} - \expect{\RV{X}}\right| \leqslant 2 \sigma_0 \varepsilon'_{\ell-1}\right) \\
        \text{true} & \text{otherwise} \\ 
    \end{cases}
\end{equation}
In other words, we define ``succeed'' as only discriminating when the previous iteration satisfy the desired constraint, otherwise the algorithm always ``succeed''. We observe two key properties:
\begin{itemize}
    \item By assumption on $\RV{X}$, $\expect{\frac{\RV{X}}{4\sigma_0}} \leqslant \frac{\varepsilon_0}{4 \sigma_0} \leqslant \frac{1}{12}$, so Theorem~\ref{thm: uni_refinement} holds with parameters $(\varepsilon, \delta)$ replaced with $( \frac{\varepsilon_0}{4\sigma_0}, \delta'_1)$, so we know that 
    \begin{equation}
        \mathbb{P}[P_1] = \prob{\left|\tilde{\mu}_1 - \expect{\RV{X}}\right| \leqslant 2\sigma_0 \varepsilon_0 } = \prob{\left|\mu_1 - \expect{\frac{\RV{X}}{4\sigma_0}}\right| \leqslant \frac{\varepsilon_0}{2} } \geqslant 1 - \delta'_1 
    \end{equation}
    \item Consider $\ell \in [M-1]$. Assume $\left|\tilde{\mu}_\ell - \expect{\RV{X}}\right| \leqslant 2 \sigma_0 \varepsilon'_\ell$. Then $\left|\frac{\expect{\RV{X}} - \tilde{\mu}_\ell}{4\sigma_0}\right| \leqslant \frac{\varepsilon'_\ell}{2} = \varepsilon'_{\ell+1} \leqslant \frac{1}{12}$. So Theorem~\ref{thm: uni_refinement} holds with parameters $(\varepsilon, \delta)$ replaced with $(\varepsilon'_{\ell+1}, \delta'_{\ell+1})$, so we know that 
    \begin{equation}
        \begin{aligned} 
            \prob{\left|\tilde{\mu}_{\ell+1} - \expect{\RV{X}}\right| \leqslant 2\sigma_0 \varepsilon'_{\ell+1} } &   = \prob{\left|4 \sigma_0 \mu_{\ell+1} - \expect{\RV{X} - \tilde{\mu}_\ell}\right|\leqslant 2\sigma_0 \varepsilon'_{\ell+1} } \\
                & = \prob{\left|\mu_{\ell+1} - \expect{\frac{\RV{X}-\tilde{\mu}_\ell}{4\sigma_0}}\right| \leqslant \frac{\varepsilon'_{\ell+1}}{2} } \geqslant 1 - \delta'_{\ell+1}
            \end{aligned} 
    \end{equation}
\end{itemize}
From the definition of our success condition $\{P_\ell\}$, we know that at $\ell$-th time Algorithm~\ref{alg: uni_refinement} is called it always succeeds with probability at least $1 - \delta_\ell$. Thus, we can use Theorem~\ref{thm: log_log} to know that
\begin{equation}
    \prob{P_1 \wedge P_2 \wedge \cdots P_M} \geqslant 1 - \delta
\end{equation}
But $P_1 \wedge P_2 \wedge \cdots P_M$ implies 
\begin{equation}
    \left|\tilde{\mu} - \expect{\RV{X}}\right| \leqslant 2 \sigma_0 \varepsilon'_M = \frac{\varepsilon_0}{2^M} \leqslant \frac{\sigma_0}{n}
\end{equation}
That is, we find that:
\begin{equation}
    \prob{\left|\tilde{\mu} - \expect{\RV{X}}\right|\leqslant \frac{\sigma_0}{n}} \geqslant 1 - \delta
\end{equation}
\end{proof}

We can also bound the complexity of algorithm, regardless of whether the input variable satisfies the nice constraint or not:
\begin{theorem}[Complexity of Algorithm~\ref{alg: uni_esimator}]
\label{thm: uni_complexity}
The algorithm always uses $O\left(n \log \frac{1}{\delta}\right)$ access to quantum experiments to output the result. It also does this in $O\left(\ceil{\log \frac{n\varepsilon_0}{\sigma_0}}\left(\log \ceil{\log \frac{n\varepsilon_0}{\sigma_0}} + \log \frac{1}{\delta}\right)\right)$ phase estimations.
\end{theorem}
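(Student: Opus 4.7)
The plan is to apply the log-log trick (Theorem~\ref{thm: log_log}) directly to the sequence of $M$ calls of Algorithm~\ref{alg: uni_refinement} inside Algorithm~\ref{alg: uni_esimator}, tracking two distinct cost measures: (i) number of accesses to the quantum experiment, and (ii) number of invocations of phase estimation. By Remark~\ref{remark: uni_refinement_complexity}, a single call of Algorithm~\ref{alg: uni_refinement} with parameters $(\varepsilon, \delta)$ uses $O\!\left(\tfrac{1}{\varepsilon}\log \tfrac{1}{\delta}\right)$ experiment accesses and $O\!\left(\log \tfrac{1}{\delta}\right)$ phase estimations. These are exactly the two complexity measures envisioned in Theorem~\ref{thm: log_log}, so the log-log trick hands us the combined bounds $O\!\left(\tfrac{1}{\varepsilon'_M}\log \tfrac{1}{\delta}\right)$ and $O\!\left(T \log \tfrac{T}{\delta}\right)$, respectively, where $T = M$ is the number of outer iterations and $\varepsilon'_M$ is the final accuracy parameter.

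Before quoting the bound, I would verify the hypotheses of Theorem~\ref{thm: log_log}. The accuracy schedule $\varepsilon'_\ell = \tfrac{\varepsilon_0}{2^{\ell-1}\,4\sigma_0}$ satisfies $\varepsilon'_{\ell+1} = \varepsilon'_\ell/2$, so the ratio constant $R = 2$ works. The success-probability schedule $\{\delta'_\ell\}$ is chosen in the algorithm precisely via Theorem~\ref{thm: log_log}, so the probability guarantee is inherited; but we actually do not need the success clause to obtain the complexity bound, since the counts from Remark~\ref{remark: uni_refinement_complexity} hold deterministically on every call. Next, I would translate the bound on $\varepsilon'_M$: since $M = \ceil{\log_2 \tfrac{n\varepsilon_0}{\sigma_0}}$ we have $2^M \leqslant \tfrac{2 n \varepsilon_0}{\sigma_0}$, hence
\begin{equation}
\frac{1}{\varepsilon'_M} \;=\; \frac{2\sigma_0 \cdot 2^{M}}{\varepsilon_0} \;\leqslant\; 4 n,
\end{equation}
so the total experiment-access count is $O\!\left(\tfrac{1}{\varepsilon'_M}\log \tfrac{1}{\delta}\right) = O\!\left(n \log \tfrac{1}{\delta}\right)$, as claimed.

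For the second measure, Theorem~\ref{thm: log_log} yields $O(T \log \tfrac{T}{\delta})$ total phase estimations, which with $T = M$ becomes $O\!\left(M \left(\log M + \log \tfrac{1}{\delta}\right)\right) = O\!\left(\ceil{\log \tfrac{n\varepsilon_0}{\sigma_0}}\left(\log \ceil{\log \tfrac{n\varepsilon_0}{\sigma_0}} + \log \tfrac{1}{\delta}\right)\right)$, matching the statement. There is no real obstacle here: the content of the lemma is essentially bookkeeping, with Theorem~\ref{thm: log_log} supplying the nontrivial combinatorial arithmetic. The only item requiring any care is the simple ceiling estimate $2^{\ceil{\log_2 x}} \leqslant 2x$ that converts the final accuracy $\varepsilon'_M$ into the clean $O(n)$ factor; everything else is just substitution.
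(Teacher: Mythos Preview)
Your proposal is correct and follows essentially the same approach as the paper: invoke Remark~\ref{remark: uni_refinement_complexity} for the per-call costs, feed them into the two complexity measures of Theorem~\ref{thm: log_log}, and then convert $\tfrac{1}{\varepsilon'_M}$ to $O(n)$ and $M$ to $\ceil{\log \tfrac{n\varepsilon_0}{\sigma_0}}$. If anything, your write-up is slightly more explicit than the paper's, since you verify the ratio hypothesis $R=2$ and spell out the ceiling estimate $2^{\ceil{\log_2 x}}\leqslant 2x$ rather than just asserting $\tfrac{1}{\varepsilon'_M}\in O(n)$.
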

\begin{proof}
We know that $\frac{1}{\varepsilon'_M} = \frac{2^{M-1}4 \sigma_0}{\varepsilon_0} \in O\left(n\right)$. With the complexity statement in Remark~\ref{remark: uni_refinement_complexity}, we know that the $\ell$-th call runs in $O\left(\frac{1}{\varepsilon'_\ell} \log \frac{1}{\delta'_\ell}\right)$ time and calls phase estimation $O\left(\log \frac{1}{\delta'}\right)$ times. Meanwhile, by Theorem~\ref{thm: log_log}, with $M \in O\left(\log \ceil{\frac{n\sigma_0}{\varepsilon_0}}\right)$, we thus confirms that the algorithm uses $O\left(\frac{1}{\varepsilon'_M} \log \frac{1}{\delta}\right) = O\left(n \log \frac{1}{\delta}\right)$ calls to the quantum experiment and invokes phase estimation $O\left(M \log \frac{M}{\delta}\right) = O\left(\ceil{\log \frac{n\varepsilon_0}{\sigma_0}}\left(\log \ceil{\log \frac{n\varepsilon_0}{\sigma_0}} + \log \frac{1}{\delta}\right)\right)$ times.
\end{proof}

Furthermore, by going through the steps of Algorithms~\ref{alg: uni_refinement} and \ref{alg: uni_esimator} we can easily verify the following:
\begin{remark}[Hybird Circuit]
\label{remark: uni_hybird}
Algorithm~\ref{alg: uni_esimator} is a bybird circuit satisfying the definition made in Theorem~\ref{thm: c_to_q}.
\end{remark}

In Algorithm~\ref{alg: uni_esimator} there is still an initial bound on the expectation of $\RV{X}$, but we can get rid of this constraint easily, by Remark~\ref{remark: quantum_ex_to_classical}, by kickstarting the algorithm with a classical estimator:
\begin{algorithm}
    \caption{\label{alg: not_so_uni_esimator}(Not-so) Constrained Univariate Mean Value Estimator}
    \KwData{Access to Quantum Experiment of Random Variable $\RV{X}$, variance bound $\sigma_0 \geqslant 0$, number of trials $n \in \mathbb{R}^+$, confidence parameter $0 < \delta < 1$}
    \KwResult{A mean value estimate $\tilde{\mu}$}
    Run Classical Mean Value Estimator in Theorem~\ref{thm: classical_uni_estimator} for $\RV{X}$ with parameters $(n, \delta)$ replaced with $\left(9, \frac{\delta}{2}\right)$, store as $\mu'$\;  
    Run Algorithm~\ref{alg: uni_esimator} on $\RV{X} - \mu'$ with parameters $(n, \sigma_0, \varepsilon_0,\delta)$ replaced with $\left(n, \sigma_0, \frac{1}{3} \sigma_0, \frac{\delta}{2}\right)$, let it be $\mu''$\;
    Output $\tilde{\mu} = \mu' + \mu''$\;
\end{algorithm}

\begin{theorem}[Algorithm~\ref{alg: not_so_uni_esimator}]
\label{thm: notso_uni_estimator}
For univariate random variable $\RV{X}$ such that $\Var{\RV{X}} \geqslant \sigma_0^2$, the algorithm returns an estimate $\tilde{\mu}$ with 
\begin{equation}
    \prob{\left|\tilde{\mu} - \expect{\RV{X}}\right|\leqslant \frac{\sigma_0}{n}} \geqslant 1 - \delta
\end{equation}
The algorithm uses $O\left(n \log \frac{1}{\delta}\right)$ accesses to the quantum experiment, and it calls phase estimation $O\left(\log n \left( \log \log n + \log  \frac{1}{\delta}\right)\right)$ times.
\end{theorem}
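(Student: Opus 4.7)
The plan is to chain the two components of Algorithm~\ref{alg: not_so_uni_esimator}: the classical median-of-means kickstart followed by the quantum constrained estimator from Algorithm~\ref{alg: uni_esimator}. The classical step produces a rough estimate $\mu'$ that is close enough to $\expect{\RV{X}}$ so that the shifted variable $\RV{X}-\mu'$ meets the initial mean-smallness precondition of Theorem~\ref{thm: uni_estimator}, and the quantum step then refines this to the desired $\sigma_0/n$ accuracy. The two halves are glued together by a union bound over their failure probabilities, each independently set to $\delta/2$.

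First, I would apply Theorem~\ref{thm: classical_uni_estimator} to $\RV{X}$ with the parameters $n$ and $\delta$ replaced by $9$ and $\delta/2$ respectively, obtaining a classical estimate $\mu'$ with $|\mu'-\expect{\RV{X}}|\leqslant \sqrt{\Var{\RV{X}}/9}$ with probability at least $1-\delta/2$. Combining this with the variance hypothesis relating $\Var{\RV{X}}$ and $\sigma_0^2$ yields $|\mu'-\expect{\RV{X}}|\leqslant \sigma_0/3$, which is precisely the $\varepsilon_0$ level demanded when feeding the quantum routine. By Remark~\ref{remark: quantum_ex_to_classical}, each classical draw is realised with one access to the quantum experiment, so the cost of this step is $O(\log\frac{1}{\delta})$ experiment accesses, which will be dominated by the quantum part.

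Second, I would invoke Theorem~\ref{thm: uni_estimator} on $\RV{X}-\mu'$ with $(n,\sigma_0,\varepsilon_0,\delta)$ set to $(n,\sigma_0,\sigma_0/3,\delta/2)$. Conditional on the classical step succeeding, the shifted random variable has the same variance as $\RV{X}$, which is compatible with $\sigma_0$, and $|\expect{\RV{X}-\mu'}|=|\mu'-\expect{\RV{X}}|\leqslant \sigma_0/3=\varepsilon_0$, so both preconditions of Theorem~\ref{thm: uni_estimator} are in force. The quantum routine then returns $\mu''$ with $|\mu''-\expect{\RV{X}-\mu'}|\leqslant \sigma_0/n$ with probability at least $1-\delta/2$. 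A union bound over the two potential failure events gives
\begin{equation}
    \prob{|\tilde{\mu}-\expect{\RV{X}}|\leqslant \sigma_0/n}=\prob{|\mu''-(\expect{\RV{X}}-\mu')|\leqslant \sigma_0/n}\geqslant 1-\delta.
\end{equation}

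For the complexity tally, by Theorem~\ref{thm: uni_complexity} the quantum call costs $O\bigl(n\log\frac{1}{\delta}\bigr)$ quantum-experiment accesses and $O\bigl(\lceil\log(n\varepsilon_0/\sigma_0)\rceil(\log\lceil\log(n\varepsilon_0/\sigma_0)\rceil+\log\frac{1}{\delta})\bigr)$ phase estimations; specialising to $\varepsilon_0/\sigma_0=1/3$ collapses the inner log factor to $\log n$ and reproduces the stated $O(\log n(\log\log n+\log\frac{1}{\delta}))$. There is no genuine obstacle here: the substantive content was already packaged in Theorems~\ref{thm: uni_estimator} and~\ref{thm: uni_complexity}, and the only thing that must be checked carefully is the monotone chain from the variance hypothesis to $|\mu'-\expect{\RV{X}}|\leqslant \sigma_0/3$, so that the subsequent call satisfies its initial accuracy precondition.
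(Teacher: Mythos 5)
Your proof follows the paper's own argument essentially verbatim: run the classical median-of-means estimator with $n=9$ and confidence $\delta/2$ to get $\mu'$ with $|\mu'-\expect{\RV{X}}|\leqslant\sqrt{\Var{\RV{X}}/9}\leqslant\sigma_0/3$, then invoke Theorem~\ref{thm: uni_estimator} on $\RV{X}-\mu'$ with $\varepsilon_0=\sigma_0/3$ and confidence $\delta/2$, union-bound the two failure events, and read off the complexity from Theorem~\ref{thm: uni_complexity}. One small remark: your step $\sqrt{\Var{\RV{X}}/9}\leqslant\sigma_0/3$ requires $\Var{\RV{X}}\leqslant\sigma_0^2$, which is the intended hypothesis, but the theorem as printed reads $\Var{\RV{X}}\geqslant\sigma_0^2$ --- that inequality sign is a typo in the paper, and you implicitly (and correctly) used the right direction.
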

\begin{proof}

By Theorem~\ref{thm: classical_uni_estimator}, we know that with probability at least $1 - \frac{\delta}{2}$,
\begin{equation}
\left|\expect{\RV{X} - \mu'}\right| - \left|\mu' \RV{X}\right| \leqslant \frac{1}{3} \sigma_0
\end{equation}
If this happens then with probability at least $1 - \frac{\delta}{2}$, by Theorem~\ref{thm: uni_estimator} we find 
\begin{equation}
\label{eq: notso_good}
\left|\tilde{\mu} - \expect{\RV{X}} \right| = \left|\mu'' - \expect{\RV{X} - \mu'}\right| \leqslant \frac{\sigma_0}{n}
\end{equation}

Combined, Eq.~(\ref{eq: notso_good}) happens with probability at least $1 - \delta$ via union bound. By Theorem~\ref{thm: uni_complexity} it is also easy to see that the complexity is $O(n)$ and $O\left(\log n \left( \log \log n + \log  \frac{1}{\delta}\right)\right)$ in terms of quantum experiments and phase estimations, respectively. 
\end{proof}

Continuing on from here, we can then upgrade to handle a univariate random variable with unknown $\sigma$. Luckily for us, these classical reductions are already discussed in Ref.~\cite{Kothari_2022}. In the text, the authors presented a sequence of problems, each upgrading from the previous ones, eventually building up to handle a random variable with unknown deviation. Algorithm~\ref{alg: not_so_uni_esimator} achieves exactly the same result as Problem 6 in Ref.~\cite{Kothari_2022}. As a special case, we can drop back to a simpler setting corresponding to Problem 3 in the text.
\begin{corollary}[Problem 3 in Ref.~\cite{Kothari_2022}]
\label{cry: connect_to_Ryan}
Given random variable $\RV{X}$ with $\expect{\RV{X}^2} \leqslant 1$ (which means that $\Var{\RV{X}} \leqslant 1$, given parameter $\varepsilon \in (0,1)$, by running Algorithm~\ref{alg: not_so_uni_esimator} with parameters $(n, \delta)$ replaced with $\left(\frac{1}{\varepsilon}, \delta\right)$, we can use $O\left(\frac{1}{\varepsilon}\log \frac{1}{\delta}\right)$ to generate a mean estimate $\tilde{\mu}$ with
\begin{equation}
    \prob{\left|\tilde{\mu} - \expect{\RV{X}}\right|\leqslant \varepsilon } \geqslant 1 - \delta
\end{equation}
\end{corollary}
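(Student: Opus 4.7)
The plan is a direct invocation of Theorem~\ref{thm: notso_uni_estimator} with the substitutions $\sigma_0 = 1$ and $n = 1/\varepsilon$. First, I would verify the hypothesis of that theorem under this choice: since
\begin{equation}
\Var{\RV{X}} = \expect{\RV{X}^2} - \expect{\RV{X}}^2 \leqslant \expect{\RV{X}^2} \leqslant 1,
\end{equation}
the condition $\Var{\RV{X}} \leqslant \sigma_0^2 = 1$ is satisfied, so Algorithm~\ref{alg: not_so_uni_esimator} may legitimately be called with $\sigma_0 = 1$.

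Second, applying the conclusion of Theorem~\ref{thm: notso_uni_estimator} with the above substitution gives
\begin{equation}
\prob{\left|\tilde{\mu} - \expect{\RV{X}}\right| \leqslant \frac{\sigma_0}{n}} = \prob{\left|\tilde{\mu} - \expect{\RV{X}}\right| \leqslant \varepsilon} \geqslant 1 - \delta,
\end{equation}
which is precisely the claimed accuracy guarantee. The complexity bound in Theorem~\ref{thm: notso_uni_estimator} of $O\!\left(n \log \frac{1}{\delta}\right)$ quantum experiment accesses then translates under the substitution to $O\!\left(\frac{1}{\varepsilon} \log \frac{1}{\delta}\right)$, matching the stated bound.

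I do not anticipate any substantive obstacle; this corollary is essentially a notational repackaging of Theorem~\ref{thm: notso_uni_estimator} that trades the parameter $n$ for an accuracy parameter $\varepsilon$. The only minor subtlety is that $1/\varepsilon$ need not be an integer, whereas the underlying Algorithm~\ref{alg: uni_esimator} nominally treats $n$ as a positive integer. This is easily handled by passing $\lceil 1/\varepsilon \rceil$ in place of $1/\varepsilon$, which does not affect the asymptotic complexity (per footnote~\ref{footnote: O_with_delta}); alternatively, one observes that $n$ enters Algorithm~\ref{alg: uni_esimator} only through the ceiling $M = \lceil \log_2(n\varepsilon_0/\sigma_0) \rceil$, so a non-integer value of $n$ is already accommodated without modification.
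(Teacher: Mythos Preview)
Your proposal is correct and matches the paper's intended approach: the corollary is presented in the paper without an explicit proof, as an immediate specialization of Theorem~\ref{thm: notso_uni_estimator} with $\sigma_0 = 1$ and $n = 1/\varepsilon$, which is precisely what you do. Your handling of the integer issue for $n$ is also appropriate (and in fact Algorithm~\ref{alg: not_so_uni_esimator} already declares $n \in \mathbb{R}^+$, so the concern is moot at that level).
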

Corollary~\ref{cry: connect_to_Ryan} mirrors directly Problem 3 as found in Ref.~\cite{Kothari_2022}. Then we can follow the exact same steps to unlock solutions to all the other problems mentioned in the text. (This will be useful in Sec.~\ref{sec: final_classical_reduction}.) \footnote{As a comment, the reduction steps Ref.~\cite{Kothari_2022} have quite a bit free room for speedup by a constant factor. For example, for Problem 4, we can pick parameters such that $\bar{p}$ is reduced by $\frac{1}{2}$ each iteration instead of $\frac{3}{4}$. Additionally, we might get an ever-so-slight further speedup by using Algorithm~\ref{alg: refinement} directly instead of solution to Problem 3 in Ref.~\cite{Kothari_2022} as discussed in Corollary~\ref{cry: connect_to_Ryan}. Luckily for the authors of this current paper, these considerations can be swepted under our big-$O$ notations.}

A direct result of our discussion means that by Corollary~\ref{cry: connect_to_Ryan}, we have obtained a solution to the full univariate mean value estimation problem.
\begin{corollary}[Full Univariate Estimator]
\label{cry: uni_full_estimator}
Given accesses to quantum experiment for univariate random variable $\RV{X}$, using $O\left(n \log \frac{1}{\delta}\right)$ calls one can find an mean estimate $\tilde{\mu}$ such that:
\begin{equation}
    \prob{\left| \tilde{\mu} - \expect{\RV{X}}\right| \leqslant \frac{\sqrt{\Var{\RV{X}}}}{n}} \geqslant 1 - \delta
\end{equation}
\end{corollary}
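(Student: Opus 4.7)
The plan is to obtain this as a direct consequence of Corollary~\ref{cry: connect_to_Ryan} combined with the chain of classical reductions developed in Sec.~3 of Ref.~\cite{Kothari_2022}. First I would observe that Corollary~\ref{cry: connect_to_Ryan} reproduces the interface and complexity of Problem~3 in Ref.~\cite{Kothari_2022} exactly: given a univariate random variable with $\expect{\RV{X}^2}\leqslant 1$ and target accuracy $\varepsilon$, it outputs an $\varepsilon$-accurate mean estimate using $O\!\left(\frac{1}{\varepsilon}\log \frac{1}{\delta}\right)$ calls to the quantum experiment. Because the subsequent problems in Ref.~\cite{Kothari_2022} (Problems~4 through the full estimator) are purely classical reductions that invoke Problem~3 only as a black box, I can plug our corollary in as the base subroutine and inherit the reductions wholesale.

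The chain would proceed as follows. Problem~4 handles a random variable with a known upper bound on its standard deviation $\sigma$: rescale $\RV{X} \to \RV{X}/\sigma$ so that $\expect{(\RV{X}/\sigma)^2}$ is controlled, invoke Corollary~\ref{cry: connect_to_Ryan} with $\varepsilon = 1/n$, and rescale the output. Problem~5 removes the requirement of a tight bound on $\sigma$ via a doubling/quantile-based search that geometrically refines the guess for $\sigma$. Finally, the full estimator handles the completely unknown-variance case via an outer classical routine that couples variance estimation with mean estimation, which is precisely the statement of Corollary~\ref{cry: uni_full_estimator}. A common ingredient at every level of the chain, shared with our Theorem~\ref{thm: log_log}, is to apportion the confidence parameters across the $O(\log n)$ nested calls so that all invocations succeed simultaneously with probability at least $1-\delta$ without accumulating extra polylog factors.

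The main potential obstacle is bookkeeping: ensuring that the substitution of Corollary~\ref{cry: connect_to_Ryan} for Problem~3 of Ref.~\cite{Kothari_2022} preserves all complexity guarantees up to constants, and that the normalizations (second moment versus variance, additive error scale) line up across the interface. Both checks are routine because Corollary~\ref{cry: connect_to_Ryan} was formulated precisely to mirror Problem~3, and in particular any gap between $\expect{\RV{X}^2}$ and $\Var{\RV{X}}$ only matters up to an initial classical mean-centering step as already performed inside Algorithm~\ref{alg: not_so_uni_esimator}. The genuinely new mathematical content of this paper sits in the spectrum analysis of the generalized Grover gate culminating in Corollary~\ref{cry: connect_to_Ryan}; the final step to Corollary~\ref{cry: uni_full_estimator} is a reference to existing reductions and contributes no new difficulty.
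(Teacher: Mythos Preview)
Your proposal is correct and matches the paper's approach: the paper likewise states that Corollary~\ref{cry: connect_to_Ryan} mirrors Problem~3 of Ref.~\cite{Kothari_2022} and then simply invokes the sequence of classical reductions in that reference to reach the full univariate estimator. The only minor difference is that the paper also notes Algorithm~\ref{alg: not_so_uni_esimator} already lands at Problem~6 in Ref.~\cite{Kothari_2022}, giving a second entry point into the same chain, but this does not change the argument.
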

Corollary~\ref{cry: uni_full_estimator} is not necessary for our paper, but it remains a noteworthy detour.

\section{Multivariate Mean Value Estimator}
\label{sec: multi_estimator}

In this section we establish the main results for this paper---two efficient algorithms for mean value estimation. 

\subsection{Simple Estimator}
\label{sec: simple_algo}

Consider a generic multivariate random variable $\vec{\RV{X}}$ with covariance matrix $\Sigma$. Before we start all the fun, first we establish a useful observation:

\begin{lemma}[From Multivariate Trace to Univariate Variance]
\label{lem: variance_prob_bound}
Let $G$ be hypercubic lattice of dimension $d$ and some resolution. Taking vector $\vec u$ from $G$ uniformly at random, we find that for $t > 0$:
\begin{equation}
    \mathbb{P}_{\vec u \sim G} \left[\Var{\expval{\vec u, \vec{\RV{X}}}} \geqslant t \right] \leqslant 2e^{-\frac{t}{D \tr \Sigma}} 
\end{equation}
$\mathbb{P}_{\vec u \in G}$ is the probability when taking $\vec u \in G$ uniformly at random, $D$ is some global constant whose exact value can be easily found. 
\end{lemma}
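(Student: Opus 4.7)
The plan is to recognize $\Var{\expval{\vec u, \vec{\RV{X}}}}$ as a quadratic form in $\vec u$ and apply a standard concentration inequality for quadratic forms in independent bounded coordinates.

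First, with $\vec u$ held fixed, $\Var{\expval{\vec u, \vec{\RV{X}}}} = \vec u^{\top} \Sigma \vec u$ directly from the definition of the covariance matrix, so the lemma reduces to bounding $\mathbb{P}_{\vec u \sim G}\left[\vec u^{\top}\Sigma \vec u \geqslant t\right]$. By the product structure of $G_N$ the coordinates $u^\alpha$ are independent, each supported in an interval of length $\approx 1$ around $0$ and with variance at most $1/4$; hence every coordinate is subgaussian with a universal variance proxy $K = O(1)$.

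Next I would invoke the Hanson--Wright inequality: for any symmetric matrix $A$ and a subgaussian vector with independent components,
\[
\mathbb{P}\left[\left|\vec u^{\top} A \vec u - \mathbb{E}[\vec u^{\top} A \vec u]\right| \geqslant s\right] \leqslant 2 \exp\left(-c \min\left(\frac{s^2}{K^4 \|A\|_F^2},\, \frac{s}{K^2 \|A\|_{\mathrm{op}}}\right)\right).
\]
Specialized to $A = \Sigma$, the PSD property gives $\|\Sigma\|_F^2 = \sum_\alpha \lambda_\alpha^2 \leqslant (\sum_\alpha \lambda_\alpha)^2 = (\tr \Sigma)^2$ and $\|\Sigma\|_{\mathrm{op}} = \lambda_{\max}(\Sigma) \leqslant \tr \Sigma$. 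Meanwhile $\mathbb{E}[\vec u^{\top}\Sigma \vec u] \leqslant C_0 \tr \Sigma$ for a universal constant $C_0$: the diagonal part is $\sum_\alpha \mathbb{E}[(u^\alpha)^2] \Sigma^{\alpha\alpha} = O(\tr \Sigma)$, while the off-diagonal pieces contribute only $O(1/N^2)$ per entry, adding a negligible correction whenever $N$ is not too small compared to $d$.

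Putting the pieces together, for $t \geqslant 2 C_0 \tr \Sigma$ I would set $s = t - \mathbb{E}[\vec u^{\top}\Sigma \vec u] \geqslant t/2$; Hanson--Wright then yields $2\exp\left(-c' \min\left(t^2/(\tr \Sigma)^2,\, t/\tr \Sigma\right)\right)$, and since $t \gtrsim \tr \Sigma$ in this range the linear branch dominates and gives the desired bound for any $D$ above some universal constant. For $t < 2C_0 \tr \Sigma$ the trivial probability bound of $1$ is already dominated by $2 e^{-t/(D \tr \Sigma)}$ provided $D$ is large enough. The main obstacle is precisely this two-regime bookkeeping---Hanson--Wright naturally yields a subgaussian tail for small $s$ and a subexponential tail for large $s$, while the lemma asks for a single subexponential statement valid for all $t > 0$. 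The subgaussian regime corresponds exactly to $t$ comparable to or smaller than $\tr \Sigma$, where the probability is already bounded by $1$; absorbing this case into the universal constant $D$ closes the argument.
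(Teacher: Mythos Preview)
Your approach is correct and genuinely different from the paper's. Both start from $\Var\langle\vec u,\vec{\RV{X}}\rangle=\vec u^{\top}\Sigma\vec u$, but then diverge. The paper diagonalizes $\Sigma$ and writes the quadratic form as $\sum_j \lambda_j \langle \vec v_j,\vec u\rangle^2$; each projection $\langle \vec v_j,\vec u\rangle$ is sub-Gaussian by Hoeffding, so its square is sub-exponential, and the paper then combines these \emph{dependent} sub-exponential pieces via a H\"older inequality with exponents $a_j=\tr\Sigma/\lambda_j$ to bound the MGF directly. You instead invoke Hanson--Wright as a black box, using $\|\Sigma\|_F\leqslant\tr\Sigma$ and $\|\Sigma\|_{\mathrm{op}}\leqslant\tr\Sigma$, and then absorb the small-$t$ regime into the constant $D$. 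Your route is shorter and leans on a named inequality; the paper's route is more self-contained and avoids the two-regime bookkeeping by getting a pure sub-exponential MGF bound from the start.

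One small point: your off-diagonal discussion (``negligible correction whenever $N$ is not too small compared to $d$'') is unnecessary once you note that the coordinates of $G_N$ are exactly mean-zero (the lattice is symmetric about the origin), so $\mathbb{E}[u^\alpha u^\beta]=0$ for $\alpha\neq\beta$ and $\mathbb{E}[\vec u^{\top}\Sigma\vec u]$ is a clean constant times $\tr\Sigma$ with no $N$-dependence. This also means the mean-zero hypothesis of Hanson--Wright is satisfied without further centering.
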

\begin{proof}
For all $\vec u \in G$, we find that:
\begin{equation}
    \Var \expval{\vec u, \vec{\RV{X}}} = \vec u^T \Sigma \vec u
\end{equation}
Since $\Sigma$ is non-negative, consider a set of orthonormal eigenvectors of $\Sigma$, the $j$-th of which we label $\vec v_j$ with eigenvalue $\lambda_j$. Then we know that:
\begin{equation}
\label{eq: var_formula}
 \Var \expval{\vec u, \vec{\RV{X}}} = \vec u^T \Sigma \vec u  = \sum_{j=1}^d \lambda_j \expval{\vec v_j, \vec u}^2
\end{equation}
Fix $j \in [d]$. We know that when taking $\vec u$ uniformly at random from $G$, $\mathbb{E}_{\vec u \sim G}\left[\expval{\vec v_j,\vec u}\right] = 0$. Since $\normtwo{\vec v_j} = 1$, we can apply Hoeffding's inequality to get:
\begin{equation}
    \mathbb{P}_{\vec u \sim G} \left[\left|\expval{\vec v_j, \vec u}\right| \geqslant x\right] \leqslant 2e^{-2x^2}
\end{equation}
for all $x > 0$, where we used the fact that $G \subseteq \left(-\frac{1}{2}, \frac{1}{2}\right)^d$. In statistical jargon, the variable $\expval{\vec v_j, \vec u}$ is sub-Gaussian. 

Now we have wrote the variable of interest as linear combination of squares of sub-Gaussian variables, we will use results on sub-Gaussian and sub-exponential random variables to refine our claims. Since definitions are inconsistent across different sources, we will use the definitions in Ref.~\cite{Vershynin_2018}. \footnote{Other sources include lecture notes as in Refs.~\cite{Rinaldo_2018, Rinaldo_2019}.} 

Note that each of $\expval{\vec v_j, \vec u}$ are sub-Gaussian with the parameter as defined in Proposition 2.5.2 in Ref.~\cite{Vershynin_2018} being some constant.Therefore, $\expval{\vec v_j, \vec u}^2$ is sub-exponential with the parameter defined as in Proposition 2.7.1 being a constant. Thus, for each $j$, for all $0 \leqslant g \leqslant \frac{1}{D_1}$, where $D_1 > 0$ is some constant, we find:
\begin{equation}
    \expect{e^{g\expval{\vec v_j, \vec u}^2}} \leqslant e^{D_1 g}
\end{equation}
Rewrote the above sentence such that the variable of interest is rescaled to $\lambda_j \expval{\vec v_j, \vec u}^2$, we found that 
\begin{equation}
    \forall  g \in \left[0,\frac{1}{D_1 \lambda_j}\right] \quad \expect{e^{g \lambda_j \expval{\vec v_j, \vec u}^2}} \leqslant e^{D_1\lambda_j g}
\end{equation}
Now, we define $a_j = \frac{\sum_{j=1}^d \lambda_j}{\lambda_j} = \frac{\tr \Sigma}{\lambda_j}$. This way we have $\sum_{j=1}^d \frac{1}{a_j} = 1$. Thus, we can apply H\"{o}lder's inequality generalized for multiple variables:
\begin{equation}
    \begin{aligned}
        \expect{e^{g \Var \expval{\vec u, \vec{\RV{X}}}}} & = \expect{e^{g \sum_{j=1}^d \lambda_j \expval{\vec v_j, \vec u}^2}} = \expect{\prod_{j=1}^d e^{g \lambda_j \expval{\vec v_j, \vec u}^2}} \\
        & \leqslant \prod_{j=1}^d \expect{\left(e^{g \lambda_j \expval{\vec v_j, \vec u}^2}\right)^{a_i}}^{\frac{1}{a_i}} = \prod_{j=1}^d \expect{e^{g a_j \lambda_j \expval{\vec v_j, \vec u}^2}}^{\frac{1}{a_i}} \\
        & \leqslant \prod_{j=1}^d \left(e^{D_1 \lambda_j g a_j}\right)^{\frac{1}{a_j}} = \prod_{j=1}^d e^{D_1 \lambda_j g} = e^{g D_1 \left(\sum_{j=1}^d \lambda_j\right)} = e^{g D_1 \tr \Sigma}
    \end{aligned}  
\end{equation}
We have not specified the range of $g$. First, $g \geqslant 0$. Then, we have picked $a_j$ such that for the above steps to work, for all $j$, $g a_j \leqslant \frac{1}{\lambda_j}$, i.e., $g \leqslant \frac{1}{\lambda_j a_j} = \frac{1}{\tr \Sigma}$. 

From here, it's easy to see that $\Var \expval{\vec u, \vec{\RV{X}}}$ is also sub-exponential with parameter $O(\tr \Sigma)$. \footnote{The lecture notes in \cite{Rinaldo_2018} seems to give a even better statement directly, but the author did not show any proof so we will ignore that statement.} Then we can use another equivalent definition for sub-exponential random variable to see that there is some constant $D$ such that:
\begin{equation}
    \mathbb{P}_{\vec u \sim G} \left[\Var{\expval{\vec u, \vec{\RV{X}}}} \geqslant t \right] \leqslant 2e^{-\frac{t}{D \tr \Sigma}} 
\end{equation}
The exact value for $D_1$ and hence $D$ can be evaluated from steps in Ref.~\cite{Vershynin_2018}. One might be able to further refine these parameters with more careful analysis. 
\end{proof}

Using similar ideas for the univariate mean value estimator, we construct a simple algorithm for estimating the mean. First, Algorithm~\ref{alg: refinement} attempts to refine our current knowledge of the mean.

\begin{algorithm}
\caption{\label{alg: refinement}Refinement Step of Simple Multivariate Mean Estimator}
\KwData{Access to Quantum Experiment for Random Variable $\vec{\RV{X}}$, accuracy parameter $\varepsilon \geqslant 0$, confidence parameter $0 < \delta < 1$}
\KwResult{A mean estimate $\vec{\tilde{\mu}}$}
$N \leftarrow 2^{\left\lceil \log_2 \frac{16\pi}{\varepsilon} \right\rceil}$\;
$M \leftarrow 2 \left\lceil \frac{18 \ln\frac{d}{\delta} - 1}{2} \right\rceil + 1$\;
Initialize an array of vectors $\vec{\mu'}$ of length $M$ \;
\For{$\ell \leftarrow 1$ \KwTo $M$}{
    Initialize state $\ket{\psi} = \frac{1}{N^{\frac{d}{2}}} \sum_{\vec u \in G} \ket{\vec u} \ket{\bm{1}} \quad \in \mathcal{H}_G \otimes \mathcal{H}_\Omega$ where $G$ is the hypercubic lattice of dimension $d$ and resolution $N$\;
    Construct $\mathcal{G}$, the Grover operator on random variable $2 \arctan\left(\frac{1}{2}\truncate{\expval{\vec u, \vec{\RV{X}}}}{\frac{1}{\lambda \varepsilon}}\right)$ where $\vec u$ is fetched from $\mathcal{H}_G$ as a control register and $\lambda = \frac{5}{4 - 5\left(\frac{\sqrt{10}}{360}d^\frac{1}{4}\right)^2}$ \;
    Run multidimensional phase estimation algorithm on $\mathcal{G}$ (i.e., replace $U^N$ in Algorithm~\ref{alg: phase_estimation} with $\mathcal{G}^N$) with $\mathcal{H}_\Omega$ register treated as ancilla. Let the result be $\vec x$. Append $2\pi \vec x$ to $\vec{\mu}'$, i.e., $\vec{\mu}'_\ell \leftarrow 2 \pi \vec x$\;
}
Initialize output vector $\vec{\tilde{\mu}}$ of dimension $d$\;
\For{$\alpha \leftarrow 1$ \KwTo $d$}{
    $\tilde{\mu}^\alpha \leftarrow \text{ median of } \left\{\ell \in [M]: \mu'^\alpha_\ell \right\}$\;
}
Output $\vec{\tilde{\mu}}$\;
\end{algorithm}

To clarify, in Algorithm~\ref{alg: refinement}, $\mathcal{G} = \mathcal{R}\mathcal{O}$ where $\mathcal{R}$ is the reflection gate on $\mathcal{H}_\Omega$ and $\mathcal{O}$ is a univariate phase oracle controlled by $\mathcal{H}_G$, as:
\begin{equation}
    \mathcal{O}\ket{\vec u}\ket{k} = e^{i \expval{\vec u, \vec{\mathcal{X}_k}}} \ket{\vec u} \ket{k}
\end{equation}
Thus we run phase estimation treating $\mathcal{H}_\Omega$ as an ancilla register. It can been seen that the multidimensional phase estimation in Algorithm~\ref{alg: phase_estimation} functions as usual with additional ancillas, except that in Theorem~\ref{thm: phase_estimation} the quantum states will be the full state of the system to account for entangled disturbances on ancilla. Here in Theorem~\ref{thm: refinement} we formally prove the validity of the algorithm.

\begin{theorem}[Correctness of Multivariate Refinement Step, Algorithm~\ref{alg: refinement}]
\label{thm: refinement}
Consider a multivariate random variable $\vec{\RV{X}}$ with covariance matrix $\Sigma$. When it satisfy
\begin{itemize}
    \item $\norm{\expect{\vec{\RV{X}}}}_\infty \leqslant \sqrt{d}\varepsilon$ \footnote{The reader can intuitively understand this as $\norm{\expect{\vec{\RV{X}}}}_\infty \leqslant \varepsilon$. We wrote $\norm{\expect{\vec{\RV{X}}}}_\infty \leqslant \sqrt{d}\varepsilon$ as a slight generalization, for later convenience for the proof of Theorem~\ref{thm: con_sim_estimator}.};
    \item $\tr \Sigma \leqslant \left(\frac{1}{120 d^{\frac{1}{4}}}\sqrt{\frac{1}{10D}}\right)^2$;
\end{itemize}
where $\varepsilon \leqslant \frac{1}{900 d^{\frac{3}{4}}}$, $D$ is the constant in Lemma~\ref{lem: variance_prob_bound}.
We claim that the algorithm returns an estimate $\vec{\tilde{\mu}}$ such that
\begin{equation}
    \prob{\norm{\vec{\tilde{\mu}} - \expect{\vec{\RV{X}}} }_\infty \leqslant \frac{\varepsilon}{2}} \geqslant 1 - \delta 
\end{equation}
\end{theorem}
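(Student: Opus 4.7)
The plan is to reduce the analysis to a noisy application of multidimensional phase estimation (Theorem~\ref{thm: phase_estimation}), using Theorem~\ref{thm: alpha_RV_diff} on each ``good'' lattice direction $\vec u$ and Lemma~\ref{lem: variance_prob_bound} (with the same sub-Gaussian machinery used in its proof) to bound the contribution of ``bad'' $\vec u$. Write $\vec\mu = \expect{\vec{\RV{X}}}$. The idealized target is the multivariate phase unitary at $\vec x = \vec\mu/(2\pi)$, whose $N$-th power sends $\ket{\vec u} \mapsto e^{iN\langle\vec u,\vec\mu\rangle}\ket{\vec u}$---precisely the action that $\mathcal{G}^N$ will approximately realize on $\ket{\vec u}\ket{\bm 1}$.

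First I would partition $G$ into good lattice points, for which both $|\langle\vec u,\vec\mu\rangle| \leqslant \varepsilon'$ and $\expect{\langle\vec u,\vec{\RV{X}}\rangle^2} \leqslant s_0^2$ hold with $s_0 = \tfrac{\sqrt{10}}{360}d^{1/4}$ matching the algorithm's choice of $\lambda$, and bad points. Both tail probabilities must be shown exponentially small: the $\Var$ tail is exactly Lemma~\ref{lem: variance_prob_bound}, and the $|\langle\vec u,\vec\mu\rangle|$ tail follows from the same sub-Gaussian bookkeeping, since $\langle\vec u,\vec\mu\rangle$ is sub-Gaussian with parameter $O(\normtwo{\vec\mu}) = O(d\varepsilon)$ when $\vec u$ is drawn uniformly from $G$. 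The hypotheses on $\tr\Sigma$ and $\varepsilon$ are calibrated exactly so that these tails contribute negligibly to the state-distance budget computed below.

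For good $\vec u$, Theorem~\ref{thm: alpha_RV_diff} applied to $\langle\vec u,\vec{\RV{X}}\rangle$ bounds $\|\mathcal{G}^N\ket{\vec u}\ket{\bm{1}} - e^{iN\langle\vec u,\vec\mu\rangle}\ket{\vec u}\ket{\bm{1}}\|^2$ in terms of $s_0$, $\varepsilon'$, and $N\varepsilon' s_0^2$; with $N \in \Theta(1/\varepsilon)$ and our chosen constants this is small. Summing over $\vec u \in G$---good ones by that bound, bad ones trivially bounded by $4$ and weighted by their exponentially small fraction---yields
\begin{equation*}
\left\| \mathcal{G}^N \ket{\psi} - \ket{\varphi_{\text{ideal}}} \right\|^2 \leqslant \left(\tfrac{1}{12}\right)^2, \qquad \ket{\varphi_{\text{ideal}}} = \tfrac{1}{N^{d/2}} \sum_{\vec u \in G} e^{i N \langle \vec u, \vec\mu\rangle} \ket{\vec u}\ket{\bm{1}}.
\end{equation*}
Because $\ket{\varphi_{\text{ideal}}}$ factors as a product with $\ket{\bm{1}}$ on $\mathcal{H}_\Omega$, that register is a genuine ancilla and Corollary~\ref{cry: phase_estimation} applies with $\kappa=4$ and noise $\tfrac{1}{12}$, giving $\prob{|\mu'^\alpha_\ell - \mu^\alpha| > \varepsilon/2} \leqslant 1/3$ per coordinate per iteration (since $N \geqslant 16\pi/\varepsilon$ converts $4/N$ on the rescaled phase $\mu^\alpha/(2\pi)$ into $\varepsilon/2$ on $\mu^\alpha$). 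The median over the $M$ iterations (Theorem~\ref{thm: median}) boosts per-coordinate success to $\geqslant 1 - \delta/d$, and a union bound over $d$ coordinates finishes.

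The main obstacle is the third step: quantitatively combining the per-direction Grover gate error from Theorem~\ref{thm: alpha_RV_diff} with the bad-$\vec u$ tail so that the cumulative squared state distance fits inside the $(1/12)^2$ budget demanded by Corollary~\ref{cry: phase_estimation}. This is a tedious but mechanical bookkeeping exercise in which the algorithm-defined constants $\tfrac{\sqrt{10}}{360}d^{1/4}$, $\tfrac{1}{120 d^{1/4}}\sqrt{1/(10D)}$, and $\tfrac{1}{900 d^{3/4}}$ are precisely those needed to close the arithmetic.
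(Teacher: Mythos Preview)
Your plan is essentially the paper's proof: split $G$ into good and bad directions via Lemma~\ref{lem: variance_prob_bound} and Hoeffding, apply Theorem~\ref{thm: alpha_RV_diff} on good directions, aggregate into the $(1/12)^2$ state-distance budget of Corollary~\ref{cry: phase_estimation}, then median-boost per coordinate and union-bound over $[d]$. Two numerical details to correct before the bookkeeping actually closes: (i) the value $s_0 = \tfrac{\sqrt{10}}{360}\,d^{1/4}$ you read off the algorithm's $\lambda$ is a typo in the paper---the proof uses $s_0 = \tfrac{\sqrt{10}}{360\,d^{1/4}}$, which is what keeps $s_0\leqslant\tfrac13$ and the $(N\varepsilon' s_0^2)^2$ term in Theorem~\ref{thm: alpha_RV_diff} bounded; (ii) the Hoeffding step needs $\normtwo{\vec\mu}\leqslant\sqrt{d}\,\varepsilon$, not the $O(d\varepsilon)$ you wrote---with sub-Gaussian parameter $d\varepsilon$ the tail at the required threshold $\Theta(\sqrt{d}\,\varepsilon)$ is only $e^{-O(1/d)}$ and blows the budget, so the paper tacitly uses this stronger $\ell^2$ bound (which is how the theorem is actually invoked in Theorem~\ref{thm: con_sim_estimator}).
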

\begin{proof}
Let $G$ be hypercubic lattice of dimension $d$ with resolution $N$ as specified in Algorithm~\ref{alg: refinement}.

As just discussed in the main text, Theorem~\ref{thm: phase_estimation} still holds, except that the quantum states involved are now the full state of the system with extra registers. When we run multidimensional phase estimation on $\mathcal{G}$ as defined in Algorithm~\ref{alg: refinement}, before the QFT step, we achieve a state
\begin{equation}
    \ket{\varphi'} = \frac{1}{N^{\frac{d}{2}}} \sum_{\vec u \in G} \ket{\vec{u}} \left(\mathcal{G}_{\vec u}^N \ket{\bm{1}}\right)
\end{equation}
where $\mathcal{G}_{\vec u}$ is the Grover operator for random variable $2 \arctan \left( \frac{1}{2} \expval{\vec u, \vec{\RV{X}}} \right)$ with each $\vec u \in G$. We aim to approximate the following:
\begin{equation}
    \ket{\varphi} = \frac{1}{N^{\frac{d}{2}}} \sum_{\vec u \in G} \ket{\vec{u}} \left(e^{i N\expval{\vec u, \expect{\vec{\RV{X}}}}} \ket{\bm{1}}\right)
\end{equation}
These states differ by:
\begin{equation}
    \label{eq: refinement_diff}
    \norm{\ket{\varphi'} - \ket{\varphi}}^2 = \frac{1}{|G|} \sum_{\vec u \in G} \norm{\mathcal{G}_{\vec u}^N \ket{\bm{1}} - e^{i N \expval{\vec u, \expect{\vec{\RV{X}}}}} \ket{1}}^2 
\end{equation}
as $|G| = N^d$. 

Now we obtain the following results:
\begin{itemize}
    \item By Lemma~\ref{lem: variance_prob_bound} we find that 
    \begin{equation}
        \mathbb{P}_{\vec u \in G}\left[\Var\expval{\vec u, \vec{\RV{X}}} \geqslant \left(\frac{1}{120d^\frac{1}{4}} \right)^2 \right] \geqslant 2 e^{-10}
    \end{equation}
    \item Using Hoeffding's inequality similar as in Lemma~\ref{lem: variance_prob_bound}, since $\frac{1}{|G|}\sum_{\vec u \in G} \expval{\vec u, \expect{\vec{\RV{X}}}} = 0$, we know that 
    \begin{equation}
        \label{eq: refinement_Hoeffding}
        \mathbb{P}_{\vec u \in G} \left[ \expval{\vec u, \expect{\vec{\RV{X}}}} \geqslant 2.5 \sqrt{d} \varepsilon \right] \geqslant 2 e^{- \frac{ 12.5 d \varepsilon^2}{d \varepsilon^2}} = 2 e^{-12.5}
    \end{equation}
    where we also note that $2.5 \sqrt{d} \varepsilon \leqslant \frac{1}{360d^\frac{1}{4}}$.
\end{itemize}

Using union bound, we know that for at least $1 - 2(e^{-10}+e^{-12}) \geqslant 1 - 9.83 \times 10^{-5}$ of all $\vec u \in G$, we must have:
\begin{itemize}
    \item $\expval{\vec u, \vec{\expect{\RV{X}}}} < 2.5 \sqrt{d} \varepsilon \leqslant \frac{1}{360d^\frac{1}{4}}$;
    \item $\Var\expval{\vec u, \vec{\RV{X}}} < \left(\frac{1}{120d^\frac{1}{4}}\right)^2$ which implies that $\expval{\vec u, \expect{\vec{\RV{X}}}}^2 < \left(\frac{1}{360d^\frac{1}{4}}\right)^2 + \left(\frac{1}{120d^\frac{1}{4}}\right)^2 = \left(\frac{\sqrt{10}}{360d^\frac{1}{4}}\right)^2$.
\end{itemize}
In this case, Theorem~\ref{thm: alpha_RV_diff} holds with $(\varepsilon, s_0)$ replaced by $\left(2.5 \sqrt{d} \varepsilon, \frac{\sqrt{10}}{360d^\frac{1}{4}}\right)$. So we find:
\begin{equation}
    \label{eq: refinement_bound_u_1}
    \norm{\mathcal{G}_{\vec u}^N \ket{\bm{1}} - e^{i \expval{\vec u, \expect{\vec{\RV{X}}}}} \ket{1}}^2 \leqslant 4.166 \times 10^{-3}
\end{equation}
where we used the fact that $N \leqslant 2 \times \frac{16 \pi}{\varepsilon}$, so we plug in $N \varepsilon \geqslant 36\pi \times 2.5 \sqrt{d}$.

When $\vec u$ is not nice enough and our assumptions no longer hold we use a generous bound:
\begin{equation}
    \label{eq: refinement_bound_u_2}
    \norm{\mathcal{G}_{\vec u}^N \ket{\bm{1}} - e^{i N \expval{\vec u, \expect{\vec{\RV{X}}}}} \ket{1}}^2 \leqslant 4
\end{equation}

Combine Eqs.~(\ref{eq: refinement_diff}, \ref{eq: refinement_bound_u_1}, and \ref{eq: refinement_bound_u_2}) we obtain:
\begin{equation}
    \label{eq: refinement_diff_bound}
    \norm{\ket{\varphi'}-\ket{\varphi}} \leqslant 4.166 \times 10^{-3} +9.83 \times 10^{-5} \times 4  \leqslant 4.56 \times 10^{-3} < \left(\frac{1}{12}\right)^2
\end{equation}

This result, combined with Theorem~\ref{thm: phase_estimation}, Corollary~\ref{cry: phase_estimation} more specifically, implies that for each $l$, at the end of phase estimation step, $\mu'_\ell$ satisfy
\begin{equation}
    \forall \alpha \in [d] \quad \mathbb{P} \left[\left|\frac{\left(\mu'_\ell\right)^\alpha}{2\pi} - \frac{\expect{\RV{X}^\alpha}}{2\pi}\right| > \frac{4}{N}\right] \leqslant \frac{1}{3}
\end{equation}

Because we have picked $N \geqslant \frac{16\pi}{\varepsilon}$ we know $\frac{4}{N} \leqslant \frac{1}{4\pi} \varepsilon$. Thus, for all $\alpha \in [d]$,
\begin{equation}
    \begin{multlined}
       \mathbb{P} \left[\left|\left(\mu'_\ell\right)^\alpha -\expect{\RV{X}^\alpha}\right| > \frac{1}{2} \varepsilon\right]  =  \mathbb{P} \left[\left|\frac{\left(\mu'_\ell\right)^\alpha}{2\pi} - \frac{\expect{\RV{X}^\alpha}}{2\pi}\right| > \frac{1}{4\pi} \varepsilon \right] \\
        \leqslant \mathbb{P}  \left[\left|\frac{\left(\mu'_\ell\right)^\alpha}{2\pi} - \frac{\expect{\RV{X}^\alpha}}{2\pi}\right| > \frac{4}{N} \right] \leqslant \frac{1}{3}
    \end{multlined}
\end{equation}
That is, the subroutine returns an accurate estimate with \begin{equation}
    \prob{\left|\left(\mu'_\ell\right)^\alpha -\expect{\RV{X}^\alpha}\right| \leqslant \frac{1}{2} \varepsilon} \geqslant \frac{2}{3}
\end{equation} 

By taking Theorem~\ref{thm: median}, with parameter $\delta$ replaced by our $\frac{\delta}{d}$ we obtain:
\begin{equation}
     \mathbb{P}[\left|\tilde{\mu}^\alpha - \expect{\RV{X}^\alpha} \right| > \frac{\varepsilon}{2}] \leqslant \frac{\delta}{d}
\end{equation}
Use union bound to combine all dimensions $\alpha \in [d]$, we conclude:
\begin{equation}
    \prob{\norm{\vec{\tilde{\mu}} - \expect{\vec{\RV{X}}}}_\infty \leqslant \frac{\varepsilon}{2}} \geqslant 1 - \delta 
\end{equation}

\end{proof}

Following Remark~\ref{remark: alpha_RV} we conclude
\begin{remark}[Complexity of Multivariate Refinement Step, Algorithm~\ref{alg: refinement}]
    \label{remark: sim_refinement_complexity}
    Algorithm~\ref{alg: refinement} uses $O\left(\frac{1}{\varepsilon}\log\frac{d}{\delta}\right)$ quantum experiments to $\vec{\RV{X}}$.
\end{remark}

In a fashion extremely similar to what was discussed in Section~\ref{sec: uni_estimator}, we construct a multivariate mean value estimator as found in Algorithm~\ref{alg: con_sim_estimator}.

\begin{algorithm}
\caption{\label{alg: con_sim_estimator} Constrained Simple Multivariate Mean Value Estimator}
\KwData{Access to Quantum Experiment to Multivariate Random Variable $\vec{\RV{X}}$, number of trials used $n \in \mathbb{N}^+$, variance ($\sqrt{\tr \Sigma}$) bound $\sigma_0 \geqslant 0$, mean bound $0 \leqslant \varepsilon_0 \leqslant \frac{2}{15}\sqrt{10D} \sigma_0$, confidence parameter $0 < \delta < 1$}
\KwResult{An mean value estimate $\vec{\tilde{\mu}}$}
$K \leftarrow 120 d^{\frac{1}{4}}\sqrt{10D}$\;
$M \leftarrow \left\lceil\log_2 \frac{n \varepsilon_0}{\sigma_0}\right\rceil$ \;
Set array $\varepsilon'$ of length $M$ with  $\varepsilon'_\ell = \frac{\varepsilon_0}{2^{\ell-1} K \sigma_0} \forall \ell \in [M]$\;
Set array $\delta'$ of length $M$ via Theorem~\ref{thm: log_log} according to array $\varepsilon'$ with parameters $(\delta,R)$ set to $(\delta,2)$\;
$\vec{\tilde{\mu}} \leftarrow 0$ \;
\For{$\ell = 1$ \KwTo $M$}{
    $\vec{\mu}_\ell \leftarrow$ result of calling Algorithm~\ref{alg: refinement} on $\frac{\vec{\RV{X}}-\vec{\tilde{\mu}}}{K \sigma_0}$ with parameters $(\varepsilon, \delta)$ set to $(\varepsilon'_\ell, \delta'_\ell)$ \;
    $\vec{\tilde{\mu}} \leftarrow \vec{\tilde{\mu}} + K \sigma_0 \vec{\mu}_\ell$\; 
}
Output $\tilde{\mu}$\;
\end{algorithm}
\begin{theorem}[Complexity of Algorithm~\ref{alg: con_sim_estimator}]
\label{thm: con_sim_estimator}
When $\vec{\RV{X}}$, with covariance matrix $\Sigma$, satisfy
\begin{itemize}
    \item $\tr \Sigma \leqslant \sigma_0$;
    \item $\normtwo{\expect{\vec{\RV{X}}}} \leqslant \varepsilon_0 \leqslant \frac{2}{15}\sqrt{10D} \sigma_0$;
\end{itemize}
Then Algorithm~\ref{alg: con_sim_estimator} returns a mean estimate $\vec{\tilde{\mu}}$ such that 
\begin{equation}
    \prob{\norm{\vec{\tilde{\mu}} - \expect{\vec{\RV{X}}}}_\infty \leqslant \frac{\sigma_0}{n}} \geqslant 1 - \delta
\end{equation}

\end{theorem}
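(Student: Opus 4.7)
The plan is to mirror the structure of the univariate proof in Theorem~\ref{thm: uni_estimator}, using the multivariate refinement step (Theorem~\ref{thm: refinement}) in place of Theorem~\ref{thm: uni_refinement} and chaining calls together via the log-log trick (Theorem~\ref{thm: log_log}). Define $\vec{\tilde{\mu}}_\ell$ as the value of $\vec{\tilde{\mu}}$ after the $\ell$-th iteration of the loop, with $\vec{\tilde{\mu}}_0 = \vec 0$. At each iteration we feed the scaled random variable $\vec{\RV{Y}}_\ell = (\vec{\RV{X}} - \vec{\tilde{\mu}}_{\ell-1})/(K\sigma_0)$ to Algorithm~\ref{alg: refinement} with accuracy parameter $\varepsilon'_\ell$ and confidence parameter $\delta'_\ell$.

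First I would verify that the trace-of-covariance precondition of Theorem~\ref{thm: refinement} is met every iteration: since the subtraction of a constant vector leaves the covariance unchanged, the covariance of $\vec{\RV{Y}}_\ell$ is $\Sigma/(K\sigma_0)^2$, so its trace is bounded by $1/K^2$, and the choice $K = 120 d^{1/4}\sqrt{10 D}$ is precisely what makes this match the bound $(1/(120 d^{1/4}\sqrt{10D}))^2$ required by Theorem~\ref{thm: refinement}. Next I would set up the inductive success predicates
\begin{equation}
    P_\ell \equiv \left\lVert \vec{\tilde{\mu}}_\ell - \expect{\vec{\RV{X}}}\right\rVert_\infty \leqslant K \sigma_0 \varepsilon'_{\ell+1},
\end{equation}
with the convention $\varepsilon'_{M+1} = \varepsilon'_M/2$, where for $\ell > 1$ the predicate $P_\ell$ is declared vacuously true when $P_{\ell-1}$ fails, following the same trick used in the proof of Theorem~\ref{thm: uni_estimator}. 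The base case $P_1$ follows from applying Theorem~\ref{thm: refinement} to $\vec{\RV{Y}}_1$, using $\|\expect{\vec{\RV{Y}}_1}\|_\infty \leqslant \|\expect{\vec{\RV{X}}}\|_2/(K\sigma_0) \leqslant \varepsilon'_1 \leqslant \sqrt{d}\varepsilon'_1$, which is exactly the reason the footnoted $\sqrt{d}$ slack was built into Theorem~\ref{thm: refinement}.

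For the inductive step, assume $P_{\ell-1}$ holds. Then $\|\expect{\vec{\RV{Y}}_\ell}\|_\infty \leqslant \varepsilon'_\ell \leqslant \sqrt{d}\varepsilon'_\ell$, so the refinement-step preconditions hold for $\varepsilon'_\ell$ and Theorem~\ref{thm: refinement} gives $\|\vec{\mu}_\ell - \expect{\vec{\RV{Y}}_\ell}\|_\infty \leqslant \varepsilon'_\ell/2$ with probability at least $1 - \delta'_\ell$; rescaling by $K\sigma_0$ produces $P_\ell$. Hence each iteration ``succeeds'' (in the sense required by Theorem~\ref{thm: log_log}) with probability at least $1 - \delta'_\ell$, so the log-log trick yields $\Pr[P_1 \wedge \cdots \wedge P_M] \geqslant 1 - \delta$. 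Unwinding the recurrence, $P_M$ states $\|\vec{\tilde{\mu}} - \expect{\vec{\RV{X}}}\|_\infty \leqslant K\sigma_0 \varepsilon'_M/2 = \varepsilon_0/2^M \leqslant \sigma_0/n$ by the choice of $M = \lceil \log_2(n \varepsilon_0/\sigma_0)\rceil$, which is the claimed conclusion.

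The main obstacle is bookkeeping: one must verify that at every iteration the scaled variable $\vec{\RV{Y}}_\ell$ simultaneously satisfies all three preconditions of Theorem~\ref{thm: refinement} (the trace bound, the mean bound in $\ell^\infty$, and the absolute ceiling $\varepsilon'_\ell \leqslant 1/(900 d^{3/4})$), the last being where the restriction $\varepsilon_0 \leqslant \tfrac{2}{15}\sqrt{10D}\sigma_0$ together with the $\sqrt{d}$ slack in Theorem~\ref{thm: refinement} is essential. Once the invariants are in place, chaining them through Theorem~\ref{thm: log_log} is mechanical, just as in the univariate case.
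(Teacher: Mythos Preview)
Your proof sketch is correct and follows essentially the same approach as the paper's own proof: define the running estimate $\vec{\tilde{\mu}}_\ell$, set up the success predicates $P_\ell$ with the ``vacuously true if the previous step failed'' convention, verify the covariance-trace and mean preconditions of Theorem~\ref{thm: refinement} at each step (exploiting the $\sqrt d$ slack in its mean hypothesis), and then chain the calls with Theorem~\ref{thm: log_log}. The only cosmetic difference is that you write the invariant as $\lVert\vec{\tilde{\mu}}_\ell - \expect{\vec{\RV{X}}}\rVert_\infty \leqslant K\sigma_0\varepsilon'_{\ell+1}$ while the paper writes $\frac{K}{2}\sigma_0\varepsilon'_\ell$, which are the same thing.
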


\begin{proof}

The proof is nearly identical to that of Theorem~\ref{thm: uni_estimator} with only a few tweaks.

Define $\vec{\tilde{\mu}}_\ell$ to be the value of $\vec{\tilde{\mu}}$ after the $\ell$-th time Algorithm~\ref{alg: uni_refinement} is called. $\vec{\tilde{\mu}}_0 = 0$. 
$\forall \ell \in [M]$, consider random variable $\vec{\RV{Y}}_\ell = \frac{\vec{\RV{X}}-\vec{\tilde{\mu}}_{\ell-1}}{K \sigma_0}$ \footnote{$\ell$ is array index. There is your beloved index confusion.}, say it has covariance matrix $\Sigma_\ell$. We can see that $\tr \Sigma_\ell \leqslant \left(\frac{1}{120 d^{\frac{1}{4}}\sqrt{10D}}\right)^2$. 

For the $\ell$-th time Algorithm~\ref{alg: uni_refinement} is called, if $\ell = 1$ define success condition as:
\begin{equation}
    P_1 \equiv \left(\norm{\vec{\tilde{\mu}}_1 - \expect{\vec{\RV{X}}}}_\infty \leqslant \frac{K}{2} \sigma_0 \varepsilon'_1\right)
\end{equation}
Just as what was done in proof for Theorem~\ref{thm: uni_estimator}, for $\ell > 1$, define success condition as the following:
\begin{equation}
    P_\ell \equiv \begin{cases}
        \norm{\vec{\tilde{\mu}}_\ell - \expect{\vec{\RV{X}}}}_\infty \leqslant \frac{K}{2} \sigma_0 \varepsilon'_\ell & \left(\norm{\vec{\tilde{\mu}}_{\ell-1} - \expect{\vec{\RV{X}}}}_\infty \leqslant \frac{K}{2} \sigma_0 \varepsilon'_{\ell-1}\right) \\
        \text{true} & \text{otherwise} \\
    \end{cases} 
\end{equation}
This gives two key properties:
\begin{itemize}
    \item By assumption on $\vec{\RV{X}}$ and our specific pick of $K$, $\expect{\frac{\vec{\RV{X}}}{K\sigma_0}} \leqslant \frac{\varepsilon_0}{K \sigma_0} \leqslant \frac{1}{900d^{\frac{1}{4}}}$, so Theorem~\ref{thm: refinement} holds with parameters $(\varepsilon, \delta)$ replaced with $( \frac{\varepsilon_0}{K\sigma_0}, \delta'_1)$, so we know that 
    \begin{equation}
        \mathbb{P}[P_1] = \prob{\norm{\vec{\tilde{\mu}}_1 - \expect{\vec{\RV{X}}}}_\infty \leqslant \frac{K}{2} \sigma_0 \varepsilon'_1} = \prob{\norm{\vec{\mu}_1 - \expect{\frac{\vec{\RV{X}}}{K\sigma_0}}}_\infty \leqslant \frac{\varepsilon_0}{2} } \geqslant 1 - \delta'_1 
    \end{equation}
    \item Consider $\ell \in [M-1]$. Assume $\norm{\vec{\tilde{\mu}}_\ell - \expect{\vec{\RV{X}}}}_\infty \leqslant \frac{K}{2} \sigma_0 \varepsilon'_\ell$. Then $\norm{\frac{\expect{\vec{\RV{X}}} - \vec{\tilde{\mu}}_\ell}{K\sigma_0}}_\infty \leqslant \frac{\varepsilon'_\ell}{2} = \varepsilon'_{\ell+1}$ which means that $\normtwo{\frac{\expect{\vec{\RV{X}}} - \vec{\tilde{\mu}}_\ell}{K\sigma_0}} \leqslant \sqrt{d}\varepsilon'_{\ell+1} \leqslant \frac{1}{900d^{\frac{1}{4}}}$. So Theorem~\ref{thm: uni_refinement} holds with parameters $(\varepsilon, \delta)$ replaced with $(\varepsilon'_{\ell+1}, \delta'_{\ell+1})$, so we know that 
    \begin{equation}
        \begin{aligned}         
            \prob{\norm{\vec{\tilde{\mu}}_{\ell+1} - \expect{\vec{\RV{X}}}}_\infty \leqslant \frac{K}{2}\sigma_0 \varepsilon'_{\ell+1} } &   = \prob{\norm{K \sigma_0 \vec{\mu}_{\ell+1} - \expect{\vec{\RV{X}} - \vec{\tilde{\mu}}_\ell}}\infty\leqslant \frac{K}{2} \sigma_0 \varepsilon'_{\ell+1} } \\
                & = \prob{\norm{\vec{\mu}_{\ell+1} - \expect{\frac{\vec{\RV{X}}-\vec{\tilde{\mu}}_\ell}{K\sigma_0}}}_\infty \leqslant \frac{\varepsilon'_{\ell+1}}{2} } \geqslant 1 - \delta'_{\ell+1}
            \end{aligned} 
    \end{equation}
\end{itemize}
From the definition of our success condition $\{P_\ell\}$, we know that at $\ell$-th time Algorithm~\ref{alg: uni_refinement} is called it always succeeds with probability at least $1 - \delta_\ell$. Thus, we can use Theorem~\ref{thm: log_log} to know that
\begin{equation}
    \prob{P_1 \wedge P_2 \wedge \cdots P_M} \geqslant 1 - \delta
\end{equation}
But $P_1 \wedge P_2 \wedge \cdots P_M$ implies (due to our pick of $M$)
\begin{equation}
    \norm{\vec{\tilde{\mu}} - \expect{\vec{\RV{X}}}}_\infty \leqslant 2 \sigma_0 \varepsilon'_M = \frac{\varepsilon_0}{2^M} \leqslant \frac{\sigma_0}{n}
\end{equation}
That is, we find that:
\begin{equation}
    \prob{\norm{\vec{\tilde{\mu}} - \expect{\vec{\RV{X}}}}_\infty \leqslant \frac{\sigma_0}{n}} \geqslant 1 - \delta
\end{equation}
\end{proof}

The complexity is given by
\begin{theorem}[Complexity of Algorithm~\ref{alg: con_sim_estimator}]
    \label{thm: con_sim_complexity}
    The algorithm always uses $O\left(n d^\frac{1}{4} \log \frac{d}{\delta} \right)$ access to quantum experiment to return the result. In terms of quantum memory (outside all potential ancillas) it needs 1 register for results of quantum experiment. 
\end{theorem}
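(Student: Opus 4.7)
The plan is to mirror the analysis of Theorem~\ref{thm: uni_complexity}, combining the per-call complexity from Remark~\ref{remark: sim_refinement_complexity} with the log-log scheduling from Theorem~\ref{thm: log_log}. First I would unpack the parameters defined in Algorithm~\ref{alg: con_sim_estimator}: because $\varepsilon'_{\ell+1} = \varepsilon'_\ell/2$ and $M = \lceil\log_2(n\varepsilon_0/\sigma_0)\rceil$, the terminal scale satisfies $2^{M-1} \geqslant \frac{n\varepsilon_0}{2\sigma_0}$, so $\frac{1}{\varepsilon'_M} = O(nK) = O(n d^{1/4})$, since $K = 120 d^{1/4}\sqrt{10D}$.

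By Remark~\ref{remark: sim_refinement_complexity}, the $\ell$-th call to Algorithm~\ref{alg: refinement} costs $O\!\left(\tfrac{1}{\varepsilon'_\ell}\log \tfrac{d}{\delta'_\ell}\right)$ quantum experiments. The key maneuver is to split
\begin{equation}
\log\frac{d}{\delta'_\ell} = \log d + \log \frac{1}{\delta'_\ell}
\end{equation}
so that the total cost decomposes into $\log d \sum_\ell \frac{1}{\varepsilon'_\ell}$ plus $\sum_\ell \frac{1}{\varepsilon'_\ell}\log\frac{1}{\delta'_\ell}$. The first sum is a geometric series in $1/\varepsilon'_\ell$ dominated by its last term, yielding $O\!\left(\frac{\log d}{\varepsilon'_M}\right) = O(n d^{1/4}\log d)$. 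The second sum is exactly the quantity bounded by Theorem~\ref{thm: log_log} (invoked with $R = 2$ as in the algorithm), so it contributes $O\!\left(\frac{1}{\varepsilon'_M}\log\frac{1}{\delta}\right) = O(n d^{1/4}\log\frac{1}{\delta})$. Adding these gives the claimed $O(n d^{1/4}\log\frac{d}{\delta})$.

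For the memory claim, I would inspect the structure of Algorithm~\ref{alg: refinement}: each iteration prepares one hypercubic-lattice register $\mathcal{H}_G$ (not holding experiment data) and exactly one $\mathcal{H}_\Omega$ register initialized to $\ket{\bm{1}}$, on which the Grover iterate $\mathcal{G}$ is applied in place. Since the univariate phase oracle of Definition~\ref{def: univariate phase oracle} and the reflection $\mathcal{R} = 2\ket{\bm{1}}\bra{\bm{1}} - I$ both act on this single register without requiring fresh copies, and since successive outer iterations of Algorithm~\ref{alg: con_sim_estimator} can reuse the same physical register after reset, only one $\mathcal{H}_\Omega$ register is ever needed for experiment outputs.

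The main obstacle is the mild mismatch between the cost form $\frac{1}{\varepsilon}\log\frac{d}{\delta}$ appearing here and the $\frac{1}{\varepsilon}\log\frac{1}{\delta}$ form assumed by Theorem~\ref{thm: log_log}; the $\log d$ factor is handled outside the trick via the geometric-sum argument above, which is only possible because $\log d$ is independent of $\delta'_\ell$ and the $\varepsilon'_\ell$ decay geometrically. Everything else is routine bookkeeping analogous to the proof of Theorem~\ref{thm: uni_complexity}.
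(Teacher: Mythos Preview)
Your proposal is correct and follows essentially the same approach as the paper: compute $1/\varepsilon'_M = O(nd^{1/4})$ from the definitions, then combine Remark~\ref{remark: sim_refinement_complexity} with Theorem~\ref{thm: log_log}. You are in fact more explicit than the paper on the one nontrivial point---the paper invokes Theorem~\ref{thm: log_log} directly even though the per-call cost has shape $\tfrac{1}{\varepsilon}\log\tfrac{d}{\delta}$ rather than $\tfrac{1}{\varepsilon}\log\tfrac{1}{\delta}$, while your split $\log\tfrac{d}{\delta'_\ell}=\log d+\log\tfrac{1}{\delta'_\ell}$ and the geometric-sum argument for the $\log d$ piece fills that gap cleanly.
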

\begin{proof}
We have:
\begin{equation}
\frac{1}{\varepsilon'_M} = \frac{2^{M-1}K \sigma_0}{\varepsilon_0}  \in \frac{\sigma_0}{\varepsilon_0} \times O\left(\frac{n\varepsilon_0}{\sigma_0}\right) \times O\left( d^{\frac{1}{4}}\right) = O\left(n d^\frac{1}{4}\right)
\end{equation}
With the complexity statement in Remark~\ref{remark: sim_refinement_complexity} and Theorem~\ref{thm: log_log}, we thus confirms that the algorithm uses $ O\left(n d^\frac{1}{4} \log \frac{d}{\delta}\right)$ calls to the quantum experiment.

In terms of memory, we get to reuse the quantum registers whenever a quantum subroutine is called, so it's $O(1)$.
\end{proof}

Algorithm~\ref{alg: con_sim_estimator} gives an estimator without any $\log$ scalling on $n$. However, it scales badly with respect to $d$. The key reason is due to the extra $\sqrt{d}$ factor when we go from the accuracy parameter $\varepsilon$ for each dimension, to $2.5 \sqrt{d} \varepsilon$ as the accuracy parameter for the combined random variable $\expval{\vec u, \vec{\RV{X}}}$ at the Grover operator level. Since our Grover operator has an intrinsic uncertainty in the phase $\propto O(s_0^2 \varepsilon)$ where $\varepsilon$ is now replaced with $O(\sqrt{d} \varepsilon)$. We are thus forced to reduce $s_0$ to accommodate. 

\subsection{Meticulous Estimator}
\label{sec: mer_estimator}

The problems we just discussed also hints to a valid solution to get rid of the polynomial overhead in $d$. By converting the univariate mean value estimator in Algorithm~\ref{alg: uni_esimator} to a quantum circuit, one can obtain a unitary that approximates $e^{i N\expval{\vec u, \expect{\vec{\RV{X}}}}}$ arbitrarily close with few assumptions. 

By Remark~\ref{remark: uni_hybird} and Theorem~\ref{thm: c_to_q}, with $O(n)$ access to the quantum experiment of some univariate random variable $\RV{X}$, for some random variable $\RV{X}$, variance bound $\sigma_0 \geqslant 0$, mean bound $0 \leqslant \varepsilon_0 \leqslant \frac{1}{3}\sigma_0$, and $0 < \delta < 1$, we can construct unitary $U$ on $\mathcal{H}_\Omega' \otimes \mathcal{H}_\text{output}$ with:
\begin{equation}
    U\ket{0}\ket{0} = \sum_{j \in \Omega'} \sqrt{q_j} \ket{\lambda_j} \ket{y_j}
\end{equation}
where $\Omega'$ is some probability space corresponding measurements results during the execution of the algorithms, $y_j$ corresponds to instances of a random variable $\bm{y}$. Thus $U$ ``output'' a mean value estimate in the sense of the following lemma,
\begin{lemma}[Quantizing the Constrained Classical Univariate Estimator, Algorithm~\ref{alg: uni_esimator}]
\label{lem: c_to_q_uni}
For $\bm{y}, \Omega'$ that we just defined above, if the random variable $\RV{X}$ satisfy:
\begin{itemize}
    \item $\Var{\RV{X}} \leqslant \sigma_0^2$.
    \item $\expect{\RV{X}} \leqslant \varepsilon_0 \leqslant \frac{1}{3} \sigma_0$\; 
\end{itemize}
The unitary we constructed satisfy
\begin{equation}
   \mathbb{P}_{\Omega'}\left[\left|\bm{y} - \mathbb{E}[\RV{X}]\right| \leqslant \frac{\sigma_0}{n}\right] \geqslant 1 - \delta
\end{equation}
where $\mathbb{P}_{\Omega'}$ indicates that the probability is taken over $\Omega'$.
\end{lemma}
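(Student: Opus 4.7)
The plan is to recognize this lemma as essentially a direct bookkeeping consequence of Theorem~\ref{thm: c_to_q} applied to Algorithm~\ref{alg: uni_esimator}, together with the correctness guarantee already established in Theorem~\ref{thm: uni_estimator}. The lemma is not so much about producing a new probabilistic bound as it is about transferring the bound already proved for the classical hybrid algorithm to its ``quantized'' unitary realization.

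First I would verify that the hypotheses of Theorem~\ref{thm: c_to_q} are satisfied. By Remark~\ref{remark: uni_hybird}, Algorithm~\ref{alg: uni_esimator} is a hybrid circuit of the required form: the classical portion is deterministic given the measurement outcomes of the invoked quantum subroutine (here Algorithm~\ref{alg: uni_refinement} used inside a phase-estimation/median structure), it never skips calls to the quantum subroutine, and the only randomness enters through the measurements. By Theorem~\ref{thm: uni_complexity}, the cost bounds match the form assumed by Theorem~\ref{thm: c_to_q}. Thus Theorem~\ref{thm: c_to_q} applies and produces a unitary $U$ acting on $\mathcal{H}_{\Omega'} \otimes \mathcal{H}_{\text{output}}$ with
\begin{equation}
U \ket{0}\ket{0} = \sum_{j \in \Omega'} \sqrt{q_j}\,\ket{\lambda_j}\ket{y_j},
\end{equation}
where $\Omega'$ is the joint probability space over all the measurement outcomes made across the various phase-estimation calls inside the algorithm, and $\{y_j\}$ enumerates the possible classical outputs $\tilde{\mu}$.

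Next I would take the ``success predicate'' required by Theorem~\ref{thm: c_to_q} to be
\begin{equation}
P(x) \equiv \bigl(|x - \mathbb{E}[\RV{X}]| \leqslant \tfrac{\sigma_0}{n}\bigr).
\end{equation}
Under the stated hypotheses on $\RV{X}$ (namely $\Var{\RV{X}} \leqslant \sigma_0^2$ and $|\mathbb{E}[\RV{X}]| \leqslant \varepsilon_0 \leqslant \tfrac{1}{3}\sigma_0$), Theorem~\ref{thm: uni_estimator} already guarantees that the classical hybrid Algorithm~\ref{alg: uni_esimator} succeeds in the sense $\mathbb{P}[P(\tilde{\mu})] \geqslant 1 - \delta$. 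Since $\bm{y}$ is exactly the random variable whose draws are the classical output $\tilde{\mu}$ of the algorithm (this is the content of the identification made in Theorem~\ref{thm: c_to_q}, where the probability space of all intermediate measurement outcomes produces the output distribution), the guarantee transfers verbatim:
\begin{equation}
\mathbb{P}_{\Omega'}\bigl[\,|\bm{y} - \mathbb{E}[\RV{X}]| \leqslant \tfrac{\sigma_0}{n}\,\bigr] \geqslant 1 - \delta.
\end{equation}

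The only subtlety, and hence the ``main obstacle'' in the proof, is being careful that the probability distribution of $\bm{y}$ encoded by $U \ket{0}\ket{0}$ really coincides with the output distribution of the hybrid algorithm. This is exactly the point addressed by Theorem~\ref{thm: c_to_q}: reversibly simulating the classical control logic while coherently invoking the quantum subroutine reproduces, upon measurement of the output register, the same distribution as if the measurements had been carried out classically at each call site. So provided I invoke Theorem~\ref{thm: c_to_q} with the correct identification of $\Omega'$ as the product probability space of all measurement outcomes and $\{y_j\}$ as the corresponding output values, the claim follows. No fresh estimates are needed beyond those already in Theorem~\ref{thm: uni_estimator} and Theorem~\ref{thm: c_to_q}.
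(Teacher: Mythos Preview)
Your proposal is correct and matches the paper's approach exactly: the paper does not even give a separate proof of this lemma, but simply states it as an immediate consequence of applying Theorem~\ref{thm: c_to_q} (via Remark~\ref{remark: uni_hybird}) to Algorithm~\ref{alg: uni_esimator} and then reading off the guarantee of Theorem~\ref{thm: uni_estimator}. Your explicit identification of the success predicate $P(x)$ and the careful remark about the output distribution of $U\ket{0}\ket{0}$ coinciding with that of the hybrid algorithm are precisely the content the paper leaves implicit.
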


Given some integer $N$, we can also generate a phase operator $\mathcal{Q}$ that converts output register's value into a phase:
\begin{equation}
    \mathcal{Q}\ket{z} = e^{iNz} \ket{z}.
\end{equation}
From these unitaries we construct:
\begin{equation}
    V = U^\dag \left( I \otimes \mathcal{Q} \right) U
\end{equation}
The claim is that $V$ approximates a phase oracle to produce $e^{i N \expect{\RV{X}}}$, as found in the following theorem.
\begin{theorem}[Approximating the Phase Unitary using Quantum Subroutine]
\label{thm: uni_approx_phase}
Given some small constant $\xi > 0$, $N \in \mathbb{N}^+, \sigma_0 \geqslant 0$. We feed to $U$ with parameters $(n, \sigma_0, \varepsilon_0, \delta)$ replaced with $\left(\ceil{\frac{3N \sigma_0}{2\xi}}, \sigma_0, \varepsilon_0, \frac{1}{9}\xi^2\right)$, and we feed in $N$ to construction of $\mathcal{Q}$. 

Then, when the random variable $\RV{X}$ satisfy:
\begin{itemize}
    \item $\Var{\RV{X}} \leqslant \sigma_0^2$.
    \item $\expect{\RV{X}} \leqslant \varepsilon_0 \leqslant \frac{1}{3} \sigma_0$\; 
\end{itemize}
We find:
\begin{equation}
   \norm{V\ket{0}\ket{0} - e^{i N \expect{\RV{X}}} \ket{0} \ket{0}} \leqslant \xi
\end{equation}

\end{theorem}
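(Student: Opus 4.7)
The plan is to unroll $V\ket{0}\ket{0}$ step by step, relate the result to $e^{iN\mathbb{E}[\RV{X}]}\ket{0}\ket{0}$ in a common frame (namely, after an $U^\dag$ is peeled off), and then split the resulting sum over $\Omega'$ into a ``good event'' part (controlled by the accuracy of the quantum estimator) and a ``bad event'' part (controlled by the failure probability $\delta$). Parameters have been chosen precisely so that these two contributions balance.

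Concretely, I would first write
\begin{equation}
    V\ket{0}\ket{0} = U^\dag (I \otimes \mathcal{Q}) U\ket{0}\ket{0} = U^\dag \sum_{j \in \Omega'} \sqrt{q_j} e^{iN y_j}\ket{\lambda_j}\ket{y_j},
\end{equation}
and similarly note
\begin{equation}
    e^{iN\mathbb{E}[\RV{X}]}\ket{0}\ket{0} = U^\dag \sum_{j \in \Omega'} \sqrt{q_j} e^{iN\mathbb{E}[\RV{X}]}\ket{\lambda_j}\ket{y_j}.
\end{equation}
Since $U^\dag$ is unitary and the vectors $\{\ket{\lambda_j}\ket{y_j}\}$ are orthonormal (by Theorem~\ref{thm: c_to_q}), taking the norm of the difference gives
\begin{equation}
    \norm{V\ket{0}\ket{0} - e^{iN\mathbb{E}[\RV{X}]}\ket{0}\ket{0}}^2 = \sum_{j \in \Omega'} q_j \left|e^{iN y_j} - e^{iN\mathbb{E}[\RV{X}]}\right|^2.
\end{equation}

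Next, I would invoke Lemma~\ref{lem: c_to_q_uni} with the parameter choice specified in the statement. With $n = \ceil{3 N \sigma_0/(2\xi)}$ and $\delta = \xi^2/9$, the hypotheses on $\RV{X}$ transfer to the lemma directly, so the ``good'' set $G = \{j \in \Omega' : |y_j - \mathbb{E}[\RV{X}]| \leqslant \sigma_0/n\}$ has total weight $\sum_{j \in G} q_j \geqslant 1 - \xi^2/9$. On $G$ I use the elementary bound $|e^{ia}-e^{ib}| \leqslant |a - b|$, while on $\Omega' \setminus G$ I use the trivial bound $|e^{ia}-e^{ib}| \leqslant 2$. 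This yields
\begin{equation}
    \sum_{j \in \Omega'} q_j \left|e^{iN y_j} - e^{iN\mathbb{E}[\RV{X}]}\right|^2 \leqslant \frac{N^2 \sigma_0^2}{n^2} + 4\delta.
\end{equation}

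Finally I plug in the chosen parameters: the first term satisfies $N\sigma_0/n \leqslant 2\xi/3$, so $N^2\sigma_0^2/n^2 \leqslant 4\xi^2/9$; the second term is $4\delta = 4\xi^2/9$. Summing gives $8\xi^2/9 \leqslant \xi^2$, and taking square roots finishes the proof. There is no genuine obstacle here once one sets up the orthonormal expansion; the only ``trick'' is recognizing that the $U^\dag$ at the end of $V$ lets us measure the discrepancy directly in the basis $\{\ket{\lambda_j}\ket{y_j}\}$, so that the probabilistic guarantee of the quantized classical estimator (Lemma~\ref{lem: c_to_q_uni}) slots in cleanly to bound the ``bad'' contribution.
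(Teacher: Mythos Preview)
Your proposal is correct and follows essentially the same route as the paper's proof: peel off $U^\dag$ by unitary invariance, expand the squared norm in the orthogonal family $\{\ket{\lambda_j}\ket{y_j}\}$, split into good and bad events via Lemma~\ref{lem: c_to_q_uni}, bound by $\left(\frac{N\sigma_0}{n}\right)^2 + 4\delta$, and plug in the chosen parameters to land at $\frac{8}{9}\xi^2 \leqslant \xi^2$. The only cosmetic difference is that the paper names the two intermediate states $\ket{\psi}$ and $\ket{\varphi}$ before comparing them, whereas you subtract directly; the substance is identical.
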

\begin{proof}
Consider only the action of $(I \otimes \mathcal{Q}) U$, it produces a state:
\begin{equation}
\label{eq: uni_phase_proof_1}
\ket{\psi} = (I \otimes \mathcal{Q}) U \ket{0}\ket{0} = \sum_{j \in \Omega'} \sqrt{q_j} e^{i N y_j} \ket{\lambda_j} \ket{y_j}
\end{equation}
We want this state to be:
\begin{equation}
\label{eq: uni_phase_proof_2}
\ket{\varphi} = (I \otimes \mathcal{Q}) U \ket{0}\ket{0} = \sum_{j \in \Omega'} \sqrt{q_j} e^{i N \expect{\RV{X}}} \ket{\lambda_j} \ket{y_j} 
\end{equation}
We can see that $U^\dag \ket{\varphi} = e^{i N \expect{\RV{X}}} \ket{0}\ket{0}$. So:
\begin{equation}
\label{eq: uni_phase_proof_3}
\norm{V\ket{0}\ket{0} - e^{i N \expect{\RV{X}}} \ket{0} \ket{0}} = \norm{\ket{\psi} - \ket{\varphi}}
\end{equation}
Square on both side of Eq.~(\ref{eq: uni_phase_proof_3}) and plug in definitions in Eqs.~(\ref{eq: uni_phase_proof_1} and \ref{eq: uni_phase_proof_2}) we obtain:

\begin{equation}
\begin{aligned}
 \norm{V\ket{0}\ket{0} - e^{i N \expect{\RV{X}}} \ket{0} \ket{0}}^2 & = \norm{\sum_{j \in \Omega'} \sqrt{q_j} \left( e^{iN y_j} - e^{i N \expect{\RV{X}}}\right) \ket{\lambda_j} \ket{y_j}}^2 \\
& = \sum_{j \in \Omega'} q_j \left( e^{iN y_j} - e^{i N \expect{\RV{X}}}\right)^2
\end{aligned}
\end{equation}
where we used the fact that $\lambda_j$ are orthogonal vectors. 

By Lemma~\ref{lem: c_to_q_uni}, we know that for at least $1 - \delta$ of all times 
\begin{equation}
    \left|e^{iN y_j} - e^{i N \expect{\RV{X}}}\right| \leqslant N \left|y_j - \expect{\RV{X}}\right| \leqslant \frac{N \sigma_0}{n}
\end{equation}

For the other situations we can use a generous bound of 
\begin{equation}
    \left|e^{iN y_j} - e^{i N \expect{\RV{X}}}\right| \leqslant 2
\end{equation}
Combined we obtain
\begin{equation}
\hphantom{{}={}} \norm{V\ket{0}\ket{0} - e^{i N \expect{\RV{X}}} \ket{0} \ket{0}}^2 \leqslant 4 \delta + \left( \frac{N \sigma_0 }{n}\right)^2
\end{equation}

If we plug in $\delta = \frac{1}{9} \xi^2$ and $n \geqslant \frac{3N \sigma_0 }{2\xi}$, we will produce a bound:
\begin{equation}
\hphantom{{}={}} \norm{V\ket{0}\ket{0} - e^{i N \expect{\RV{X}}} \ket{0} \ket{0}}^2 \leqslant 2 \left(\frac{2}{3}\xi\right)^2 \leqslant \xi^2
\end{equation}
thus completing the proof. 
\end{proof}

But what's the cost of $V$? Here we see that while the time complexity becomes trivial, the space complexity will be dominated by the number of quantum subroutines being called, i.e., the number of phase estimations as what we have been tracking throughout Sec.~\ref{sec: uni_estimator}. 
\begin{theorem}[Cost of Phase Unitary from Quantizing Classical Univariate Estimator]
\label{thm: cost_V}
The unitary $V$ using parameters given in Theorem~\ref{thm: uni_approx_phase} costs $O\left(\ceil{\frac{N \sigma_0}{\xi}} \log \frac{1}{\chi}\right)$ in terms of accesses to quantum experiments. It also needs to allocate $O\left(\log \ceil{\frac{N \sigma_0}{\xi}} \left(\log \log \ceil{\frac{N \sigma_0}{\xi}} + \log \frac{1}{\xi}\right) \right)$ entangled quantum registers for quantum experiments alone. \footnote{Similarly as how we define big-$O$ notation with $\delta$ present in Footnote~\ref{footnote: O_with_delta}, we only care about the limit $\chi \to 0$ when using big-$O$ notation, this means that when appearing inside big-$O$, we may set, say, $\chi < c < 1$ for some constant $c$.}
\end{theorem}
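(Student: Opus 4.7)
The plan is to apply Theorem~\ref{thm: uni_complexity} to Algorithm~\ref{alg: uni_esimator} under the parameter choice specified in Theorem~\ref{thm: uni_approx_phase}, and then invoke Theorem~\ref{thm: c_to_q} (justified by Remark~\ref{remark: uni_hybird}) to translate those hybrid-circuit bounds into quantum-circuit resource bounds for $U$. Because $V = U^\dag (I \otimes \mathcal{Q}) U$ consists of two applications of $U$ together with the diagonal operator $\mathcal{Q}$, which makes no call to the quantum experiment and requires no experiment-holding register, the asymptotic cost of $V$ matches that of $U$ on both measures.

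First I would specialize Theorem~\ref{thm: uni_complexity} to the parameter choice $n = \ceil{3N\sigma_0/(2\xi)}$, $\delta = \xi^2/9$, with the given bound $\varepsilon_0 \leqslant \sigma_0/3$. Writing $L := \ceil{N\sigma_0/\xi}$ for brevity, the quantum-experiment count becomes $O(n \log (1/\delta)) = O(L \log (1/\xi))$, using $\log(9/\xi^2) = O(\log(1/\xi))$. For the phase-estimation count, $\varepsilon_0 \leqslant \sigma_0/3$ gives $n\varepsilon_0/\sigma_0 = O(L)$, so the expression in Theorem~\ref{thm: uni_complexity} collapses to $O\bigl(\log L \,(\log\log L + \log(1/\xi))\bigr)$.

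Next, applying Theorem~\ref{thm: c_to_q} converts the hybrid algorithm into the unitary $U$. The access count to the quantum experiment carries over with no overhead, yielding the first bound. For the register count, Theorem~\ref{thm: c_to_q} allocates one entangled quantum register per call to the inner quantum subroutine (here, the multidimensional phase estimation), because such a call cannot be uncomputed and reused without collapsing the coherent superposition over classical ``measurement'' outcomes. Therefore the number of entangled registers needed to hold quantum-experiment data equals the phase-estimation count derived above, giving the second bound.

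The main (and rather mild) obstacle is simply keeping the two complexity measures in the log-log trick (Theorem~\ref{thm: log_log}) cleanly separated: the ``time'' measure drives $1/\varepsilon'_M = O(L)$ and hence the access bound, while the ``space'' measure, scaling with the number of refinement rounds $M$ and the $\delta'_\ell$ schedule, drives the register count. Both have already been packaged into Theorem~\ref{thm: uni_complexity}, so the argument reduces to algebraic bookkeeping, with no new technical content required.
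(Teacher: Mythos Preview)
Your proposal is correct and follows essentially the same route as the paper: invoke Theorem~\ref{thm: uni_complexity} with the parameter substitution from Theorem~\ref{thm: uni_approx_phase}, then apply Theorem~\ref{thm: c_to_q} (via Remark~\ref{remark: uni_hybird}) so that the phase-estimation count becomes the register count. Your write-up is in fact more explicit than the paper's terse proof, carrying out the parameter bookkeeping and noting the factor-of-two from the two copies of $U$ in $V$; one small slip is that the inner quantum subroutine here is \emph{univariate} phase estimation (Theorem~\ref{thm: uni_phase_estimation}), not multidimensional, though this does not affect the argument.
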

\begin{proof}
By Remark~\ref{remark: uni_hybird} and Theorem~\ref{thm: c_to_q}, we can translate the complexities found in Theorem~\ref{thm: uni_complexity} accordingly to here. Notice that the quantum subrountine in the hybird algorithm is essentially phase estimation, so the number of phase estimations called translate into the number of registers needed for the quantized circuit. 
\end{proof}

With accesses to $V$ we can construct a multivariate mean value estimator by running phase estimation:
\begin{algorithm}
\caption{\label{alg: con_mer_estimator} Constrained Meticulous Multivariate Mean Estimator}
\KwData{Accesses to quantum experiment for random variable $\vec{\RV{X}}$, (not exactly) number of trails $n \in \mathbb{R}^+$, variance ($\sqrt{\Tr \Sigma}$) bound $\sigma_0 \geqslant 0$, mean bound $0 \leqslant \varepsilon_0 \leqslant \frac{2}{15}\sqrt{10D} \sigma_0$ \footnote{This parameter serves no purpose at all for this algorithm. We treat it as an input parameter to match up with Algorithm~\ref{alg: con_sim_estimator}.}, accuracy parameter $0 < \delta < 1$, assuming $n \geqslant \frac{\ln{\frac{d}{\delta}}}{\sqrt{\ln d}}$}
\KwResult{A mean estimate $\vec{\tilde{\mu}}$}
$K \leftarrow \sqrt{10D}\sigma_0$ \;
$N \leftarrow 2^{\ceil{\log_2 \left(8 \pi n\sqrt{10D}\right)}}$\;
$M \leftarrow 2 \left\lceil \frac{18 \ln\frac{d}{\delta} - 1}{2} \right\rceil + 1$\;
Initialize an array of vectors $\vec{\mu}'$ of length $M$\;
\For{$\ell \leftarrow 1$ \KwTo $M$}{
    Set unitary $V$ using parameters given in Theorem~\ref{thm: uni_approx_phase} but
    \begin{itemize}
        \item we use quantum experiments to random variable $\expval{\vec u, \frac{\vec{\RV{X}}}{K}}$ in place of $\RV{X}$, where $\vec u$ is fetched from $\mathcal{H}_G$ as a control register
        \item Replace parameters $\left(\sigma_0, N, \xi\right)$ with $\left(1, N, \frac{1}{13}\right)$\;
    \end{itemize}
    Run multidimensional phase estimation with controlled-$V$ in place of $U^N$ on register $\mathcal{H}_{G}$, where $G$ is hypercubic lattice of resolution $N$. Let the result be $\vec x$. Append $2\pi K \vec x$ to $\vec{\mu}'$, i.e., $\vec{\mu}'_\ell \leftarrow 2 \pi K \vec x$\;
}
Initialize output vector $\vec{\tilde{\mu}}$ of dimension $d$\;
\For{$\alpha \leftarrow 1$ \KwTo $d$}{
    $\tilde{\mu}^\alpha \leftarrow \text{ median of } \left\{\ell \in [M]: \mu'^\alpha_\ell \right\}$\;
}
Output $\vec{\tilde{\mu}}$\;

\end{algorithm}

\begin{theorem}[Correctness of Algorithm~\ref{alg: con_mer_estimator}]
\label{thm: con_mer_estimator}
When $\vec{\RV{X}}$, with covariance matrix $\Sigma$, satisfy
\begin{itemize}
    \item $\tr \Sigma \leqslant \sigma_0$;
    \item $\normtwo{\expect{\vec{\RV{X}}}} \leqslant \varepsilon_0 \leqslant \frac{2}{15}\sqrt{10D} \sigma_0$;
\end{itemize}
Then Algorithm~\ref{alg: con_mer_estimator} returns a mean estimate $\vec{\tilde{\mu}}$ such that 
\begin{equation}
    \prob{\norm{\vec{\tilde{\mu}} - \expect{\vec{\RV{X}}}}_\infty \leqslant \frac{\sigma_0}{n}} \geqslant 1 - \delta
\end{equation}
\end{theorem}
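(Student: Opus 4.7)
The plan is to adapt the argument used for the simple multivariate refinement step (Theorem~\ref{thm: refinement}) but substitute the Grover-gate-based approximate phase unitary with the sharper, quantum-subroutine-based unitary $V$ supplied by Theorem~\ref{thm: uni_approx_phase}. Because $V$ can be made to approximate $e^{iN \expect{\RV{Z}}}$ for any suitable univariate $\RV{Z}$ to any desired tolerance $\xi$ (at a cost hidden from the outer phase-estimation call), the mean-value estimation fits into a single multidimensional phase estimation, removing the iterative refinement required by the simple estimator and thus the extra $d^{1/4}$ overhead.

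First I would set up, for each $\vec u$ in the hypercubic lattice $G$ of resolution $N$, the univariate random variable $\RV{Z}_{\vec u} = \expval{\vec u, \vec{\RV{X}}}/K$, whose expectation is $\expval{\vec u, \expect{\vec{\RV{X}}}}/K$ and whose variance is $\vec u^T \Sigma \vec u / K^2$. The choice $K = \sqrt{10D}\sigma_0$ is calibrated precisely so that Lemma~\ref{lem: variance_prob_bound} yields $\mathbb{P}_{\vec u \sim G}[\Var \RV{Z}_{\vec u} > 1] \leqslant 2 e^{-10}$, making the unit variance hypothesis of Theorem~\ref{thm: uni_approx_phase} hold on all but an exponentially small fraction of $\vec u$. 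A parallel Hoeffding-style estimate, as used in the proof of Theorem~\ref{thm: refinement}, controls the mean: the input assumption $\varepsilon_0 \leqslant \tfrac{2}{15}\sqrt{10D}\sigma_0$ forces $\expect{\vec{\RV{X}}}/K$ to have $\ell^2$ norm at most $2/15$, which keeps $|\expect{\RV{Z}_{\vec u}}| \leqslant 1/3$ on all but an exponentially small fraction of $\vec u$.

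Next I would invoke Theorem~\ref{thm: uni_approx_phase} with $\xi = 1/13$ on each branch of the $\vec u$ register: on the ``good'' $\vec u$, the conditional state $V\ket{0}\ket{0}$ differs from $e^{iN\expval{\vec u, \expect{\vec{\RV{X}}}/K}}\ket{0}\ket{0}$ by at most $1/13$ in norm; on the bad $\vec u$, the trivial bound of $2$ suffices. Averaging these squared errors against the uniform distribution over $G$, by exact analogy with the calculation culminating in Eq.~(\ref{eq: refinement_diff_bound}), the total $\ell^2$ distance between the state produced by controlled-$V$ acting on $\frac{1}{N^{d/2}}\sum_{\vec u} \ket{\vec u}\ket{0}$ and the idealized phase state is at most $1/12$. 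Corollary~\ref{cry: phase_estimation} of the noisy multidimensional phase estimation theorem then yields, for each coordinate $\alpha$, a deviation $|x^\alpha - \expect{\RV{X}^\alpha}/(2\pi K)| \leqslant 4/N$ with probability at least $2/3$.

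Finally, rescaling by $2 \pi K$ and using the algorithmic choice $N \geqslant 8 \pi n \sqrt{10D}$ converts this into per-coordinate accuracy $\sigma_0 / n$ with probability at least $2/3$; the outer median of $M = O(\log(d/\delta))$ trials boosts this to $1 - \delta/d$ per coordinate via Theorem~\ref{thm: median}, and a union bound over the $d$ coordinates delivers the claimed $\ell^\infty$ guarantee. The main obstacle will be the bookkeeping in the second step: one must combine the sub-exponential variance tail from Lemma~\ref{lem: variance_prob_bound}, the Hoeffding tail for the mean, and the trivial bound of $2$ on the exceptional $\vec u$ in such a way that the total $\ell^2$ discrepancy stays strictly below the $1/12$ threshold required by Corollary~\ref{cry: phase_estimation}. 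The hypothesis $n \geqslant \ln(d/\delta)/\sqrt{\ln d}$ in the algorithm should enter here to ensure the trial count $M$ does not push the Hoeffding tail over the line. Everything else is a direct transcription of the logic from Theorems~\ref{thm: refinement} and~\ref{thm: uni_approx_phase} with rescaled parameters.
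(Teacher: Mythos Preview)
Your proposal is correct and follows essentially the same route as the paper's proof: the same sub-exponential variance tail from Lemma~\ref{lem: variance_prob_bound}, the same Hoeffding tail for the mean, the same $\xi=1/13$ versus trivial-bound-of-$2$ dichotomy combined to beat the $1/12$ threshold of Corollary~\ref{cry: phase_estimation}, and the same median-plus-union-bound finish. One small correction: the assumption $n \geqslant \ln(d/\delta)/\sqrt{\ln d}$ is not actually invoked anywhere in the paper's correctness argument---the Hoeffding bound on $\expval{\vec u,\expect{\vec{\RV{X}}}}$ depends only on the hypothesis $\varepsilon_0 \leqslant \tfrac{2}{15}\sqrt{10D}\,\sigma_0$ (which fixes the exponent at $12.5$), not on $M$ or $n$---so you should not expect to need it in the bookkeeping step.
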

\begin{proof}
Similar to the proof in Theorem~\ref{thm: refinement}, when we run multidimensional phase estimation on with $V$ instead, before the QFT step, we achieve a state
\begin{equation}
    \ket{\psi} = \frac{1}{N^{\frac{d}{2}}} \sum_{\vec u \in G} \ket{\vec{u}} \left(\mathcal{V}_{\vec u} \ket{0}\ket{0}\right)
\end{equation}
where $\mathcal{V}_{\vec u}$ is unitary $V$ as discussed in main text except for random variable $\expval{\vec u, \frac{\vec{\RV{X}}}{K}}$. We wish to achieve a state:
\begin{equation}
    \ket{\varphi} = \frac{1}{N^{\frac{d}{2}}} \sum_{\vec u \in G} \ket{\vec{u}} \left(e^{i N \expval{\vec u, \frac{\vec{\RV{X}}}{K}}} \ket{0}\ket{0}\right)
\end{equation}
These states differ by:
\begin{equation}
    \label{eq: mer_diff}
    \norm{\ket{\varphi'} - \ket{\varphi}}^2 = \frac{1}{|G|} \sum_{\vec u \in G} \norm{V_{\vec u} \ket{0}\ket{0} - e^{i N \expval{\vec u, \expect{\frac{\vec{\RV{X}}}{K}}}}  \ket{0}\ket{0}}^2 
\end{equation}

Use Lemma~\ref{lem: variance_prob_bound} we know that:
\begin{equation}
    \mathbb{P}_{\vec u \sim G_N}\left[\Var{\expval{\vec u, \vec{\RV{X}}}} \geqslant \left(\sqrt{10D} \tr \Sigma\right)^2\right] \leqslant 2 e^{-10}
\end{equation}
From Hoeffding's inequality we also find:
\begin{equation}
    \mathbb{P}_{\vec u \sim G_N}\left[\left|\expect{\expval{\vec u, \vec{\RV{X}}}}\right| \leqslant \frac{1}{3}\sqrt{10D} \tr \Sigma\right] \leqslant 2 e^{-12.5}
\end{equation}

By union bound, this means that for probability at least $1 - 2 (e^{-10} + e^{-12.5})$ of all $\vec u \in G$ we have 
\begin{itemize}
    \item $\Var{\expval{\vec u, \vec{\RV{X}}}} < \left(\sqrt{10D} \tr \Sigma\right)^2$;
    \item $\left|\expect{\expval{\vec u, \vec{\RV{X}}}}\right| < \frac{1}{3}\sqrt{10D} \tr \Sigma$;
\end{itemize}
Our selection of $K$ ensures that:
\begin{itemize}
    \item $\Var{\expval{\vec u, \frac{\vec{\RV{X}}}{K}}} < 1$;
    \item $\left|\expect{\expval{\vec u, \frac{\vec{\RV{X}}}{K}}}\right| < \frac{1}{3}$;
\end{itemize}
So Theorem~\ref{thm: uni_approx_phase} applies with parameters $(\sigma_0, N, \chi)$ replaced with $(1, N, \frac{1}{13})$ (and $\RV{X}$ replaced with $\expval{\vec u, \frac{\vec{\RV{X}}}{K}}$ which gives 
\begin{equation}
    \norm{V_{\vec u} \ket{0}\ket{0} - e^{i N \expval{\vec u, \expect{\frac{\vec{\RV{X}}}{K}}}  \ket{0}\ket{0}}}^2 \leqslant \left(\frac{1}{13}\right)^2
\end{equation}

When the nice constraints are not satisfied, we use a generous bound of 
\begin{equation}
    \norm{V_{\vec u} \ket{0}\ket{0} - e^{i N \expval{\vec u, \expect{\frac{\vec{\RV{X}}}{K}}}}  \ket{0}\ket{0}}^2 \leqslant 4
\end{equation}
Add these results into Eq.~(\ref{eq: mer_diff}) we find:
\begin{equation}
\norm{\ket{\varphi'} - \ket{\varphi}}^2 \leqslant \left(\frac{1}{13}\right)^2 + 2\left(e^{-10}+e^{-12.5}\right) \times 4 < 6.311\times 10^{-3} < \left(\frac{1}{12}\right)^2 
\end{equation}

So Theorem~\ref{thm: phase_estimation} and Corollary~\ref{cry: phase_estimation} holds, which means that for each dimension $\alpha \in [d]$, for each $\ell \in [M]$, with probability at least $\frac{2}{3}$ we have:
\begin{equation}
\left|\frac{\left(\mu'_\ell\right)^\alpha}{2\pi K} - \frac{1}{2\pi}\expect{\frac{\RV{X}^\alpha}{K}}\right| \leqslant \frac{4}{N}
\end{equation}
which means that:
\begin{equation}
\left|\left(\mu'_\ell\right)^\alpha -\expect{\RV{X}^\alpha}\right| \leqslant \frac{8\pi K}{N} 
\end{equation}
Due to our pick of $N$ such that $N \geqslant \frac{8 \pi}{n}\sqrt{10D}$ we get:
\begin{equation}
    \label{eq: mer_proof}
    \left|\left(\mu'_\ell\right)^\alpha -\expect{\RV{X}^\alpha}\right| \leqslant \frac{\sigma_0}{n}
\end{equation}

By taking a coordinate-wise median, Theorem~\ref{thm: median}, we obtain $\vec{\tilde{\mu}}$ such that:
\begin{equation}
    \prob{\norm{\vec{\tilde{\mu}} -\expect{\vec{\RV{X}}}}_{\infty} \leqslant \frac{\sigma_0}{n}} \geqslant 1 - \delta
\end{equation}
\end{proof}

The complexity of the algorithm is given by
\begin{theorem}[Complexity of Algorithm~\ref{alg: con_mer_estimator}]
\label{thm: con_mer_complexity}
Algorithm~\ref{alg: con_mer_estimator} uses $O\left(n \log \frac{d}{\delta}\right)$ access to the quantum experiment for multivariate random variable $\vec{\RV{X}}$. In terms of quantum memory it costs $O\left(\log n \log \log n\right)$.
\end{theorem}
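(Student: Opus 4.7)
The plan is to decompose the cost of Algorithm~\ref{alg: con_mer_estimator} into its two nested structures: an outer loop of $M = 2\lceil (18\ln\frac{d}{\delta}-1)/2 \rceil + 1 \in O(\log\frac{d}{\delta})$ iterations, each of which invokes multidimensional phase estimation (Algorithm~\ref{alg: phase_estimation}) with the unitary $V$ playing the role of $U^N$. Because multidimensional phase estimation calls its argument only once, each outer-loop iteration makes exactly one call to the controlled version of $V$, followed by an inverse QFT on the lattice register and a measurement, neither of which consume any quantum experiments.

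Next I would invoke Theorem~\ref{thm: cost_V} with the parameter substitution $(\sigma_0, N, \xi) = (1, N, \frac{1}{13})$ used in the algorithm. With $N = 2^{\lceil\log_2(8\pi n\sqrt{10D})\rceil} \in \Theta(n)$ and $\xi$ a fixed constant, the per-invocation cost $O\!\left(\lceil N\sigma_0/\xi\rceil \log(1/\xi)\right)$ collapses to $O(n)$ quantum-experiment accesses, and the simultaneous register count $O\!\left(\log\lceil N\sigma_0/\xi\rceil (\log\log\lceil N\sigma_0/\xi\rceil + \log(1/\xi))\right)$ collapses to $O(\log n \log\log n)$.

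For the time bound, multiplying the per-iteration cost by $M$ gives $O(n) \cdot O(\log\frac{d}{\delta}) = O(n \log\frac{d}{\delta})$ accesses to the quantum experiment, matching the claimed complexity; the final coordinate-wise classical median contributes nothing to this count. For the space bound, the key observation is that the $M$ outer iterations run sequentially with only the classical record $\vec{\mu}'_\ell$ persisting between them, so all quantum registers allocated inside $V$ and inside the phase-estimation lattice register may be measured or uncomputed at the end of each iteration and reused in the next. Hence the peak number of simultaneous quantum registers holding experiment outputs is dominated by a single invocation of $V$, which gives $O(\log n \log\log n)$.

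The main obstacle I anticipate is the bookkeeping between three sources of overhead --- (i) the log log trick inside $V$ inherited from Algorithm~\ref{alg: uni_esimator}, (ii) the multidimensional phase estimation wrapper, and (iii) the outer median-boosting loop --- together with confirming that passing to the controlled form of $V$ inflates neither measure beyond constant factors. Once one accepts that $V$ is a standard quantum subroutine callable coherently on a fresh ancilla block, and that $\xi$ is only a constant in this algorithm so that every $\log(1/\xi)$ factor is absorbed by the $O(\cdot)$, both complexity claims reduce to direct substitution into Theorem~\ref{thm: cost_V} followed by multiplication by $M$.
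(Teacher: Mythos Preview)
Your proposal is correct and follows essentially the same argument as the paper: invoke Theorem~\ref{thm: cost_V} with $\sigma_0$ and $\xi$ set to constants so that each call to $V$ costs $O(N)=O(n)$ experiments and $O(\log n\log\log n)$ registers, then multiply by the $M=O(\log\frac{d}{\delta})$ outer iterations for the time bound and observe that the registers are reused across iterations for the space bound. The paper's proof is terser but makes exactly these two moves.
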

\begin{proof}
In Algorithm~\ref{alg: con_mer_estimator}, by Remark~\ref{remark: quantum_experiment_algebra} and Theorem~\ref{thm: cost_V}, each call to $V$ consumes $O(N) = O\left(n\right)$, as we called Theorem~\ref{thm: uni_approx_phase} but with $\sigma_0, \xi$ replaced with constants. We ran $V$ for $O\left(\log \frac{d}{\delta}\right)$ times so the total cost is $O\left(n \log \frac{d}{\delta}\right)$.

Similarly, the memory cost is $O\left(\log n \log \log n\right)$ for each $V$. This memory can be reused for each multidimensional phase estimation so the final cost is also $O\left(\log n \log \log n\right)$.

\end{proof}

\subsection{Final Classical Reduction}
\label{sec: final_classical_reduction}

In Theorem~\ref{alg: con_sim_estimator} and Theorem~\ref{alg: con_mer_estimator} we have constructed two different multivariate mean value estimators, each with their advantages and disadvantages. In both cases we need to give a bound of the variance ($\sqrt{\tr \Sigma}$) for the random variable and ensures that it has a decently small mean (we can do this by give initial estimate and shift the random variable). In this section we use these primitives to handle an arbitary multivariate random variable with no previous assumptions known. 

For the remainder of this section, let's use $\mathcal{A}$ as a placeholder for an algorithm depending on random variable $\vec{\RV{X}}$ and parameters $\left(n, \sigma_0, \varepsilon_0, \delta\right)$ as described in Theorem~\ref{thm: con_sim_estimator} and \ref{thm: con_mer_estimator}, such that when $\vec{\RV{X}}$, with covariance matrix $\Sigma$, satisfy:
\begin{itemize}
    \item $\tr \Sigma \leqslant \sigma_0$;
    \item $\normtwo{\expect{\vec{\RV{X}}}} \leqslant \varepsilon_0 \leqslant \frac{2}{15}\sqrt{10D} \sigma_0$;
\end{itemize}
$\mathcal{A}$ returns a mean estimate $\vec{\tilde{\mu}}$ such that 
\begin{equation}
    \prob{\norm{\vec{\tilde{\mu}} - \expect{\vec{\RV{X}}}}_\infty \leqslant \frac{\sqrt{\tr \Sigma}}{n}} \geqslant 1 - \delta
\end{equation}
It can be later substituted into either Algorithm~\ref{alg: con_sim_estimator} or \ref{alg: con_mer_estimator}.

First, similar to us elimiating the dependence on upper bound of $\expect{\RV{X}}$ on univariate case via Theorem~\ref{alg: not_so_uni_esimator}, we run a couple of classical trials first to complete an algorithm without dependence on $\expect{\vec{\RV{X}}}$:

\begin{algorithm}
    \caption{\label{alg: not_so_multi_esimator}(Not-so) Constrained Multivariate Mean Value Estimator}
    \KwData{Access to Quantum Experiment of Random Variable $\vec{\RV{X}}$, variance bound $\sigma_0 \geqslant 0$, number of trials parameter $n \in \mathbb{N}^+$, confidence parameter $0 < \delta < 1$}
    \KwResult{A mean value estimate $\vec{\tilde{\mu}}$}
    $n' \leftarrow \ceil{\frac{225}{4}10D\left(1 + \sqrt{\frac{1}{\ln \frac{2}{\delta}}}\right)^2} $ \footnote{The classical mean estimator primitive does need $n'$ to be integer so we have to take a ceil function}\;
    Run Classical Multivariate Mean Value Estimator in Theorem~\ref{thm: classical_multi_estimator} for $\vec{\RV{X}}$ with parameters $(n,\delta)$ replaced with $\left(n', \frac{\delta}{2}\right)$, store as $\vec{\mu}'$\;  
    Run $\mathcal{A}$ on $\vec{\RV{X}} - \vec{\mu}'$ with parameters $(n, \sigma_0, \varepsilon_0, \delta)$ replaced with $\left(n, \sigma_0, \frac{2}{15}\sqrt{10D} \sigma_0, \frac{\delta}{2}\right)$, let it be $\vec{\mu}''$\;
    Output $\vec{\tilde{\mu}} = \vec{\mu}' + \vec{\mu}''$\;
\end{algorithm}

\begin{theorem}[Analysis for Algorithm~\ref{alg: not_so_multi_esimator}]
\label{thm: notso_multi_estimator}
For multivariate random variable $\vec{\RV{X}}$ with covariance matrix $\Sigma$, such that $\tr \Sigma \leqslant \sigma_0^2$, the algorithm returns an estimate $\tilde{\mu}$ with 
\begin{equation}
    \prob{\norm{\vec{\tilde{\mu}} - \expect{\vec{\RV{X}}}}_\infty\leqslant \frac{\sigma_0}{n}} \geqslant 1 - \delta
\end{equation}
The algorithm uses $O\left(\log\frac{1}{\delta}\right)$ accesses to the quantum experiment along with a call to $\mathcal{A}$ with parameters $(n, \delta)$ replaced with $\left(O(n), \Omega(\delta)\right)$.
\end{theorem}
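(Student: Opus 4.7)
The plan is to mirror the reduction used in Theorem~\ref{thm: notso_uni_estimator} for the univariate case: use the classical multivariate estimator to produce a coarse warm-start $\vec{\mu}'$, shift the random variable by $\vec{\mu}'$ so that the resulting variable lands in the regime where the constrained algorithm $\mathcal{A}$ is applicable, then chain the two through a union bound.

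First, I would invoke Corollary~\ref{cry: classical_multi_estimator_relaxed} with $(n,\delta)$ replaced by $(n',\delta/2)$. Using only the hypothesis $\tr\Sigma\leqslant\sigma_0^2$ we get a $\vec{\mu}'$ satisfying
\[
\normtwo{\vec{\mu}' - \expect{\vec{\RV{X}}}} \leqslant \sigma_0 \sqrt{\frac{1}{n'}}\left(1 + \sqrt{\frac{1}{\ln\frac{2}{\delta}}}\right)
\]
with probability at least $1-\delta/2$. The specific $n'$ prescribed in the algorithm is chosen exactly so that the right hand side is bounded by $\frac{2}{15}\sqrt{10D}\,\sigma_0$, i.e.\ by the threshold on $\varepsilon_0$ demanded by $\mathcal{A}$ in Theorems~\ref{thm: con_sim_estimator} and \ref{thm: con_mer_estimator}; this is a one-line plug-and-chug check.

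Second, conditional on the warm-start event, I would apply $\mathcal{A}$ to the shifted variable $\vec{\RV{X}} - \vec{\mu}'$. Shifting by a constant preserves the covariance matrix, so $\tr\Sigma\leqslant\sigma_0^2$ still holds; and the shifted mean $\expect{\vec{\RV{X}}}-\vec{\mu}'$ has $\ell^2$ norm at most $\varepsilon_0 = \frac{2}{15}\sqrt{10D}\,\sigma_0$, meaning both preconditions of $\mathcal{A}$ are met. Hence with probability at least $1-\delta/2$ its output $\vec{\mu}''$ satisfies $\norm{\vec{\mu}''-(\expect{\vec{\RV{X}}}-\vec{\mu}'\,)}_\infty \leqslant \sigma_0/n$. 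A union bound over the two failure modes then yields overall success probability at least $1-\delta$; in that case a trivial rearrangement gives $\norm{\vec{\tilde{\mu}}-\expect{\vec{\RV{X}}}}_\infty = \norm{\vec{\mu}''-(\expect{\vec{\RV{X}}}-\vec{\mu}'\,)}_\infty \leqslant \sigma_0/n$, as required.

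The complexity bookkeeping is immediate: the factor $\left(1+\sqrt{1/\ln(2/\delta)}\right)^2$ is bounded by a constant on the $\delta$-range relevant to our big-$O$ convention (cf.\ Footnote~\ref{footnote: O_with_delta}), so $n'=O(1)$ and Theorem~\ref{thm: classical_multi_estimator} gives $O(n'\log(2/\delta))=O(\log(1/\delta))$ accesses to the quantum experiment for the classical step; the quantum step is a single invocation of $\mathcal{A}$ with parameters $(n,\delta/2) = (O(n),\Omega(\delta))$. The whole argument is routine; the only subtlety worth flagging is the numeric match between the prescribed $n'$ and the $\varepsilon_0$-threshold of $\mathcal{A}$, which---apart from possibly fixing a typo in the stated constant---is a direct substitution into Corollary~\ref{cry: classical_multi_estimator_relaxed}.
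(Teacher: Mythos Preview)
Your proposal is correct and follows essentially the same argument as the paper: a classical warm-start via Corollary~\ref{cry: classical_multi_estimator_relaxed} with parameter $n'$ tuned so the shifted mean meets the $\varepsilon_0$-threshold of $\mathcal{A}$, then a single call to $\mathcal{A}$ on the shifted variable, combined by a union bound. The complexity bookkeeping you give (in particular noting $n'=O(1)$ under the big-$O$ convention for $\delta$) is slightly more explicit than the paper's, but the structure is identical.
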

\begin{proof}

The proof is almost the same as Theorem~\ref{thm: notso_uni_estimator}. We have selected the value of $n'$ such that by Theorem~\ref{thm: classical_multi_estimator} and Corollary~\ref{cry: classical_multi_estimator_relaxed}, we know that with probability at least $1 - \frac{\delta}{2}$,
\begin{equation}
\normtwo{\expect{\vec{\RV{X}} - \vec{\mu}'}} = \normtwo{\vec{\mu}'-\vec{\RV{X}}} \leqslant \frac{2}{15}\sqrt{10D}  \sigma_0
\end{equation}
If this happens then with probability at least $1 - \frac{\delta}{2}$, by definition of $\mathcal{A}$ we find 
\begin{equation}
\label{eq: multi_notso_good}
\norm{\vec{\tilde{\mu}} - \expect{\vec{\RV{X}}}}_\infty = \norm{\vec{\mu}'' - \expect{\vec{\RV{X}} - \vec{\mu}'}}_\infty \leqslant \frac{\sigma_0}{n}
\end{equation}

Combined, Eq.~(\ref{eq: multi_notso_good}) happens with probability at least $1 - \delta$ via union bound. It is also easy to see that the complexity bound is justified.
\end{proof}

In Sec.~\ref{sec: uni_estimator}, by proving Corollary~\ref{cry: connect_to_Ryan} we have unlocked solutions to important problems mentioned in the classical reduction in Ref.~\cite{Kothari_2022}. These steps in the paper are designed to handle an unknown standard deviation, here we will do exactly the same for our multivariate random variable, but this time to handle an unknown $\tr \Sigma$. To build a complete chain of logic, we will only need to make slight modifications starting from Problem 8 in Ref.~\cite{Kothari_2022}. Formally speaking, first, by using the exact proof (adding steps to boost probability arbitrarily high) up to Problem 7 in Ref.~\cite{Kothari_2022} we can do the following:
\begin{lemma}[Problem 7 in Ref.~\cite{Kothari_2022}]
\label{lemma: Ryan_useful_lemma}
Given $O\left(n \log \frac{1}{\delta}\right)$ accesses to quantum experiment for univariate random variable $\RV{X}$ where $\forall k \in \Omega\; 0 \leqslant \mathcal{X}_k \leqslant 1$, there is an algorithm that returns an estimate for the mean $\tilde{\mu}$ such that
\begin{equation}
    \prob{\left|\tilde{\mu} - \expect{\RV{X}}\right| \leqslant \frac{\sqrt{\expect{\RV{X}}}}{n}} \geqslant 1 - \delta 
\end{equation}
\end{lemma}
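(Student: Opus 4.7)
The plan is to transcribe the Kothari--O'Connor reduction chain from their Problem 3, which is packaged here as Corollary~\ref{cry: connect_to_Ryan}, through their Problems 4, 5 and 6 to arrive at Problem 7, following the remark just before the lemma statement. The only substantive modifications compared to the original arguments are (i) inserting median-of-means amplifications (Theorem~\ref{thm: median}) at every intermediate stage so each success probability becomes $1 - \delta_k$, and (ii) allocating the $\delta_k$'s using the log log trick (Theorem~\ref{thm: log_log}) so the aggregate failure probability is at most $\delta$ without accumulating $\log\log$ overhead in the query count.

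The conceptual driver behind the whole chain is that for $\RV{X} \in [0,1]$,
\begin{equation}
\Var{\RV{X}} \leqslant \expect{\RV{X}^2} \leqslant \expect{\RV{X}} = p,
\end{equation}
so any upper bound $\bar p$ on $p$ is automatically a variance bound. Once such a $\bar p$ is available, feeding $\sigma_0 = \sqrt{\bar p}$ into Algorithm~\ref{alg: not_so_uni_esimator} returns an estimate with additive error $\sqrt{\bar p}/n$ in $O(n\log(1/\delta))$ quantum experiments, which matches the target tolerance $\sqrt{p}/n$ whenever $\bar p = O(p)$. In the boundary regime $p = O(1/n^2)$, where a multiplicatively tight upper bound is not economical, returning $\tilde{\mu} = 0$ (or more generally any $\tilde{\mu} \in [0, p]$) already satisfies $|\tilde{\mu} - p| \leqslant p \leqslant \sqrt{p}/n$, so that case is absorbed by a separate branch of the algorithm.

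To produce a suitable $\bar p$, I would use a short iterative refinement: start from the trivial bound $\bar p_0 = 1$, and at round $k$ apply Algorithm~\ref{alg: not_so_uni_esimator} with $\sigma_0 = \sqrt{\bar p_k}$ and number-of-trials parameter $n$ to obtain $\tilde{\mu}_k$, then set $\bar p_{k+1} = \tilde{\mu}_k + \sqrt{\bar p_k}/n$. The map $\bar p \mapsto p + \sqrt{\bar p}/n$ is a square-root contraction with fixed point $\bar p^{*} \asymp \max(p, 1/n^2)$, so the iteration reaches within a constant factor of its fixed point in $O(\log\log n)$ rounds, and the preceding branch-on-size decision determines whether the large-$p$ or small-$p$ output is returned. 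The main obstacle in the plan is the complexity bookkeeping: naively each round costs $O(n\log(1/\delta_k))$ and a direct sum carries a $\log\log n$ factor, so to collapse the total back to $O(n\log(1/\delta))$ one must schedule the $\delta_k$'s as in Theorem~\ref{thm: log_log} and verify that the geometric decay of $\sqrt{\bar p_k}$ makes the cost telescope to the final call. Once that accounting is in place, the rest of the chain (Problems 4--6 of Ref.~\cite{Kothari_2022}) is a routine transcription and the lemma follows.
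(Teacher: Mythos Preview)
Your high-level plan---walk the reduction chain in Ref.~\cite{Kothari_2022} from Problem~3 to Problem~7, inserting median boosts at each step---matches the paper, which in fact gives no proof beyond citing exactly that chain. But the specific iterative scheme you write down for producing $\bar p$ is \emph{not} the one in Ref.~\cite{Kothari_2022}, and as written it incurs a $\log\log n$ overhead that Theorem~\ref{thm: log_log} cannot remove.

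The gap is that you call Algorithm~\ref{alg: not_so_uni_esimator} with the \emph{same} trials parameter $n$ at every round, so each of your $T=O(\log\log n)$ rounds costs $\Theta(n\log(1/\delta_k))$ regardless of $\bar p_k$. No scheduling of the $\delta_k$'s can bring $\sum_k n\log(1/\delta_k)$ below $\Theta(nT)$. Theorem~\ref{thm: log_log} telescopes the sum $\sum_j (1/\varepsilon'_j)\log(1/\delta'_j)$ precisely because the factors $1/\varepsilon'_j$ grow geometrically and dominate; when the base cost per round is flat, its hypothesis fails. Your appeal to ``the geometric decay of $\sqrt{\bar p_k}$'' is the misstep: that decay governs the \emph{accuracy} at each round, not its \emph{query cost}, which you have pinned at $n$.

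The actual reduction in Ref.~\cite{Kothari_2022} lets the per-round budget scale with $\bar p_k$: at round $k$ one aims only for additive error $\Theta(\bar p_k)$, which with $\sigma_0=\sqrt{\bar p_k}$ requires $n_k=\Theta(1/\sqrt{\bar p_k})$ trials. Then $\bar p_{k+1}\le p+\bar p_k/2$ decays geometrically to $\Theta(p)$, the costs $n_k$ grow geometrically, and now Theorem~\ref{thm: log_log} applies verbatim to give total cost $O((1/\sqrt{p})\log(1/\delta))\le O(n\log(1/\delta))$ (capping the loop once $\bar p_k<1/n^2$ and outputting $0$ handles the small-$p$ branch, as you already noted). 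One final call with $\sigma_0=\Theta(\sqrt{p})$ and $n$ trials then delivers the $\sqrt{p}/n$ error.
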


Now following almost exact steps starting from Problem 8 in Ref.~\cite{Kothari_2022}, we can give an algorithm that computes the mean with no prior assumption on the random variable itself:
\begin{algorithm}
    \caption{\label{alg: final_multi_esimator}Multivariate Mean Value Estimator}
    \KwData{Access to Quantum Experiment of Random Variable $\vec{\RV{X}}$, number of trials $n \in \mathbb{N}^+$, confidence parameter $0 < \delta < 1$}
    \KwResult{A mean value estimate $\vec{\tilde{\mu}}$}
    $p \leftarrow \frac{25}{52C n^2}$, where $C$ is the parameter discussed in Quantum Estimation in Theorem~\ref{thm: quantile}\;
    $n' \leftarrow  2 \sqrt{\frac{52}{25}} n$\;
    Run Classical Multivariate Mean Value Estimator in Theorem~\ref{thm: classical_multi_estimator} for $\vec{\RV{X}}$ with parameters $(n, \delta)$ replaced with $\left(\ceil{25\left(1 + \sqrt{\frac{1}{\ln\frac{4}{\delta}}}\right)^2},\sigma_0, \frac{\delta}{4}\right)$, store as $\vec{\mu}'$\;  
    Run quantile estimation (Theorem~\ref{thm: quantile}) on $\normtwo{\vec{\RV{X}} - \vec{\mu}'}$ with parameters $(p, \delta)$ replaced with $\left(p, \frac{\delta}{4}\right)$, let result be $K$\;
    Let $\vec{\RV{Y}} = \truncate{\vec{\RV{X}} - \vec{\mu}'}{K}$, run algorithm discussed in Lemma~\ref{lemma: Ryan_useful_lemma} on random variable $\left(\frac{\normtwo{\vec{\RV{Y}}}}{K}\right)^2$ with parameters $(n,\delta)$ replaced with $\left(\frac{3}{\sqrt{p}}, \frac{\delta}{4}\right)$, let the result be $s'^2$ with $s' \geqslant 0$\;
    Run Algorithm~\ref{alg: not_so_multi_esimator} we just discussed on quantum experiment for $\vec{\RV{Y}}$ with parameters $(n, \sigma_0, \delta)$ replaced with $(n', \sqrt{\frac{3}{2}}Ks', \frac{\delta}{4})$, say result is $\vec{\mu}''$\;
    Output $\vec{\tilde{\mu}} = \vec{\mu}' + \vec{\mu}''$\;
\end{algorithm}
\begin{theorem}[Generic Multivariate Estimator Main Result]
\label{thm: generic_multi_main}
Algorithm~\ref{alg: final_multi_esimator} uses $1$ calls to $\mathcal{A}$ with parameters $(n, \delta)$ replaced with $\left(O(n), \Omega(\delta)\right)$, along with other procedures that costs 
\begin{itemize}
    \item $O\left(n \log \frac{1}{\delta}\right)$ in terms of quantum experiments for $\vec{\RV{X}}$ (with covariance matrix $\Sigma$)
    \item $O(1)$ in terms of quantum registers needed
\end{itemize}  
to return a mean value estimate with
\begin{equation}
    \prob{\norm{\vec{\tilde{\mu}} - \expect{\vec{\RV{X}}}}_\infty \leqslant \frac{\sqrt{\tr \Sigma}}{n}} \geqslant 1 - \delta
\end{equation}
which implies 
\begin{equation}
    \prob{\norm{\vec{\tilde{\mu}} - \expect{\vec{\RV{X}}}}_2 \leqslant \frac{\sqrt{d \tr \Sigma}}{n}} \geqslant 1 - \delta
\end{equation}
\end{theorem}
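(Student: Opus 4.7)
The plan is to verify that each subroutine in Algorithm~\ref{alg: final_multi_esimator} succeeds with probability at least $1 - \delta/4$, apply a union bound to conclude all four succeed simultaneously with probability at least $1 - \delta$, and then chain together the resulting deterministic bounds to reach the final accuracy. The cost analysis is then mostly bookkeeping: the classical estimator, the quantile estimation, and the invocation of Lemma~\ref{lemma: Ryan_useful_lemma} each fit within $O(n \log \tfrac{1}{\delta})$ quantum experiments on $\vec{\RV{X}}$ (using $p = \Theta(1/n^2)$ in Theorem~\ref{thm: quantile}, one gets $O(n)$ calls there), and they use only $O(1)$ quantum registers, while the single call to $\mathcal{A}$ is accounted for separately in the theorem statement.

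For correctness, I would first use Corollary~\ref{cry: classical_multi_estimator_relaxed} to argue that with probability at least $1 - \delta/4$, $\normtwo{\vec{\mu}' - \expect{\vec{\RV{X}}}}$ is at most a constant (say $\tfrac{1}{5}$) times $\sqrt{\tr \Sigma}$. Next, I would invoke Theorem~\ref{thm: quantile} on $\normtwo{\vec{\RV{X}} - \vec{\mu}'}$ to conclude that, with probability at least $1 - \delta/4$, $K$ satisfies $\prob{\normtwo{\vec{\RV{X}} - \vec{\mu}'} > K} \leqslant p$ and $K \leqslant Q(Cp)$ for the quantile function of $\normtwo{\vec{\RV{X}} - \vec{\mu}'}$. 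Markov's inequality applied to $\normtwo{\vec{\RV{X}} - \vec{\mu}'}^2$ gives a matching lower bound $Q(Cp) \leqslant \sqrt{\expect{\normtwo{\vec{\RV{X}} - \vec{\mu}'}^2}/(Cp)}$, so $K^2$ is bounded by a constant multiple of $n^2\,\expect{\normtwo{\vec{\RV{X}} - \vec{\mu}'}^2} = n^2(\tr \Sigma + \normtwo{\vec{\mu}' - \expect{\vec{\RV{X}}}}^2)$, which on the good event from step~1 is $O(n^2 \tr \Sigma)$.

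Then, since $\bigl(\normtwo{\vec{\RV{Y}}}/K\bigr)^2 \in [0,1]$, Lemma~\ref{lemma: Ryan_useful_lemma} applied with $n \sim 1/\sqrt{p}$ returns $s'^2$ satisfying $|s'^2 - s^2| \leqslant \sqrt{p}\, s$ where $s^2 = \expect{(\normtwo{\vec{\RV{Y}}}/K)^2}$, with probability at least $1 - \delta/4$. Using $s'^2 \geqslant s^2 - \sqrt{p}$ (plus the standard algebra from the Kothari reduction), one gets $\tfrac{3}{2} K^2 s'^2 \geqslant \tr \Sigma_Y$, where $\Sigma_Y$ is the covariance of $\vec{\RV{Y}}$; the factor $3/2$ absorbs the $\sqrt{p}$ slack. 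This lets me apply Algorithm~\ref{alg: not_so_multi_esimator} (via Theorem~\ref{thm: notso_multi_estimator}) to $\vec{\RV{Y}}$ with the variance bound $\sigma_0 = \sqrt{3/2}\,Ks'$, obtaining on a $1 - \delta/4$ event a vector $\vec{\mu}''$ with $\norm{\vec{\mu}'' - \expect{\vec{\RV{Y}}}}_\infty \leqslant \sqrt{3/2}\,Ks'/n'$.

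The last step, and the only delicate one, is to translate an accurate estimate of $\expect{\vec{\RV{Y}}}$ into an accurate estimate of $\expect{\vec{\RV{X}}}$ despite the truncation bias. Here I would use that $\expect{\vec{\RV{X}} - \vec{\mu}' - \vec{\RV{Y}}} = \expect{(\vec{\RV{X}} - \vec{\mu}')\,\bm{1}_{\normtwo{\vec{\RV{X}} - \vec{\mu}'} > K}}$, whose $\ell^\infty$ (indeed $\ell^2$) norm is bounded via Cauchy--Schwarz by $\sqrt{\expect{\normtwo{\vec{\RV{X}} - \vec{\mu}'}^2} \cdot \prob{\normtwo{\vec{\RV{X}} - \vec{\mu}'} > K}} \leqslant \sqrt{p}\,O(\sqrt{\tr \Sigma}) = O(\sqrt{\tr \Sigma}/n)$ on the good events. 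Combined with the bound on $\vec{\mu}''$ (which is also $O(\sqrt{\tr \Sigma}/n)$ since $K s' = O(n \sqrt{\tr \Sigma})\cdot O(1)$ and $n' = \Theta(n)$), the triangle inequality gives $\norm{\vec{\tilde{\mu}} - \expect{\vec{\RV{X}}}}_\infty \leqslant \sqrt{\tr \Sigma}/n$ up to constants absorbable by tightening the constants in the $n'$ and $p$ definitions. The $\ell^2$ consequence follows from $\normtwo{\vec v} \leqslant \sqrt{d}\,\norm{\vec v}_\infty$. The main obstacle I anticipate is carefully tracking constants so that the truncation bias, the quantile slack, the $s'^2$ estimation error, and the subroutine error all stay below the target $\sqrt{\tr \Sigma}/n$ after passing through $n' = 2\sqrt{52/25}\,n$; this is exactly the combinatorial care that the Kothari reduction performs in its Problem~8 step, which here we mirror with the vectorized replacements $\Var \to \tr \Sigma$, $|\cdot| \to \normtwo{\cdot}$.
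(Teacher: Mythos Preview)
Your plan matches the paper's proof essentially step for step: the same four-subroutine sequence (classical centering, quantile estimation, norm-squared estimation via Lemma~\ref{lemma: Ryan_useful_lemma}, then the call to $\mathcal{A}$ through Algorithm~\ref{alg: not_so_multi_esimator}), the same union bound over four $\delta/4$ events, and the same Cauchy--Schwarz control of the truncation bias.

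One small slip worth fixing: your justification ``$Ks' = O(n\sqrt{\tr\Sigma})\cdot O(1)$ and $n' = \Theta(n)$'' does not give $Ks'/n' = O(\sqrt{\tr\Sigma}/n)$, since bounding $K$ and $s'$ separately (with $s'\leqslant 1$) is too loose. The paper instead bounds the product directly: $K^2 s^2 = \expect{\normtwo{\vec{\RV{Y}}}^2} \leqslant \expect{\normtwo{\vec{\RV{Z}}}^2} = O(\tr\Sigma)$, and then uses the multiplicative closeness $s'^2 \leqslant \tfrac{4}{3}s^2$ to get $K^2 s'^2 = O(\tr\Sigma)$. That multiplicative bound in turn requires $s \geqslant \sqrt{p}$, which is where the \emph{lower} quantile guarantee $K \leqslant Q(Cp)$ enters---this is the ``standard algebra'' you defer to, but it is the one place you should make explicit rather than invoke by reference.
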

\begin{proof}
Assuming all calls succeed. Define $\RV{\vec{Z}} = \vec{\RV{X}}-\vec{\mu'}$. Then we set the parameters to the the initial call to classical mean value estimator such that:
\begin{equation}
    \label{eq: multi_estimator_proof_1}
    \normtwo{\expect{\vec{\RV{Z}}}} = \normtwo{\expect{\vec{\RV{X}}}-\vec{\mu'}} \leqslant \frac{\sqrt{\tr \Sigma}}{5}
\end{equation}
Then quantile estimation returns some $K$ with:
\begin{equation}
    p \leqslant \prob{\normtwo{\vec{\RV{Z}}} \geqslant K} \leqslant Cp
\end{equation}
where $C$ is the parameter discussed in Quantum Estimation in Theorem~\ref{thm: quantile}. Thus we know that:
\begin{equation}
    \expect{\normtwo{\vec{\RV{Y}}}^2} \geqslant K^2 \times \prob{\normtwo{\vec{\RV{Z}}} \geqslant K} \geqslant K^2p
\end{equation}
Thus $p \leqslant \frac{\expect{\normtwo{\vec{\RV{Y}}}^2}}{K^2}$. By choosing the parameter for algorithm in Lemma~\ref{lemma: Ryan_useful_lemma} to be $\frac{3}{\sqrt{p}}$, we have obtained:
\begin{equation}
\left|s'^2-\frac{\expect{\normtwo{\vec{\RV{Y}}}^2}}{K^2}\right| \leqslant \frac{1}{3K}\sqrt{p\expect{\normtwo{\vec{\RV{Y}}}^2}} \leqslant \frac{1}{3} \frac{\expect{\normtwo{\vec{\RV{Y}}}^2}}{K^2}
\end{equation}
In other words,
\begin{equation}
\label{eq: multi_estimator_proof_2}
    \left|K^2s'^2-\expect{\normtwo{\vec{\RV{Y}}}^2}\right| \leqslant \frac{1}{3} \expect{\normtwo{\vec{\RV{Y}}}^2}
\end{equation}
Thus we know that $\frac{3}{2} K^2 s'^2 \geqslant \expect{\normtwo{\vec{\RV{Y}}}^2}$ is guarenteed to be true. 

Let the covariance matrix of $\vec{\RV{Y}}$ to be $\Sigma'$ thus $\tr \Sigma' = \expect{\normtwo{\vec{\RV{Y}} - \expect{\vec{\RV{Y}}}}^2} \leqslant \expect{\normtwo{\vec{\RV{Y}}}^2} \leqslant  \frac{3}{2} K^2 s'^2$. So we know that Algorithm~\ref{alg: not_so_multi_esimator} is successful and Theorem~\ref{thm: notso_multi_estimator} applies, which means we find mean estimate $\mu''$ with 
\begin{equation}
    \norm{\vec{\mu}'' - \expect{\vec{\RV{Y}}}}_\infty \leqslant \frac{\sqrt{3 / 2} K s'}{ n'}
\end{equation}

Now we would like an upper bound in the form of $\frac{3}{2} K^2 s'^2 \leqslant O\left( \tr \Sigma\right)$. Eq.~(\ref{eq: multi_estimator_proof_2}) also implies that $K^2 s'^2 \leqslant \frac{4}{3} \expect{\normtwo{\vec{\RV{Y}}}^2}$ so $ \frac{3}{2} K^2 s'^2 \leqslant 2 \expect{\normtwo{\vec{\RV{Y}}}^2}$. At the start of proof we found Eq.~(\ref{eq: multi_estimator_proof_1}) which gives:
\begin{equation}
    \label{eq: multi_estimator_proof_3}
    \expect{\normtwo{\vec{\RV{Y}}}^2} \leqslant \expect{\normtwo{\vec{\RV{Z}}}^2} = \tr \Sigma + \normtwo{\expect{\vec{\RV{Z}}}}^2 \leqslant \frac{26}{25} \tr \Sigma 
\end{equation}
Because $\vec{\RV{Z}}$'s covariance matrix is also $\tr \Sigma$. So $\frac{3}{2} K^2 s'^2 \leqslant \frac{52}{25} \tr \Sigma$. Our pick of $n' \geqslant 2 \sqrt{\frac{52}{25}} n$ ensures that:
\begin{equation}
\norm{\vec{\mu}'' - \expect{\vec{\RV{Y}}}}_\infty \leqslant \frac{\sqrt{\tr \Sigma}}{ 2 n}
\end{equation}
Equivalently:
\begin{equation}
\label{eq: multi_estimator_proof_4}
\norm{\left(\vec{\mu}' + \vec{\mu}''\right) - \left(\expect{\vec{\RV{Y}}} + \vec{\mu}'\right)}_\infty \leqslant \frac{\sqrt{\tr \Sigma}}{ 2 n}
\end{equation}

Now we would like to bound the deviation between $\expect{\vec{\RV{Y}}} + \vec{\mu}'$ and $\expect{\vec{\RV{X}}}$. Remember that $\vec{\RV{Y}} = \truncate{\vec{\RV{Z}}}{K}$.By Cauchy Schwarz inequality, for all $\alpha \in [d]$ 
\begin{equation}
    \expect{\left|\left(\RV{Y}^\alpha + \mu'^\alpha\right) - \RV{X}^\alpha \right|} = \expect{\left|\RV{Y}^\alpha - \RV{Z}^{\alpha}\right|} \leqslant \sqrt{\expect{\left(\RV{Z}^\alpha\right)^2} \prob{\normtwo{\vec{\RV{Z}}}\geqslant K}} \leqslant \sqrt{\expect{\left(\RV{Z}^\alpha\right)^2} Cp} 
\end{equation}
Relaxing this bound, we obtain:
\begin{equation}
    \norm{\expect{\left(\vec{\RV{Y}} + \vec{\mu}'\right) - \vec{\RV{X}}}_\infty} \leqslant \sqrt{\expect{\normtwo{\vec{\RV{Z}}}^2} C p} \leqslant \sqrt{\frac{26}{25} Cp \tr \Sigma}
\end{equation}
where we have evoked part of Eq.~(\ref{eq: multi_estimator_proof_3}). By our pick of $p = \frac{25}{52C n^2}$, we obtain:
\begin{equation}
    \norm{\expect{\left(\vec{\RV{Y}} + \vec{\mu}'\right) - \vec{\RV{X}}}_\infty} \leqslant \frac{\sqrt{\tr \Sigma}}{ 2 n}
\end{equation}
Combined with Eq.~(\ref{eq: multi_estimator_proof_4}) we find:
\begin{equation}
\label{eq: multi_estimator_proof_5}
\norm{\left(\vec{\mu}' + \vec{\mu}''\right) - \expect{\vec{\RV{X}}}}_\infty \leqslant \frac{\sqrt{\tr \Sigma}}{ n}
\end{equation}
All the above calls to the algorithms succeed with probability at least $1- \delta$ by union bound, so Eq.~(\ref{eq: multi_estimator_proof_5}) happens with probability at least $1 - \delta$. Thus we have reached our conclusion.

Calling classical estimator costs $O\left(\log \frac{1}{\delta}\right)$, quantile estimation costs $O\left(\frac{\log{\frac{4}{\delta}}}{\sqrt{p}}\right) = O\left(n \log \frac{1}{\delta}\right)$. Run algorithm in Lemma~\ref{lemma: Ryan_useful_lemma} also costs $O\left(n \log\frac{1}{\delta} \right)$. Calling Algorithm~\ref{alg: not_so_multi_esimator} costs $O\left(\log \frac{1}{\delta}\right)$ along with a call to $\mathcal{A}$ with parameters $(n, \delta)$ replaced with $\left(O(n), \Omega(\delta)\right)$. In terms of memory, one quantum register suffices for the quantum experiment as we can always reuse it. Combined the total cost is $O\left(n \log \frac{1}{\delta}\right)$ in time and $O(1)$ in memory, along with whatever needed for a call to $\mathcal{A}$ with parameters $(n, \delta)$ replaced with $\left(O(n), \Omega(\delta)\right)$.
\end{proof}

Now let's simply substituite Algorithm~\ref{alg: con_sim_estimator} and \ref{alg: con_mer_estimator} for our placeholder algorithm $\mathcal{A}$ and we obtain the final theorems for the paper. 

\begin{theorem}[Main Result for Simple Estimator]
\label{thm: sim_main}
Given quantum experiments for a multivariate random variable $\vec{\RV{X}}$ with unknown covariance matrix $\Sigma$, there is an efficient algorithm that outputs a mean estimate $\vec{\tilde{\mu}}$ such that:
\begin{equation}
    \prob{\norm{\vec{\tilde{\mu}} - \expect{\vec{\RV{X}}}}_\infty \leqslant \frac{\sqrt{\tr \Sigma}}{n}} \geqslant 1 - \delta
\end{equation}
which implies 
\begin{equation}
    \prob{\norm{\vec{\tilde{\mu}} - \expect{\vec{\RV{X}}}}_2 \leqslant \frac{\sqrt{d \tr \Sigma}}{n}} \geqslant 1 - \delta
\end{equation}
The algorithm takes
\begin{itemize}
    \item $O\left(n d^\frac{1}{4} \log \frac{d}{\delta}\right)$ in terms accesses to the quantum experiment
    \item $O\left(1\right)$ in terms of quantum registers needed to hold quantum experiments.
\end{itemize}
\end{theorem}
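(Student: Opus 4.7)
The plan is to obtain Theorem \ref{thm: sim_main} as an immediate corollary of Theorem \ref{thm: generic_multi_main} by instantiating the placeholder algorithm $\mathcal{A}$ with Algorithm \ref{alg: con_sim_estimator} (the Constrained Simple Multivariate Mean Value Estimator). The section preceding the theorem statement has already set things up precisely for this substitution: the interface expected of $\mathcal{A}$ in Section \ref{sec: final_classical_reduction} — that given bounds $\sigma_0 \geqslant \sqrt{\tr \Sigma}$ and $\varepsilon_0 \geqslant \normtwo{\expect{\vec{\RV{X}}}}$ with $\varepsilon_0 \leqslant \frac{2}{15}\sqrt{10D}\sigma_0$, it returns $\vec{\tilde{\mu}}$ with $\norm{\vec{\tilde{\mu}} - \expect{\vec{\RV{X}}}}_\infty \leqslant \frac{\sigma_0}{n}$ with probability at least $1-\delta$ — is exactly the guarantee proved for Algorithm \ref{alg: con_sim_estimator} in Theorem \ref{thm: con_sim_estimator}.

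First I would verify this interface match explicitly (a bookkeeping check on the input/output signatures of Algorithm \ref{alg: con_sim_estimator}), then plug it into Algorithm \ref{alg: final_multi_esimator} and apply Theorem \ref{thm: generic_multi_main} directly. That immediately yields
\begin{equation}
    \prob{\norm{\vec{\tilde{\mu}} - \expect{\vec{\RV{X}}}}_\infty \leqslant \frac{\sqrt{\tr \Sigma}}{n}} \geqslant 1 - \delta,
\end{equation}
and the $\ell^2$ statement follows from the elementary bound $\normtwo{v} \leqslant \sqrt{d}\,\norm{v}_\infty$ for $v \in \mathbb{R}^d$.

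Next I would collect complexity. Theorem \ref{thm: generic_multi_main} contributes $O\!\left(n \log \frac{1}{\delta}\right)$ quantum experiments and $O(1)$ quantum registers for the classical reduction, plus a single invocation of $\mathcal{A}$ with parameters $(O(n), \Omega(\delta))$. For that single invocation, Theorem \ref{thm: con_sim_complexity} says Algorithm \ref{alg: con_sim_estimator} uses $O\!\left(n d^{1/4} \log \frac{d}{\delta}\right)$ accesses to the quantum experiment and $O(1)$ quantum registers (after the reparameterization $n \to O(n)$, $\delta \to \Omega(\delta)$, which does not change the big-$O$ expression). Adding the two contributions, the $\mathcal{A}$ call dominates in both metrics — $n d^{1/4}\log\frac{d}{\delta}$ absorbs $n\log\frac{1}{\delta}$, and $O(1) + O(1) = O(1)$ memory — giving precisely the claimed $O\!\left(nd^{1/4}\log \frac{d}{\delta}\right)$ experiments and $O(1)$ registers.

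There is no real technical obstacle: every quantitative ingredient has already been established. The only thing requiring care is the alignment of parameter conventions across Theorem \ref{thm: con_sim_estimator}, Theorem \ref{thm: generic_multi_main}, and the $\mathcal{A}$ signature (in particular, that the constant $\frac{2}{15}\sqrt{10D}$ used to bound $\varepsilon_0$ matches between the two places, and that the ``$O(n)$'' passed into $\mathcal{A}$ by the reduction still satisfies the precondition $\varepsilon_0 \leqslant \frac{2}{15}\sqrt{10D}\sigma_0$ after the preprocessing shift in Algorithm \ref{alg: not_so_multi_esimator}). Both are checked by direct inspection of the algorithms, so the entire proof reduces to two sentences of citation plus the complexity tally above.
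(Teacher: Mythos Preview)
Your proposal is correct and matches the paper's approach exactly: the paper's proof is the single line ``Follows directly from Theorem~\ref{thm: con_sim_estimator}, \ref{thm: con_sim_complexity}, and \ref{thm: generic_multi_main},'' which is precisely the instantiation of $\mathcal{A}$ by Algorithm~\ref{alg: con_sim_estimator} plus the complexity tally you described. Your additional bookkeeping on the parameter alignment and the absorption of the $O(n\log\frac{1}{\delta})$ term into $O(nd^{1/4}\log\frac{d}{\delta})$ is sound and merely makes explicit what the paper leaves implicit.
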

\begin{proof}
    Follows directly from Theorem~\ref{thm: con_sim_estimator}, \ref{thm: con_sim_complexity}, and \ref{thm: generic_multi_main}.
\end{proof}

\begin{theorem}[Main Result for Meticulous Estimator]
\label{thm: mer_main}
Given quantum experiments for a multivariate random variable $\vec{\RV{X}}$ with unknown covariance matrix $\Sigma$, there is an efficient algorithm that outputs a mean estimate $\vec{\tilde{\mu}}$ such that:
\begin{equation}
    \prob{\norm{\vec{\tilde{\mu}} - \expect{\vec{\RV{X}}}}_\infty \leqslant \frac{\sqrt{\tr \Sigma}}{n}} \geqslant 1 - \delta
\end{equation}
which implies 
\begin{equation}
    \prob{\norm{\vec{\tilde{\mu}} - \expect{\vec{\RV{X}}}}_2 \leqslant \frac{\sqrt{d \tr \Sigma}}{n}} \geqslant 1 - \delta
\end{equation}
The algorithm takes
\begin{itemize}
    \item $O\left(n \log \frac{d}{\delta}\right)$ in terms accesses to the quantum experiment
    \item $O\left(\log n \log \log n\right)$ in terms of quantum registers needed to hold quantum experiments.
\end{itemize}
\end{theorem}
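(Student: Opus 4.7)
The plan is to realize the claimed algorithm as the composition of two already-established modules: the generic classical reduction of Theorem~\ref{thm: generic_multi_main}, wrapped around the constrained meticulous estimator of Algorithm~\ref{alg: con_mer_estimator} playing the role of the abstract placeholder $\mathcal{A}$. Because Theorem~\ref{thm: generic_multi_main} was deliberately stated in terms of any $\mathcal{A}$ meeting the hypotheses listed just before Algorithm~\ref{alg: not_so_multi_esimator}, the only new work is to check that Algorithm~\ref{alg: con_mer_estimator} qualifies as such an $\mathcal{A}$ and then to add up the costs.

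For correctness, I would observe that the input hypotheses demanded of $\mathcal{A}$, namely $\tr\Sigma \leqslant \sigma_0^2$ and $\normtwo{\expect{\vec{\RV{X}}}} \leqslant \varepsilon_0 \leqslant \tfrac{2}{15}\sqrt{10D}\,\sigma_0$, are exactly the preconditions of Theorem~\ref{thm: con_mer_estimator}, and the guarantee that $\mathcal{A}$ returns $\vec{\tilde{\mu}}$ with $\norm{\vec{\tilde{\mu}} - \expect{\vec{\RV{X}}}}_\infty \leqslant \sigma_0/n$ with probability at least $1-\delta$ is its conclusion. Feeding this instantiation into Theorem~\ref{thm: generic_multi_main} yields directly the $\ell^\infty$ statement of the present theorem, and the $\ell^2$ corollary follows from the elementary inequality $\normtwo{\vec{v}} \leqslant \sqrt{d}\,\norm{\vec{v}}_\infty$.

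For the complexity, Theorem~\ref{thm: generic_multi_main} tells me the full pipeline consists of a single call to $\mathcal{A}$ with rescaled parameters $(O(n), \Omega(\delta))$, together with $O(n \log \tfrac{1}{\delta})$ extra quantum experiments and $O(1)$ extra registers used by the classical scaffolding (initial classical multivariate estimate, quantile estimation, and the univariate second-moment routine of Lemma~\ref{lemma: Ryan_useful_lemma}). Plugging Theorem~\ref{thm: con_mer_complexity} into that single call gives $O(n \log \tfrac{d}{\delta})$ experiments and $O(\log n \log\log n)$ entangled registers, both of which dominate the corresponding scaffolding costs. The rescaling $(n,\delta) \mapsto (O(n), \Omega(\delta))$ is harmless because $\log \tfrac{d}{\Omega(\delta)} = O(\log \tfrac{d}{\delta})$ and $\log(O(n))\log\log(O(n)) = O(\log n \log\log n)$, so the advertised totals follow.

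Honestly, there is no substantive obstacle---the whole proof is a one-line composition, which is exactly why the paper records it as three citations. The one sanity check worth stating explicitly is that the classical reduction does not secretly invoke $\mathcal{A}$ more than once (which would spoil the $\log\tfrac{d}{\delta}$ factor) and does not force the registers allocated inside $\mathcal{A}$ to coexist with those allocated by the reduction (which would spoil the $O(\log n \log \log n)$ memory bound); inspection of Algorithms~\ref{alg: final_multi_esimator} and \ref{alg: not_so_multi_esimator} confirms both, since $\mathcal{A}$ is called exactly once and its workspace can be freed before the surrounding classical steps execute.
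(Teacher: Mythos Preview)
Your proposal is correct and matches the paper's own proof, which is literally the one line ``Follows directly from Theorem~\ref{thm: con_mer_estimator}, \ref{thm: con_mer_complexity}, and \ref{thm: generic_multi_main}.'' Your added sanity checks (that $\mathcal{A}$ is invoked only once and that its workspace need not coexist with the reduction's) are reasonable elaborations the paper leaves implicit, but the structure of the argument is identical.
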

\begin{proof}
    Follows directly from Theorem~\ref{thm: con_mer_estimator}, \ref{thm: con_mer_complexity}, and \ref{thm: generic_multi_main}.
\end{proof}

\section{Discussion and Application}
\label{sec: discussion}

We have just completed mathematical derivations and found two separate multivaraite estimators. The Meticulous Estimator theoretically outperforms previous algorithms in Ref.~\cite{Cornelissen_2022}, eliminating almost all log factors. Meanwhile, the simple estimator offers a straight-forward solution in the regime where $\log n \gg d^\frac{1}{4}$ and is memory-efficient.  In this section we make detailed discussions about their implications and future prospects.

\subsection{Utilizing Quantum Monte Carlo Directly}

One of the noticeable feature of the entire logic chain presented in this work is the fact that it deviates from the original derivations of Ref.~\cite{Kothari_2022}. We proved a different, and arguably stronger property of the Grover operator in Theorem~\ref{thm: alpha_close} and start from here. The only part where the logic collides with Ref.~\cite{Kothari_2022} is the classical reduction for the univariate mean value estimator. Is this justified?

For the simple estimator which trades a quartic slowdown but for memory-efficiency, the answer is yes. Our property in Theorem~\ref{thm: alpha_close} shows closeness for the original values, as opposed to its counterpart in Ref.~\cite{Kothari_2022}, which only demonstrates closeness for the absolute values. This property allows us to bound the distance between states which is vital to allow for multidimensional phase estimation. 

For the merticulous estimator, the answer is mixed. In Sec.~\ref{sec: mer_estimator}, we practically provided a pipeline that takes in any univaraite estimator and spits out a multivariate estimator. The only restriction is that the univariate estimator must be a hybird (reversible) circuit, where we are not allowed to skip any part of the execution. In other words, we want the number of times we called the Grover operator in the phase estimation to be a constant, such that it is independent of each thread in the quantum parallelization in the final algorithm. The original work in Ref.\cite{Kothari_2022} does not have this property. Nevertheless, we might find ways to convert it into so by admitting some constant overhead. In that sense, the merticulous estimator can probably be constructed from Ref.~\cite{Kothari_2022} directly, with some more work. 

\subsection{The final $\log \frac{d}{\delta}$ factor}
\label{sec: log_1/2}
Even the meticulous estimator we found is still not quite optimal, as there is a $O(\log \frac{d}{\delta})$ factor in our complexity. It is probably more pleasant if we have instead something that goes like $O(\log \frac{1}{\delta})$, because then for a constant $\delta$ we would not suffer from an additional $O(\log d)$ for large $d$. Note that this factor essentially comes from the properties of multidimensional phase estimation in Algorithm~\ref{alg: phase_estimation}, as it output results with a constant success probability for each dimension separately. Coalescing these dimensions into the $\norm{}_\infty$ bound causes this extra factor which finds a way into our final result. 

To remove this factor, one may find ways to generalize the multidimensional phase estimation and potentially obtain something different. The nicest way is to think of the hypercubic lattice $G$ as the name suggests---a lattice in the real space. Then, multivariate phase estimation essentially tries to pinpoint the corresponding location of the unknown vector on the reciprocal lattice in momentum space. This allows us to generalize multidimension phase estimation expect $G$ doesn't have to be an orthogonal lattice. However, this still won't give us any progress.

To navigate the issue, one could define multivariate phase estimation except the coordinates in $G$ are somehow chosen from a sphere. However, this choice loses translational symmetry in the lattice, which leads to complications. At best, we might only be able to work with some form of spherical harmonics transform instead of Fourier transform, but the meaning of ``phase estimation'' remains unclear in this setting. Even if we can do such, there is no guarantee that we will work with something such as $\expval{\vec u, \vec{\RV{X}}}$ in the phase, which is a linear function of $\vec{\RV{X}}$. The linearity in the phase is important because we have $\expect{\expval{\vec u, \vec{\RV{X}}}} = \expval{\vec u, \expect{\vec{\RV{X}}}}$. If we relinquish such linearity it is unlikely that we get anything useful. For such reasons optimizing on this $\log \frac{d}{\delta}$ might be very difficult. In fact, we conjecture that our result is optimal for the $\norm{}_\infty$ norm. 

\subsection{Practical Considerations}

As discussed in Sec.~\ref{sec: primitives}, in practice, what we are presented is usually some program that generates a superposition of all possible situations representing all the possibilities of the value we are trying to measure, as we defined as ``complete quantum experiment'' in Definition~\ref{def: complete_quantum_exp}. In this paper we partitioned such a program into two parts, the synthesizer $\mathcal{P}$ which generates all the possibilities, and the quantum experiment which computes the value for each possibility. In practice, however, it may not be simple to create such partition. A program might create different possibilities along its execution and there is no clean distinction between these two stages. One easy solution is to note that we never truely define each instance $\ket{k}$ for $k \in \Omega$ means. So we can simply absorb each instance $\vec{\mathcal{X}}_k$ as a part of $\ket{k}$. In other words, given complete quantum experiment $V$ we set $U = I$ and $P = V$. That resolves the issue. 

By now it has been quite evident it really shouldn't matter how we partitioned the entire program, the complete quantum experiment whatsoever, as found in the following observation:
\begin{remark}
After connecting all the steps we discussed to build the entire mean value estimation algorithm, we can write it in a way that only involves $V = U \mathcal{P}$ instead of $U$ or $\mathcal{P}$ individually.
\label{remark: join}
\end{remark}
To be more exact, we can coalesce adjacent Grover operators in the phase estimation step,  such that our phase estimation algorithm always uses $U \mathcal{P}$ and its adjoint except in the beginning and the end. We then notice that we start with $\ket{\bm{1}} = \mathcal{P}\ket{0}$ so we insert $\mathcal{P}$ in the beginning and we can also insert $\mathcal{P^\dag}$ in the end since it doesn't affect our measurement outcome. Thus, the entire algorithm only utilizes $V = U \mathcal{P}$. 

Thus, we can see that the partition we introduced in the paper is arbitrary and does not matter whatsoever. We can, say, set $\mathcal{P} = I$ and $U = V$, then the phase estimation still returns the correct answer. 

\subsection{Generalizations}

In Sec.~\ref{sec: primitives} we briefly discussed some examples of the quantum experiment, such as QCD simulation or QML circuit, where a classical analogue is not necessary. The more interesting case, however, is when we are working with multiple observables and we are trying to estimate their expectation respect to some state $\ket{\psi}$ simultaneously. This has been explored in Ref.~\cite{Huggins_2022}, where the authors found a quadratic speedup up to polylog factors when the spectral norm of the observables $\norm{O}_j$ is bounded by some constant, or different constants. Is there an way that we can utilize the ideas developed in this paper to tackle this situation?

The answer is mixed. The key difference for general observables is that we have no knowledge of the eigenbasis for the observables and are therefore unable to efficiently perform the post-processing of the variables. For $\arctan$ function, we might be able do the transformations with block encoding and signal processing methods \cite{Dalzell_2023}. This would admit $\text{polylog}(n)$ factors in the algorithm. Furthermore, we are not aware of efficient methods to implement truncation in this situation. Without truncation, we can only do the bounded case, except the upper bound is no longer a constant but scales inversely with the true mean, similar to conditions in Theorem~\ref{thm: alpha_close}. If one could find a way to implement truncation and analyze the overhead to implement $\arctan$ transformation, then this slight potential advantage might be generalized to tackle the case of general observables.

\section{Conclusion}

In this paper, utilizing the idea from quantum Monte Carlo \cite{Kothari_2022}, we developed an efficient quantum multivariate mean value estimation algorithm that almost saturates the optimal bound discussed in Ref.~\cite{Cornelissen_2022}. We also discussed another memory-efficient estimator, which foundamentally originates from a stronger property of the Grover operator primitive compared to Quantum Monte Carlo. Compared to previous work \cite{Cornelissen_2022}, we eliminated (almost) all polylog factors that is feasible. We then discussed its applications and potential generalizations to the case of estimating expectation for generic observables.  

\section{Acknowledgement}

This problem was provided to me by Prof. Ryan O'Donnell at CMU who played a vital role during the development of this work. I thank him for valuable discussions, continued support, and important advice. 

\appendix

\section{Completing Theorem~\ref{thm: log_log}---the log-log trick}
\label{app: log_log}

In Theorem~\ref{thm: log_log} we discussed that we need a better assignment of parameters than those Ref.~\cite{Kothari_2022} to satisfy the additional constraints. Here we gave an explicit proof:
\begin{theorem}[Theorem~\ref{thm: log_log} Restated]
\label{thm: log_log_app}
Give some algorithm $\mathcal{A}$ depending on parameters $(\varepsilon, \delta)$, where $\varepsilon > 0$ and $0 < \delta < 1$, such that 
\begin{itemize}
    \item Algorithm always costs $O\left(\frac{1}{\varepsilon} \log \frac{1}{\delta} \right)$ by some measure of complexity and $O\left(\log \frac{1}{\delta}\right)$ by another;
    \item Algorithm ``succeed'' with probability at least $1 - \delta$ whenever it is called. ``succeed'' can be defined as some logical predicate.
\end{itemize}

Fix $\varepsilon, \delta$. Consider calling the algorithm $T$ times. Let the $j$-th time the algorithm to be called with $(\varepsilon'_j, \delta'_j)$. We fix $\varepsilon'_j$ such that $\varepsilon'_T = \varepsilon$ and $\varepsilon'_{j+1} \leqslant \frac{\varepsilon'_{j}}{R}$, where $R > 1$ is some fixed constant. By setting 
\begin{equation}
    \delta'_j = \frac{6}{\pi^2} \frac{1}{\left(T-j + 1\right)^2} \delta
\end{equation}
We can make sure that 
\begin{itemize}
    \item All calls to algorithm $\mathcal{A}$ succeed simultaneously with probability at least $1 - \delta$.
    \item The combined cost is $O\left(\frac{1}{\varepsilon} \log \frac{1}{\delta}\right)$ and $O\left(T \log \frac{T}{\delta}\right)$ by the two measures of complexity respectively. 
\end{itemize} 
\end{theorem}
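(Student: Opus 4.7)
The plan is to handle the two claims separately: first the simultaneous success guarantee via a union bound, then the two cost estimates via tail summation. The key structural facts are (i) the geometric decay $\varepsilon'_j \geq R^{T-j}\varepsilon$ inherited by induction from the hypothesis $\varepsilon'_{j+1} \leq \varepsilon'_j/R$, and (ii) the reindexing $k = T-j+1$, which turns both the probability sum and the cost sums into tail-indexed series in $k$ that are easy to bound.

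For the union bound I note that the probability that at least one call to $\mathcal{A}$ fails is at most $\sum_{j=1}^T \delta'_j = \frac{6\delta}{\pi^2}\sum_{k=1}^T k^{-2} \leq \delta$ by the Basel identity $\sum_{k\geq 1} k^{-2} = \pi^2/6$. For the first complexity measure the total cost is $O\!\left(\sum_{j=1}^T \frac{1}{\varepsilon'_j}\log\frac{1}{\delta'_j}\right)$; plugging in $1/\varepsilon'_j \leq R^{-(k-1)}/\varepsilon$ and $\log(1/\delta'_j) = \log(\pi^2/(6\delta)) + 2\log k$ turns this into
\begin{equation}
\frac{1}{\varepsilon}\sum_{k=1}^{T} R^{-(k-1)}\!\left(\log\frac{\pi^2}{6\delta} + 2\log k\right).
\end{equation}
Since $\sum_{k\geq 1}R^{-(k-1)}$ and $\sum_{k\geq 1}R^{-(k-1)}\log k$ are absolutely convergent constants depending only on $R$, the whole expression collapses to $O\!\left(\frac{1}{\varepsilon}\log\frac{1}{\delta}\right)$. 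For the second measure the total cost is $O\!\left(\sum_{j=1}^T\log\frac{1}{\delta'_j}\right) = O\!\left(T\log\frac{1}{\delta} + \log(T!)\right) = O\!\left(T\log\frac{T}{\delta}\right)$ using the elementary bound $\log(T!) \leq T\log T$.

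The delicate step is the first cost estimate: a naive concern is that the $2\log k$ piece of $\log(1/\delta'_j)$ might contribute an extra $\log T$ or, as in the constant-$\delta'_j$ variant discussed in Ref.~\cite{Kothari_2022}, an extra $\log\log(1/\varepsilon)$ factor. The polynomial schedule $\delta'_j \propto 1/(T-j+1)^2$ is tuned precisely so that, weighted against the exponentially small weight $R^{-(k-1)}$ supplied by the geometric decay of $1/\varepsilon'_j$, the $\log k$ overhead sums to a constant independent of $T$. This is what lets us simultaneously obtain the Basel-style convergent probability bound and the clean $\frac{1}{\varepsilon}\log\frac{1}{\delta}$ cost bound, with no log-log penalty and with a separate second-measure bound that scales only as $T\log(T/\delta)$.
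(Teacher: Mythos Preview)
Your proposal is correct and follows essentially the same route as the paper: union bound via the Basel identity, the reindexing $k=T-j+1$, and splitting the first cost sum into a geometric series times $\log(1/\delta)$ plus the convergent series $\sum_k R^{-(k-1)}\log k$. The only cosmetic difference is that the paper spells out the convergence of $\sum_k R^{-(k-1)}\log k$ via an integral test (computing it in terms of the exponential integral), whereas you simply assert it as an absolutely convergent constant depending on $R$; your assertion is justified and the extra computation in the paper is not needed for the argument.
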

\begin{proof}
First, by union bound we have
\begin{equation}
\prob{\text{any call fail}} \leqslant \sum_{j=1}^T \delta'_j = \frac{6}{\pi^2} \delta \sum_{j=1}^T \frac{1}{j^2} \leqslant \delta
\end{equation}
As $\sum_{j=1}^T \frac{1}{j^2} \leqslant \sum_{j=1}^\infty \frac{1}{j^2} \leqslant \zeta(2) = \frac{\pi^2}{6}$. So all calls succeed simultaneously with probability at least $1 - \delta$.

Then we try to bound the complexity. The first measure (in terms of big-$O$) gives:
\begin{equation}
\begin{aligned}
\sum_{j=1}^T \frac{1}{\varepsilon'_j} \ln \frac{1}{\delta'_j} & \leqslant \sum_{j=1}^T \frac{1}{R^{T-j} \varepsilon} \ln \left(\frac{\pi^2}{6} \left(T-j+1\right)^2 \frac{1}{\delta}\right) = \frac{1}{\varepsilon}\sum_{j=1}^T \frac{1}{R^{j-1}} \ln \left(\frac{\pi^2}{6} j^2 \frac{1}{\delta}\right) \\
& \leqslant \frac{1}{\varepsilon} \left( 
\sum_{j=1}^T \frac{1}{R^{j-1}} \ln \left(\frac{\pi^2}{6} \right)   + \sum_{j=1}^T \frac{2}{R^{j-1}} \ln j  + \sum_{j=1}^T \frac{1}{R^{j-1}} \ln \frac{1}{\delta} \right) \\
& \leqslant \frac{1}{\varepsilon} \left( 
\frac{R}{R-1} \ln \left(\frac{\pi^2}{6} \right)  + O(1) +  \frac{R}{R-1} \ln \frac{1}{\delta}\right) \\
& \leqslant O\left(\frac{1}{\varepsilon}\log \frac{1}{\delta}\right)
\end{aligned}
\end{equation}
It is trivial that $\sum_{j=1}^T \frac{2}{R^{j-1}} \ln j$ is a constant, as we can bound with $\sum_{j=1}^\infty \frac{2}{R^{j-1}} \ln j$ which converges. For the sake of completeness we will show the convergence explicitly. First, it is easy to see that starting from some $j = j_0$ the series must be monotonically decreasing. So we can use the integral test, where we note that:
\begin{equation}
\begin{aligned}
     \int_1^\infty \dd{x} \frac{2}{R^{x-1}} \ln x & = 2R \int_1^\infty  \dd{x} R^{- x} \ln x = \frac{2R}{\ln R} \int_1^\infty \dd{x} \frac{e^{-(\ln R) x}}{x} = \frac{2R}{\ln R} \int_{(\ln R)}^\infty \dd{x} \frac{e^{-x}}{x} 
     \\ & = \frac{2R}{\ln R} \left(-\operatorname{Ei}(-\ln R) \right) = \frac{2R}{\ln R} \Gamma(0,\ln R) \in O(1)
\end{aligned}
\end{equation}
Where we used integration by parts and substitution during the calculation. $\operatorname{Ei}$ is exponential integral function, and $\Gamma$ is the incomplete Gamma function. This completes that the total cost by the first measure of complexity we discussed is $O\left(\sum_{j=1}^T \frac{1}{\varepsilon'_j} \ln \frac{1}{\delta'_j}\right) = O\left(\frac{1}{\varepsilon}\log \frac{1}{\delta}\right)$.

For the other measure of complexity we find:
\begin{equation}
    \sum_{j=1}^T \ln \frac{1}{\delta'_j} = \sum_{j=1}^T \ln \left(\frac{\pi^2}{6} \left(T-j+1\right)^2 \frac{1}{\delta}\right) = \sum_{j=1}^T \ln \left(\frac{\pi^2}{6} j^2 \frac{1}{\delta}\right) =  
T \ln \left(\frac{\pi^2}{6} \right)   + \sum_{j=1}^T \ln j  + T \ln \frac{1}{\delta} \in O\left(T \log \frac{T}{\delta}\right)
\end{equation}
This completes the proof.
\end{proof}
As a comment, remember that the name ``log log'' trick comes from eliminating the extra $\log \log \frac{1}{\varepsilon}$ factor if we do not vary $\delta'_j$ for different calls. But here we see that if we consider both measures of complexity, what we are really doing is to trade the log-log factor in one measure of complexity to the other. (i.e, we will now suffer a log log factor for the other complexity, which will eventually leads to a log log factor in space complexity in terms of quantum registers).

\section{Continuation of Theorem~\ref{thm: spectrum_G}}
\label{app: eigen_continued}

In Theorem~\ref{thm: spectrum_G} we analyzed the spectrum of the Grover operator. However, in Sec.~\ref{sec: sectrum_Grover} we only proved the part of the theorem that is useful to our later analysis. Here we will complete the proof for the theorem. 
\begin{theorem}[Spectrum of Grover Operator, Restated]
\label{thm: spectrum_G_app}
The spectrum of the Grover operator $\mathcal{G}$ for a real univariate variable $\RV{\theta}$ contains all eigenvalues of the form $e^{i \alpha}$ where $-\pi < \alpha \leqslant \pi$ satisfies Eq.~(\ref{eq: alpha_eigen}) (assuming $\tanft{\theta_k - \alpha}$ do not blow up for any $k \in \Omega$). Corresponding to $\alpha$, the eigenvector, specified by $\ket{\RV{\psi}}$ where $\RV{\psi}$ is a complex random variable, satisfy Eq.~(\ref{eq: eigen_vecs}).

In the event that there are multiple outcomes in $\RV{\theta}$ that share the same value (mod $2\pi$). For each possible such value $\varphi \in (-\pi, \pi]$, let $S = \{k \in \Omega: \mathcal{\theta}_k = \varphi \pmod{2\pi}\}$. We find, in addition, an eigenvalue $e^{i\alpha}$ satisfying $\varphi - \alpha = \pi \pmod{2\pi}$, and eigenvectors $\ket{\RV{\psi}}$ (where $\RV{\psi}$ is a complex random variable) with 
\begin{equation}
\left\{
\begin{gathered}
    \psi_k = 0 \quad \forall k \not\in S \hfill \\
    \expect{\RV{\psi}} = 0 \hfill \\
\end{gathered}
\right.
\end{equation}
\end{theorem}
\begin{proof}
In the main text, we have simplified the equation for eigenvalues down to Eq.~(\ref{eq: simple_eigen_eq_2}), which we will restate here:
\begin{equation}
\label{eq: app_Grover_1}
\expect{\RV{\psi}} = \frac{1 + e ^ {i (\RV{\theta} - \alpha)}}{2} \RV{\psi}
\end{equation}
To complete our cases, now consider $\alpha$ such that there are some $k \in \Omega$ such that $e^{i \left(\theta_k - \alpha\right)} = -1$. Then we immediately finds $\expect{\RV{\psi}} = 0$. Let $A = \left\{ k \in \Omega: \; \theta_k-\alpha = \pi \pmod{2\pi} \right\}$. This means that $\psi_k = 0$ for all $k \not\in A$. Combined, that is:
\begin{equation}
\left\{
\begin{gathered}
    \psi_k = 0 \quad \forall k \not\in A \hfill \\
    \expect{\RV{\psi}} = 0 \hfill \\
\end{gathered}
\right.
\end{equation}
We can show that under the assumptions we made on $\alpha$, this equation is equivalent to Eq.~(\ref{eq: app_Grover_1}). Note that such equation has non-trivial solution if and only if $|A| \geqslant 2$. This means that we find multiple $\theta_k$ takes the same value ($\alpha + \pi$), mod $2\pi$, for all $k \in A$. Conversely, for each value $\varphi$ such that there are multiple $k \in \Omega$ with $\theta_k = \varphi$, setting $\alpha = \theta - \pi \pmod{2 \pi}$ immediately traces back to satisfy Eq.~(\ref{eq: app_Grover_1}) and shows corresponding eigenvalues and eigenkets. Combined with the main text, we complete the proof for Theorem~\ref{thm: spectrum_G}.

\end{proof}

\bibliography{references.bib}

\end{document}